\declaretheorem[name=Theorem,numberwithin=section]{theorem}
\declaretheorem[name=Lemma,numberwithin=section]{lemma}
\declaretheorem[name=Definition,numberwithin=section]{definition}
\newtheorem{example}[theorem]{Example}
\providecommand{\Description}[2][]{}
\newcommand{\md}{ \mathrm{MD} }
\newcommand{\stem}{ \mathrm{Stem} }
\newcommand{\ds}{\text{Down-Stem}}
\newcommand{\ex}{ \mathrm{ex} }
\DeclareMathOperator{\shell}{1-shell}
\newcommand{\rel}[1]{\stackrel{#1}{\sim}}
\title{Reducing Sensor Requirements by Relaxing the Network Metric Dimension}
\author{ \href{https://orcid.org/0009-0005-2812-865X}{\includegraphics[scale=0.06]{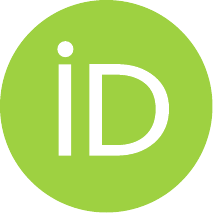}\hspace{1mm}Paula Mürmann*} \\
	EPFL, Switzerland\\
	\texttt{paula.murmann@epfl.ch} \\
	\And
	\href{https://orcid.org/0009-0003-9758-0573}{\includegraphics[scale=0.06]{orcid.pdf}\hspace{1mm}Robin Jaccard*}\\
	EPFL, Switzerland\\
	\texttt{robin.jaccard@epfl.ch} \\
    \And
	\href{https://orcid.org/0000-0001-6613-0615}{\includegraphics[scale=0.06]{orcid.pdf}\hspace{1mm}Maximilien Dreveton*}\\
	EPFL, Switzerland\\
	\texttt{maximilien.dreveton@epfl.ch} \\
    \And
	\href{https://orcid.org/0009-0005-3877-6992}{\includegraphics[scale=0.06]{orcid.pdf}\hspace{1mm}Aryan Alavi Razavi Ravari}\\
	Sharif University of Technology, Tehran, Iran\\
	\texttt{aryan.alaviarar@gmail.com} \\
    \And
	\href{https://orcid.org/0009-0001-8176-6854}{\includegraphics[scale=0.06]{orcid.pdf}\hspace{1mm}Patrick Thiran}\\
	EPFL, Switzerland\\
	\texttt{patrick.thiran@epfl.ch} \\
}
\begin{document}
\maketitle
\def\thefootnote{*}\footnotetext{Authors contributed equally to this research.}\def\thefootnote{\arabic{footnote}}

\begin{abstract}
	Source localization in graphs involves identifying the origin of a phenomenon or event, such as an epidemic outbreak or a misinformation source, by leveraging structural graph properties. One key concept in this context is the metric dimension, which quantifies the minimum number of strategically placed sensors needed to uniquely identify all vertices based on their distances. While powerful, the traditional metric dimension imposes a stringent requirement that every vertex must be uniquely identified, often necessitating a large number of sensors. 

In this work, we relax the metric dimension and allow vertices at a graph distance less than $k$ to share identical distance profiles relative to the sensors. This relaxation reduces the number of sensors needed while maintaining sufficient resolution for practical applications like source localization and network monitoring. We provide two main theoretical contributions: an analysis of the $k$-relaxed metric dimension in deterministic trees, revealing the interplay between structural properties and sensor placement, and an extension to random trees generated by branching processes, offering insights into stochastic settings. 

We also conduct numerical experiments across a variety of graph types, including random trees, random geometric graphs, and real-world networks. For graphs with loops, we use a greedy algorithm to obtain an upper-bound on the relaxed metric dimension. The results show that the relaxed metric dimension is significantly smaller than the traditional metric dimension. Furthermore, the number of vertices indistinguishable from any given target vertex always remains small. Finally, we propose and evaluate a two-step localization strategy that balances the trade-off between resolution and the number of sensors required. This strategy identifies an optimal relaxation level that minimizes the total number of sensors across both steps, providing a practical and efficient approach to source localization.
\end{abstract}

\keywords{metric dimension \and random graphs \and source localization}

\section{Introduction}

 Source localization in graphs is a critical problem with applications in diverse fields such as epidemiology, social network analysis, and cyber-security. It focuses on identifying the origin of a phenomenon or event, such as the initial vertex of an epidemic~\cite{spinelli2018many}, the originator of misinformation~\cite{shah2010rumors}, the source of a cyber-attack~\cite{10.1145/3485447.3512184}, the position of a robot moving on a graph~\cite{khuller1996landmarks}. Accurate source localization enables timely interventions, such as containment of disease, countering misinformation, or mitigating cyber-threats. It can also be used in deanonymization attacks of diffusion spreading protocols in Bitcoin P2P networks~\cite{fanti2017deanonymization}. However, the problem is challenging because of the complexity of graph structures, the dynamics of propagation, and the limited or noisy observation data. Addressing these challenges requires efficient methods that can exploit the underlying properties of graphs to achieve accurate localization.

 One such property is the \new{metric dimension} of a graph, which measures the minimum number of strategically chosen vertices, called sensors or landmarks, needed to uniquely identify all other vertices based on their distances to the sensors. This concept is directly applicable to source localization, as distances from a suspected source to these sensors uniquely determine the source. Graphs with a smaller metric dimension require fewer sensors, reducing the observation cost. By leveraging the metric dimension, systems can be designed to optimize the placement of sensors or observers, enabling accurate and cost-effective localization in various networks, from biological systems to communication infrastructures. 

 The metric dimension has been extensively studied across theoretical and applied domains in graph theory. Early research focused on characterizing the metric dimension for specific graph families, such as paths, grids, trees~\cite{slater_leaves_1975,khuller1996landmarks,Harary_1976,MD_for_tree}, and Cartesian products of graphs~\cite{caceres2007metric}, revealing how structural properties influence the number of required sensors. Over time, greedy algorithms have been developed to approximate the metric dimension for general graphs, addressing the NP-hard nature of the problem~\cite{khuller1996landmarks}. Recent advancements have explored the metric dimension on random graphs, such as \Erdos-\Renyi random graphs~\cite{bollobas2013metric}, random geometric graphs~\cite{lichev2023localization}, and random trees~\cite{komjathy_Odor_GW_MD,mitsche2015limiting}. 

However, the definition of metric dimension imposes the strong requirement that \textit{every} vertex must be uniquely identified based on its distances to the sensors. This stringent condition often leads to the need for a disproportionately high number of sensors. Moreover, in many real-world applications, perfect source identification may not be necessary, and approximate or partial identification can suffice. For instance, in source localization, identifying a small set of candidate vertices, rather than pinpointing the exact origin of a diffusion, may still enable effective interventions. Similarly, in network monitoring, distinguishing between groups of vertices with similar properties or roles can often achieve the desired outcomes. 
One classical example of this is the trade-off between passive and active surveillance in infectious disease detection \cite{Murray_Cohen_Encyclopedia_PubH, Tan_act_vs_pass_epi_surv}. Aiming to detect a pathogen outbreak as quickly as possible, public health organizations would preferably do targeted and direct testing of a population at risk. However this is often too resource intensive, which leads to the use of passive epidemic surveillance data. In contrast, passively acquired data may not provide as accurate a picture on current public health, but it is cheaper and more easily obtained.

 To reduce the number of sensors, we relax the metric dimension by allowing vertices within a graph distance less than $k$ from each other to share identical distance profiles with respect to the sensors. We call \new{$k$-relaxed metric dimension} the minimum number of sensors needed to satisfy this condition. This relaxation acknowledges that in many practical scenarios, distinguishing between closely located nodes is either unnecessary or infeasible because of data noise or computational constraints. By grouping vertices that are close to each other in the graph, we reduce the number of required sensors while still preserving sufficient resolution for tasks such as source localization or network monitoring. 

 Our work makes two main theoretical contributions. First, we study the relaxed metric dimension in deterministic trees, characterizing how the structural properties of trees influence the number and placement of sensors required under the relaxed conditions. Second, we extend this study to random trees generated by branching processes, providing insights into the behavior of the relaxed metric dimension in stochastic settings. By combining the deterministic and stochastic perspectives, we establish a comprehensive understanding of how deterministic and random tree topologies affect the trade-off between resolution and the number of sensors needed in the relaxed framework. 

 To complement the theoretical findings, we conduct numerical experiments to explore how the $k$-relaxed metric dimension compares to the traditional metric dimension across different synthetic and real graphs. Our numerical results indicate that the relaxed metric dimension is significantly smaller than the non-relaxed metric dimension for many graph classes, including random trees, random geometric graphs, and real-world networks. Interestingly, while the number of vertices that are not uniquely identified by the sensors can be large, the size of the largest equivalence class of vertices (\textit{i.e.}, the subset of vertices having identical distances to the sensors) remains small. In the context of source localization, this means that although a small number of sensors may no longer uniquely identify the source, they confine it to a small set of candidate vertices. 

 This observation opens the door to a two-step localization strategy. In the first step, sensors are placed to distinguish vertices at distances strictly larger than $k$. This results in a set of candidate vertices for the target. In the second step, additional sensors are deployed to uniquely identify the target within this candidate set. The critical question becomes: \textit{how much can this two-step approach reduce the total number of sensors compared to a one-step strategy where sensors uniquely identify all vertices outright?} Our experiments on both synthetic and real graphs reveal a clear trade-off. If the relaxation is too small, the first step requires too many sensors. Conversely, if the relaxation is too large, the first step fails to distinguish enough vertex pairs, and the second step becomes excessively demanding. Between these extremes, we identify an optimal value of the relaxation parameter $k$ such that the two-step approach optimally balances the number of sensors needed in each phase. 

 Finally, while the proofs of our main results are lengthy and sometimes tedious, their underlying intuition is both simple and enlightening. To conclude the introduction, we provide an overview of the key theoretical results with a high-level summary of the proof techniques employed.

\paragraph{Overview of the results on deterministic trees}
The metric dimension of a tree is often large because distinguishing the leaves is difficult. Consider the tree $T$ given in Figure~\ref{fig:Tree_Stem_Ex1}. Notice that the two leaves $v_1$ and $v_2$ are equidistant from any other vertex $v' \notin \{v_1,v_2\}$. Consequently, a sensor must be placed at either $v_1$ or $v_2$ in order to distinguish $v_1$ from $v_2$. More generally, for each exterior major vertex~$v$--defined as a vertex of degree at least~$3$ and adjacent to one or more leaf paths\footnote{A leaf path is a path of degree two vertices to a degree one vertex. We refer to Section~\ref{subsection:majorVertices} for examples.}--sensors must be placed at all but one of the leaves adjacent to $v$. 
This is the intuition leading to the following result from~\cite{slater_leaves_1975}. Consider a tree $T$. If $T$ is a line graph (a path), then its metric dimension is $1$, and is achieved by placing one sensor at the extremity of the path; otherwise, its metric dimension $\md(T)$ is given by 
\begin{equation}
\label{eq:formula_md_tree}
    \md(T) = \sigma(T) - \ex(T), 
\end{equation} 
where $\sigma(T)$ is the number of leaves in $T$ and $\ex(T)$ is the number of exterior major vertices in~$T$.

\begin{figure}[!ht]
    \centering
    \begin{subfigure}[b]{0.30\textwidth}
        \centering
        \includegraphics[width=\textwidth]{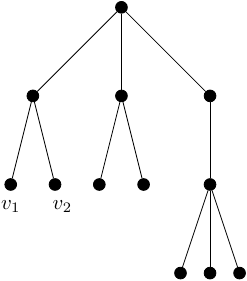}
        \caption{Before stemming}
        \label{fig:Tree_Stem_Ex1}
    \end{subfigure}
    \hfil
    \begin{subfigure}[b]{0.30\textwidth}
        \centering
        \includegraphics[width=\textwidth]{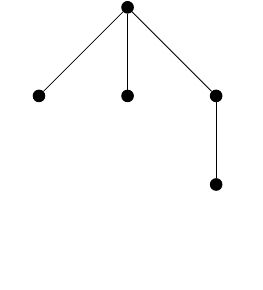}
        \caption{After stemming}
        \label{fig:Tree_Stem_Ex2}
    \end{subfigure}
    \caption{Example of a tree before and after stemming. 
    }
    \Description[Example of a tree before and after stemming]{The left figure shows a tree $T$, and the right figure show the stem $\stem(T)$ of this tree, where the leaves have been removed.} 
    \label{fig:Tree_Stem_Ex}
\end{figure}

Now, suppose we relax the metric dimension so that only pairs of vertices that are at a distance strictly larger than 1 need to be distinguished. Since two leaves in a tree are always at a distance larger than 1, this relaxation does not reduce the number of sensors required to distinguish the leaves, and is therefore helpless. This observation leads to a result on the relationship of odd to even distance relaxations:
\begin{align*}
 \md_{2r}(T) \weq \md_{2r+1}(T),
\end{align*}
where $\md_{k}(T)$ denotes the $k$-relaxed metric dimension of $T$. We refer to Theorem~\ref{thm:odd_dist} for a formal statement. 

Next, suppose we relax the requirement further so that only pairs of vertices at a distance strictly larger than 2 need to be distinguished. Consider again the tree $T$ from Figure~\ref{fig:Tree_Stem_Ex1}. The leaves $v_1$ and~$v_2$, being at distance 2 from each other, do no longer need to be distinguished. The same applies to any pair of leaves attached to the same exterior major vertex. As a result, we can delete (stem) all the leaves of $T$, resulting in the smaller tree $T'$ shown in Figure~\ref{fig:Tree_Stem_Ex2}. We call \new{stemming} the process of deleting leaves, and in Theorem~\ref{thm:k_metric_dim_tree}, we prove that
\begin{align*}
 \md_{2r}(T) \weq \md(\stem_r(T)),
\end{align*}
where $\stem_r(T)$ is the tree obtained by applying the stemming process iteratively $r$ times, starting from $T$. Because $\stem_r(T)$ is also a tree, its metric dimension can be computed using formula~\eqref{eq:formula_md_tree}. 

 \paragraph{Overview of the main results on random trees}

 A \new{critical Galton-Watson tree} is a random tree generated by a branching process where the expected number of offspring per individual is exactly~1. This ensures that the tree neither grows explosively nor dies out too rapidly, striking a balance that produces structures of interest in many applications. The critical Galton-Watson tree is particularly useful in modeling processes such as population dynamics, genealogies, and network structures in cases where growth is self-sustaining but not super-linear. 

 Conditioned on having $n$ vertices, a critical Galton-Watson tree with a Poisson offspring distribution has the same distribution as a uniform random tree (which is a tree selected uniformly at random among the $n^{n-2}$ labeled trees with $n$ vertices). Moreover, a conditioned Galton–Watson tree with an offspring distribution~$\xi'$ verifying $\E[\xi'] \ne 1$ is, in most cases, equivalent to a conditioned Galton–Watson tree with another offspring distribution~$\xi$ satisfying $\E [\xi] = 1$ (we refer to~\cite[Section~4]{janson2012simply} to make this equivalence precise).  This makes the conditioned Galton-Watson tree a natural and analytically convenient framework for studying properties of random trees.

The metric dimension of Galton-Watson trees can be expressed as the difference between the number of leaves and the number of exterior major vertices, using Equation~\eqref{eq:formula_md_tree}. Thus, deriving an asymptotic expression for the metric dimension of such trees reduces to answering two key questions: how many leaves and how many exterior major vertices does a Galton-Watson tree have? At first glance, these two questions may seem unrelated, appearing to require distinct analytical approaches. However, this is not the case. Let $v$ be a vertex chosen uniformly at random from a rooted random tree $T$. The \textit{fringe tree} $T_v$ is the subtree of $T$ rooted at $v$ that includes all its descendants. Because both $T$ and $v$ are random, the fringe tree $T_v$ is itself a random tree.\footnote{We describe here the \textit{quenched} version, where we first fix a realization of the random tree $T$ and then choose $v$ uniformly at random from its vertices. We refer the reader to~\cite[Remark~1.2]{janson_fringe_tree_2016} for additional details.} 

Interestingly, the proportion of vertices that are leaves is equal to the probability that the fringe tree consists of a single vertex. Similarly, the proportion of vertices that are exterior major vertices can be related to the probability that the root of the fringe tree has degree at least two and at least one of its subtrees is a line graph to a leaf.  More generally, many seemingly distinct properties of a random tree can in fact be analyzed through the distribution of fringe trees. This distribution provides valuable insights into the typical shapes and sizes of subtrees, making it a powerful tool for understanding properties of random trees.  We refer the reader to \cite{aldous_fringe_tree_1991} for general results on fringe tree distributions and focus here on the specific case of conditioned Galton-Watson trees. For a critical Galton-Watson tree conditioned to have $n$ vertices, the fringe tree distribution converges in distribution to an \textit{unconditioned} Galton-Watson tree with the \textit{same} offspring distribution. A proof of this convergence with the additional assumption that the offspring distribution has a bounded second moment can be found in~\cite[Lemma~9]{aldous_fringe_tree_1991}. For a more general proof without this assumption, we refer to \cite[Theorem~1.3]{janson_fringe_tree_2016}. 

 Using fringe tree distributions, \cite{komjathy_Odor_GW_MD} computes the number of leaves and the number of exterior major vertices of random trees and establishes a law of large number for the metric dimension of a broad class of random trees. 
 
 Obtaining an analogous result for the $2r$-relaxed metric dimension requires us to analyze the stem of fringe trees rather than the fringe trees themselves, introducing additional complexity. These efforts culminate in one of our main results, Theorem~\ref{thm:rlaxed_md_random_trees}, which provides a detailed characterization of the relaxed metric dimension in random trees. 

\paragraph{Main notations} Throughout the paper, $G = (V,E)$ denotes a (unweighted, undirected) graph with $n = |V|$ vertices, and $T$ denotes a tree. For two vertices $u$ and $v$ belonging to a graph $G$, we denote $d_G(u,v)$ (and simply $d(u,v)$ when no confusion is possible) the graph distance between $u$ and $v$. the $n$-by-1 vector whose entries are all equal to 1 is denoted by $\mathbf{1}_n$. Finally, we use the convention that $0^0=1$. 
For convenience, we provide in Appendix~\ref{appendix:table_notations} a table of notations.

\paragraph{Code availability} The code to reproduce the simulations is available at \url{https://github.com/mdreveton/metric-dimension-relaxed/} 

\paragraph{Paper Organization} The paper is organized as follows. Section~\ref{section:definitions} introduces the key definitions and notations. In Section~\ref{section:theory}, we present the theoretical results concerning the relaxed metric dimension of arbitrary trees and Galton-Watson trees. Numerical results are provided in Section~\ref{section:experiments}. The proof techniques required to establish the theoretical results are discussed in Section~\ref{section:proof_overview}. Finally, Section~\ref{section:conclusion} concludes the paper.

\section{Definitions and Related Contents}
\label{section:definitions}

\subsection{Relaxed Metric Dimensions}

Let $G = (V,E)$ be an undirected, connected graph. Given two vertices $u, v \in V$, we denote by $d(u,v)$ the number of edges on a shortest path from $u$ to $v$. For an ordered subset of vertices $S = ( s_1, \cdots, s_{|S|} ) \subset V$, we denote by 
\[ 
\Phi_G(u, S) \weq \left( d(u,s_1), \cdots, d(u,s_{|S|} ) \right) 
\]
the vector whose entries are the lengths of the shortest paths between a vertex $u$ and all the vertices in $S$. 
We call $\Phi_G(u, S)$ the \new{identification vector} of $u$ with respect to~$S$ in $G$. See Figure~\ref{fig:ex_rs_a} for an example.
We say that vertices $u$ and $v$ are \new{distinguished} by $S$ if $\Phi_G(u, S) \ne \Phi_G(v, S)$.

\begin{definition}
    A subset $S \subseteq V$ is called a \new{resolving set} if for all pairs of vertices $u \neq v \in V$, $u$ and $v$ are distinguished by $S$, \textit{i.e.}, for all $ u, v  \in V$, we have 
    \begin{equation*}
        \Phi_G(u, S) = \Phi_G(v, S) \implies u = v.
    \end{equation*}    
    The \new{metric dimension} is defined as the the minimum cardinality of a resolving set, denoted by $\md(G)$.
\end{definition}

We consider a more lenient variant of the metric dimension where vertices that are close to one another do not need to be distinguished. 

\begin{definition}
    A subset $S \subseteq V$ is called a \new{$k$-relaxed resolving set} if for all $ u, v  \in V$, we have 
    \begin{equation*}
     \Phi_G(u, S) = \Phi_G(v, S) \implies d(u,v) \leq k.
    \end{equation*}
    Moreover, the \new{$k$-relaxed metric dimension} is defined as the the minimum cardinality of a $k$-relaxed resolving set, denoted by $\md_k(G)$.
\end{definition}
Observe that when $k=0$, we recover the definition of a resolving set (as $d(u,v) \le 0$ is equivalent to $u=v$). 
In particular, a resolving set is a 0-relaxed resolving set, and $\md(G) = \md_0(G)$. We show in Figure~\ref{fig:ex_rs} an example of a $0$-relaxed resolving set and of a $2$-relaxed resolving set of minimum cardinality on a toy graph. 

\begin{figure}[!ht]
    \centering
    \begin{subfigure}{0.40\textwidth}
    \centering
        \includegraphics[width=0.7\linewidth]{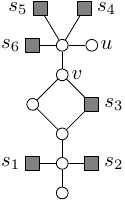}
        \caption{0-relaxed resolving set $S = \{s_1, ...,s_6\}$\newline
         and the identification vector for example of vertex $u$ $\Phi_G(u,S) = (6,6, 3, 2, 2, 2 )$.}
    \label{fig:ex_rs_a}
    \end{subfigure}
    \hfil
    \begin{subfigure}{0.40\textwidth}
    \centering
        \includegraphics[width=0.7\linewidth]{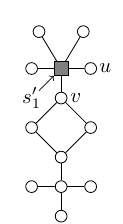}
        \caption{2-relaxed resolving set $S_{k=2} = \{s_1'\}$ \newline
         and the identification vector for example of vertex $u$ $\Phi_G(u,S_{k=2}) = (1)$.}
    \label{fig:ex_rs_b}
    \end{subfigure}
    \caption{Example of a graph with a 0-relaxed resolving set and 2-relaxed resolving set of minimum cardinality. (The resolving set is denoted by gray squares.)}
    \Description[0-relaxed resolving set and 2-relaxed resolving set of a graph.]{Example of a graph with a 0-relaxed resolving set and 2-relaxed resolving set of minimum cardinality. }
    \label{fig:ex_rs}
\end{figure}

\subsection{Major Vertices and Leaf Paths}
\label{subsection:majorVertices}

We now recall the notion of exterior major vertex. We call \new{major vertex} of $G$ a vertex of degree at least 3 in a graph $G$. We define a \new{leaf path} $G_L$ as the path between a leaf (a vertex of degree 1) and its closest major vertex.  A major vertex $v$ of $G$ is an \new{exterior major vertex} if it has at least one adjacent leaf path. See  Figure \ref{fig:MX_example} for an example. Additionally, we define $\sigma_k(G)$ as the sum of the leaves in $\stem_k(G)$, and $\ex_k(G)$ as the number of exterior major vertices in $\stem_k(G)$.

\begin{figure}[!ht]
    \centering
    \includegraphics[width=0.5\linewidth]{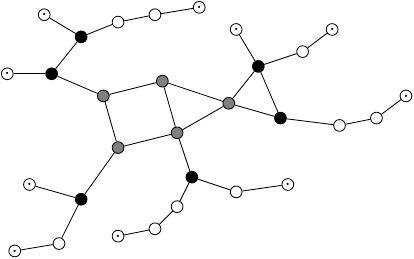}
    \caption{Example of a graph $G$. The leaves of $G$ are denoted by a dot $\odot$. The major vertices are the vertices highlighted in {\color{gray}{gray}} and \textbf{black}. Moreover, vertices in \textbf{black} are exterior major vertices (vertices in {\color{gray}{gray}} are the major vertices which are not exterior major vertices).}
    \Description[Example graph showing leaves, major vertices and exterior major vertices]{A general graph with marked leaves (degree 1 vertices), major vertices (vertices with degree larger or equal to 3) and a subset of major vertices that are also exterior (they also have an adjacent leaf path).}
    \label{fig:MX_example}
\end{figure}

\subsection{Stemming a Graph}

Trees typically have a large metric dimension because all leaves must be uniquely distinguished. As discussed in the previous section, this requires placing all but one leaf per exterior major vertex into the resolving set. However, when considering a $2r$-relaxed metric dimension, it is no longer necessary to distinguish leaves that share a common ancestor within $r$ hops. This observation can be formalized as a simple iterative pruning process, which we call \new{stemming}. 

\begin{definition}[$r$-Stem]
 We define $\stem(G)$, the \new{stem} of a graph $G$, as the unique subgraph of $G$ induced by all vertices of degree strictly larger than $1$. We further denote by $\stem_r(G)$ the $r$-fold application of the stemming operation on $G$. By convention, $\stem_0(G) = G$. 
\end{definition}

In other words, the stem of a graph $G$ is the subgraph $\stem(G) \subseteq G$ in which all the vertices of degree~$0$ and $1$ are removed. In particular, stemming a tree is equivalent to removing all its leaves.

\section{Relaxed Metric Dimensions of Trees}
\label{section:theory}

We provide in Section~\ref{subsection:deterministicTrees} the expression for the relaxed metric dimension of arbitrary trees. Next, we derive, in Section~\ref{subsection:randomTrees}, the asymptotic expression for the relaxed metric dimension for a class of random trees.

\subsection{Relaxed Metric Dimensions of Arbitrary Trees}
\label{subsection:deterministicTrees}

 We first show that the $2r$ and $2r+1$ relaxed metric dimensions of a tree are equal. Decreasing the relaxation parameter $k$ increases the relaxed metric dimension because it requires resolving more vertices. Hence, the inequality $\md_{2r+1}(G) \le \md_{2r}(G)$ holds for any graph $G$. To establish the reverse inequality in the case of trees, consider a set of sensors $S \subset V$ and a vertex $u \in V \setminus S$. When transitioning from a $(2r+1)$-relaxation to a $(2r)$-relaxation, $u$ must now be distinguished from all vertices at a distance $2r+1$ from itself. However, in a tree, any vertex $v\in S$ inherently distinguishes between two vertices at odd distances. Therefore, we can reduce the relaxation parameter from $2r+1$ to $2r$ without adding more vertices to the resolving set. This leads to the following theorem, whose proof is provided in Section~\ref{sec:prf_odd_dist}.

\begin{theorem} 
\label{thm:odd_dist}
Let $T$ be a tree and $r$ a non-negative integer such that $2r+1 < \diam(T)$. We have $\md_{2r}(T) = \md_{2r+1}(T)$. 
\end{theorem}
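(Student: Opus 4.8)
The plan is to establish the only nontrivial direction, namely $\md_{2r}(T) \le \md_{2r+1}(T)$, by showing that any $(2r+1)$-relaxed resolving set $S$ is automatically a $(2r)$-relaxed resolving set. Since the two relaxations differ only in whether pairs at distance exactly $2r+1$ must be distinguished, it suffices to prove: if $u, v \in V(T)$ satisfy $d(u,v) = 2r+1$, then $\Phi_T(u,S) \ne \Phi_T(v,S)$ for \emph{any} nonempty $S$ — in fact for any single sensor $s \in V(T)$. In other words, in a tree two vertices at odd distance can never have the same distance to a common vertex.

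The key step is the parity argument. Fix $u, v$ with $d(u,v)$ odd and fix any vertex $s$. Because $T$ is a tree, there is a unique path from $s$ to $u$ and a unique path from $s$ to $v$; let $w$ be the vertex at which these two paths first diverge (the ``meeting point'' of $s$, $u$, $v$), so that $d(s,u) = d(s,w) + d(w,u)$ and $d(s,v) = d(s,w) + d(w,v)$, while the unique $u$--$v$ path passes through $w$ and gives $d(u,v) = d(w,u) + d(w,v)$. Subtracting, $d(s,u) - d(s,v) = d(w,u) - d(w,v)$, and this has the same parity as $d(w,u) + d(w,v) = d(u,v)$, which is odd. Hence $d(s,u) - d(s,v)$ is odd, so in particular $d(s,u) \ne d(s,v)$, and the single entry of the identification vector corresponding to $s$ already separates $u$ and $v$. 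This shows $\Phi_T(u,S) \ne \Phi_T(v,S)$ for any $S \ne \emptyset$; and $S = \emptyset$ is excluded in the relevant regime since $2r+1 < \diam(T)$ forces any $(2r+1)$-relaxed resolving set to be nonempty (two vertices realizing the diameter are at distance $> 2r+1$, hence must be distinguished).

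Combining, every $(2r+1)$-relaxed resolving set is a $(2r)$-relaxed resolving set, so $\md_{2r}(T) \le \md_{2r+1}(T)$; the reverse inequality $\md_{2r+1}(T) \le \md_{2r}(T)$ is immediate because a $(2r)$-relaxed resolving set is \emph{a fortiori} a $(2r+1)$-relaxed resolving set (the implication $\Phi_T(u,S)=\Phi_T(v,S) \Rightarrow d(u,v)\le 2r$ entails $\Rightarrow d(u,v)\le 2r+1$). This yields $\md_{2r}(T) = \md_{2r+1}(T)$.

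I do not anticipate a serious obstacle; the one point requiring a little care is the bookkeeping around the branch vertex $w$ when $s$, $u$, $v$ are not in ``general position'' — e.g.\ when $s$ lies on the $u$--$v$ path, or when $w$ coincides with $u$, $v$, or $s$. All of these degenerate cases still satisfy $d(s,u) = d(s,w) + d(w,u)$ and $d(s,v) = d(s,w) + d(w,v)$ (with some of the summands possibly zero), so the parity computation goes through unchanged; it is worth stating the meeting-point identity as a small lemma or observation and checking these cases explicitly so the argument is airtight.
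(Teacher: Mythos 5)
Your proof is correct and follows essentially the same route as the paper's: both directions are handled identically, and the key step — that in a tree any single vertex $s$ separates two vertices at odd distance, via the meeting point $w$ and the parity identity $d(s,u)-d(s,v)=d(w,u)-d(w,v)\equiv d(u,v)\pmod 2$ — is exactly the paper's argument. Your explicit treatment of the $S=\emptyset$ case (ruled out by $2r+1<\diam(T)$) is a small point of care the paper leaves implicit, but it does not change the approach.
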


 Let us now turn to the key finding that links the relaxed metric dimension on trees to the ordinary metric dimension of the stemmed tree. As discussed in the introduction, the ordinary metric dimension of a tree can be described in terms of the leaves of the tree and its exterior major vertices, namely, if $T$ is not a line graph,
 \begin{align}\label{eq:md_tree}
  \md(T) \weq \sigma(T) - \ex(T).
 \end{align}
 (Note that $\md(T) = 1$ iff $T$ is a line graph). 
 We wish to express the $2r$-relaxed metric dimension of a tree $T$ in terms of the ordinary metric dimension of the $k$-Stem of $T$. Hence, we introduce the notations $\sigma_r(T)$ and $\ex_r(T)$ as the number of leaves and exterior major vertices in $\stem_r(T)$, respectively. 

\begin{theorem}\label{thm:k_metric_dim_tree} 
 Let $T$ be a tree and $r$ a nonnegative integer such that $2r < \diam(T)$. Then 
 \begin{align*}
   \md_{2r}(T) \weq \begin{cases}
       1  &\text{if } \stem_r(T) \text{ is a line graph,}\\
       \sigma_r(T) - \ex_r(T) & \text{otherwise}.
   \end{cases}
 \end{align*}
\end{theorem}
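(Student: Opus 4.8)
The plan is to reduce Theorem~\ref{thm:k_metric_dim_tree} to the case $r=1$ and then iterate, using Theorem~\ref{thm:odd_dist} to pass freely between even and odd relaxation parameters. Concretely, the first step is to prove the single-stem identity
\begin{align*}
 \md_2(T) \weq \begin{cases} 1 & \text{if } \stem(T) \text{ is a line graph,}\\ \md(\stem(T)) & \text{otherwise,}\end{cases}
\end{align*}
and then apply it inductively: $\md_{2r}(T) = \md_2(\stem_{r-1}(T))$ would follow if I can show $\md_{2r}(T) = \md_{2(r-1)}(\stem(T))$, after which the induction hypothesis gives $\md_{2(r-1)}(\stem(T)) = \md(\stem_{r-1}(\stem(T))) = \md(\stem_r(T))$ (treating the line-graph case separately, noting that if $\stem_j(T)$ is ever a line graph then so are all later stems and the diameter hypothesis controls when the recursion is still meaningful). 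So the real content is two claims: (a) stemming once and halving the relaxation from $2r$ to $2r-2$ leaves the relaxed metric dimension unchanged, and (b) the base identity relating $\md_2(T)$ to $\md(\stem(T))$.

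For claim (b), which I expect to be the main obstacle, I would argue both inequalities directly in terms of resolving sets. For the easy direction, $\md_2(T) \le \md(\stem(T))$: take an optimal resolving set $S'$ of $\stem(T)$; I claim it is (after possibly nudging sensors sitting on vertices that got deleted — but $S' \subseteq V(\stem(T)) \subseteq V(T)$, so no nudging is needed) a $2$-relaxed resolving set of $T$. The point is that distances in $T$ between two vertices of $\stem(T)$ equal their distances in $\stem(T)$ (a shortest path between two non-leaves never needs to pass through a leaf), so $S'$ still distinguishes all pairs of non-leaves of $T$; and any two vertices not distinguished must then involve a leaf, whose distance to any $\stem(T)$-vertex equals one plus the distance of its unique neighbor — so an undistinguished pair consists of a leaf and its neighbor (distance $1 \le 2$), or two leaves sharing a neighbor (distance $2$), both acceptable. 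One has to be a little careful about leaf paths of length $>1$ in $T$ whose interior vertices survive in $\stem(T)$; but then those interior vertices are already distinguished by $S'$, and only the terminal leaf is newly created, again at distance $1$ from a resolved vertex. For the reverse inequality $\md_2(T) \ge \md(\stem(T))$: take an optimal $2$-relaxed resolving set $S$ of $T$; I want to convert it into a resolving set of $\stem(T)$ of no larger size. The idea is to project each sensor of $S$ that lies on a deleted leaf to its neighbor in $\stem(T)$ (or along the leaf path to the nearest surviving vertex), and then invoke the classical structure (as in the proof of~\eqref{eq:md_tree}, following~\cite{slater_leaves_1975}): a $2$-relaxed resolving set of $T$ must, for every exterior major vertex $v$ of $T$, place sensors at all-but-one of the leaf paths hanging off $v$ except that here two leaves at distance $2$ (i.e.\ two length-one leaf paths at the same $v$) are allowed to coincide — this is exactly the combinatorial slack that the stemming removes, and the remaining constraints are precisely the resolving-set constraints on $\stem(T)$.

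For claim (a), that $\md_{2r}(T) = \md_{2r-2}(\stem(T))$ when $2r < \diam(T)$ — note $2r-2 < \diam(\stem(T))$ holds since stemming reduces the diameter by at most $2$ — I would run essentially the same leaf/non-leaf dichotomy but now tracking the relaxation budget: a pair of vertices of $\stem(T)$ at distance $d$ in $\stem(T)$ is at distance $d$ in $T$, so requiring $d \le 2r-2$ in $\stem(T)$ versus $d \le 2r$ in $T$ differs exactly by what a single round of leaves contributes, and conversely a pair involving a leaf of $T$ is within the $2r$-budget iff the corresponding $\stem(T)$-pair (replacing leaves by their neighbors) is within the $(2r-2)$-budget. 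Here I would lean on Theorem~\ref{thm:odd_dist} to avoid parity headaches: since $\md_{2r} = \md_{2r+1}$, I may assume the "surviving" pairs to be resolved are exactly those at distance $\ge 2r+1$ in $T$, i.e.\ $\ge 2r-1$ after accounting for the two extra hops a leaf pair can use, which is the odd threshold $2r-1$ matching $\md_{2r-2}(\stem(T)) = \md_{2r-1}(\stem(T))$. Assembling (a) and (b) by induction on $r$, and handling the line-graph boundary case by noting that $\stem_r(T)$ being a line graph forces $\md_2(\stem_{r-1}(T)) = 1$ via the base identity, completes the proof. The delicate points to get right are the bookkeeping for leaf paths longer than one edge and the exact statement of the "all-but-one leaf per exterior major vertex" lemma in the relaxed setting, which is where I expect to spend most of the effort.
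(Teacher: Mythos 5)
Your induction has two halves, and they are not equally sound. The upper-bound half ($\md_{2r}(T)\le\md_{2r-2}(\stem(T))$, iterated down to $\md(\stem_r(T))$) can be made to work: if $S'$ is a $(2r-2)$-relaxed resolving set of $\stem(T)$ and a leaf $u$ of $T$ with stem-neighbour $u'$ collides with some $v\in\stem(T)$, then $\Phi(v)=\Phi(u')+\mathbf{1}$ forces every sensor to attach to the $u'$--$v$ path at the single vertex $m$ with $d(u',m)=(d-1)/2$ where $d=d(u',v)$; the reflection $z$ of $u'$ across $m$ then satisfies $\Phi(z)=\Phi(u')$ and $d(u',z)=d-1$, so the $(2r-2)$-relaxation gives $d\le 2r-1$ and hence $d(u,v)\le 2r$. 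But note this argument is needed — your claim that an undistinguished pair must be ``a leaf and its neighbour or two leaves sharing a neighbour'' is not automatic, since a resolving set rules out $\Phi(v)=\Phi(u')$ but not $\Phi(v)=\Phi(u')+\alpha\mathbf{1}$; ruling out the shifted collisions is precisely the content of the paper's Lemmas~\ref{lm:phi_distinction}--\ref{lm:distinguish_same_me_leaf}.

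The genuine gap is in the lower-bound half of your inductive step. To get $\md_{2r}(T)\ge\md_{2r-2}(\stem(T))$ you propose to project a minimum $2r$-relaxed resolving set $S$ of $T$ onto $\stem(T)$; but since distances between two stem vertices are identical in $T$ and in $\stem(T)$, the projected set is only a $2r$-relaxed resolving set of $\stem(T)$, not a $(2r-2)$-relaxed one. A $2r$-relaxed resolving set of $T$ is permitted to leave two vertices of $\stem(T)$ at distance $2r-1$ or $2r$ unresolved, and no ``round of leaves'' contributes anything to such a pair; your sentence asserting that the two budgets ``differ exactly by what a single round of leaves contributes'' is exactly where this is elided. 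Iterating the peeling therefore only yields $\md_{2r}(T)\ge\md_{2r}(\stem_r(T))$, which is far weaker than $\md_0(\stem_r(T))$. (The same issue already infects the base case $r=1$: the projection of a $2$-relaxed resolving set of $T$ need not resolve stem vertices at distance $2$, so it need not be a resolving set of $\stem(T)$.) The paper sidesteps this entirely by proving the lower bound in one shot at depth $r$ (Lemma~\ref{lm:lower_bound_dim2}): for each exterior major vertex of $\stem_r(T)$ and each of its adjacent stem-leaves, there hangs a branch of depth $r$ in $T$, and if two such branches at the same exterior major vertex contained no sensor, their depth-$(r+1)$ tips would share an identification vector while being at distance $2r+2$. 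You would need to replace your projection step by such a direct counting argument (quantified over exterior major vertices of $\stem_r(T)$, not of $T$ — these are not the same set) for the lower bound to close.
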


We describe the proof techniques used to establish this theorem in Section~\ref{sec:prf_k_metric_dim_tree}. 
Theorem~\ref{thm:k_metric_dim_tree} can be equivalently stated by $\md_{2r}(T) = \md(\stem_r(T))$, and then using Formula~\eqref{eq:md_tree} to obtain $\md(\stem_r(T)) = 1$ if $\stem_r(T)$ is a line graph and $\md(\stem_r(T)) = \sigma_r(T) - \ex_r(T)$ otherwise.

Combining Theorems~\ref{thm:odd_dist} and~\ref{thm:k_metric_dim_tree} provides an expression of the $k$-relaxed metric dimension of trees, for any~$k < \diam(T)$, while the case $k \ge \diam(T)$ trivially gives $\md_k(T) = 0$. 

Importantly, the relaxed metric dimension can be orders of magnitude smaller than the metric dimension, which yields a drastic reduction in number of requires sensors. In the particular case of a full $m$-ary tree, the following Example~\ref{exa:full_m_ary_tree} shows that a $2r$-relaxation of the metric dimension reduces the number of sensors by a factor $m^r$. 

\begin{example}
 \label{exa:full_m_ary_tree}
 Let $T$ be a full $m$-ary tree of height $h$. Then, for any nonnegative integer $r$ such that $2r < h$, we have $\sigma_r(T) = m^{h-r}$ and $\ex_r(T) = m^{h-r-1}$. Thus, the ratio $\frac{\md_0(T)}{\md_{2r}(T)}$ simplifies to $\frac{m^h - m^{h-1}}{m^{h-r} - m^{h-r-1}} = m^r$, and we obtain 
 $$\md_{2r}(T) \weq \frac{ \md_{0}(T) }{ m^r }.$$ 
\end{example}

\subsection{Metric Dimension of Random Trees}\label{subsection:randomTrees}

 In Section~\ref{subsection:deterministicTrees}, we provided an expression of the relaxed metric dimension for \textit{any} tree. To explore more broadly how many sensors can be saved by using a relaxed metric dimension, we examine its asymptotic behavior on critical Galton-Watson trees. Let us first recall the definition of a Galton-Watson tree. 

\begin{definition}
 \label{def:GW_trees}Let $\xi$ be a probability distribution over the set of nonnegative integers. 
 A \new{Galton-Watson tree} with offspring distribution $\xi$ is a random rooted tree constructed recursively as follows:
 \begin{itemize}
   \item the root has a random number of offsprings distributed according to $\xi$;
   \item each offspring of the root independently has a random number of offspring distributed according to $\xi$;
   \item this branching process continues independently for all subsequent generations.
 \end{itemize}
\end{definition}
In the following, we focus on the case $\E [ \xi ] = 1$, referred to as the \new{critical} Galton-Watson trees, and we condition the trees to have $n$ vertices. We refer to the introduction for a motivation of these choices. The following theorem provides an asymptotic for the relaxed metric dimension of critical Galton-Watson trees. 

\begin{theorem}
 \label{thm:rlaxed_md_random_trees}
 Consider a sequence $(\mathcal{T}^{GW}_n)_{n \in \mathbb{N}}$ of critical Galton-Watson trees conditioned on having $n$~vertices with offspring distribution $\xi$ satisfying $\E\left[\xi\right] = 1$ and $\E\left[\xi^2\right] < \infty$. Then, the $(2r)$-relaxed metric dimension of $\mathcal{T}^{GW}_n$ is given by: 
 \begin{align*}
   \frac{\md_{2r} \left( \cT^{GW}_n \right) }{n} \xrightarrow{p} c_r(\xi), 
 \end{align*}
 as $n \to \infty$, where $c_r(\xi)$ is a constant depending only on the distribution $\xi$.
\end{theorem}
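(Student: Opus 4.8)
\textbf{Proof proposal for Theorem~\ref{thm:rlaxed_md_random_trees}.}

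The plan is to combine the deterministic structure theorem (Theorem~\ref{thm:k_metric_dim_tree}) with a fringe-tree analysis of the conditioned Galton--Watson tree. By Theorem~\ref{thm:k_metric_dim_tree}, up to a negligible additive term of order $1$ (the line-graph exceptional case, which has vanishing probability as $n\to\infty$ since $\diam(\cT_n^{GW})\to\infty$ in probability), we have $\md_{2r}(\cT_n^{GW}) = \sigma_r(\cT_n^{GW}) - \ex_r(\cT_n^{GW})$, where $\sigma_r$ and $\ex_r$ count leaves and exterior major vertices of the $r$-fold stem $\stem_r(\cT_n^{GW})$. So it suffices to show that $\sigma_r(\cT_n^{GW})/n$ and $\ex_r(\cT_n^{GW})/n$ each converge in probability to constants depending only on $\xi$; the difference of the two limits is then $c_r(\xi)$.

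First I would express each of these two counts as a sum over vertices $v$ of an indicator depending only on the fringe tree $T_v$ (the subtree of descendants of $v$). A vertex $v$ is a leaf of $\stem_r(T)$ exactly when its fringe tree has height exactly $r$ (after $r$ rounds of leaf-removal, $v$ survives but becomes a leaf); this is a property of $T_v$ alone. Similarly, $v$ is an exterior major vertex of $\stem_r(T)$ exactly when $\stem_r(T_v)$ has root degree (within $\stem_r(T)$, counting the edge to $v$'s parent) making $v$ major, and at least one pendant subtree of $v$ in $\stem_r(T)$ is a path down to a leaf --- again a function of $T_v$ only, except for a boundary correction near the root of $\cT_n^{GW}$ which affects $O(1)$ vertices. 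Thus $\sigma_r(\cT_n^{GW}) = \sum_{v} g_\sigma(T_v) + O(1)$ and $\ex_r(\cT_n^{GW}) = \sum_v g_{\ex}(T_v) + O(1)$ for bounded functionals $g_\sigma, g_{\ex}$ of rooted trees.

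Next I would invoke the fringe-tree convergence for conditioned critical Galton--Watson trees: under $\E[\xi]=1$ and $\E[\xi^2]<\infty$, the empirical distribution of fringe trees $\frac1n \sum_v \delta_{T_v}$ converges (in probability, in the appropriate local-weak sense) to the law $\mu_\xi$ of an \emph{unconditioned} Galton--Watson tree with offspring distribution $\xi$ --- this is precisely the statement cited from~\cite{aldous_fringe_tree_1991,janson_fringe_tree_2016}. Since $g_\sigma$ and $g_\ex$ are bounded and depend only on a bounded neighborhood of the root when restricted to the events in question (height $\le r$, or structure of $\stem_r$ near the root), they are continuous and bounded functionals for this mode of convergence, so $\frac1n\sum_v g_\sigma(T_v) \xrightarrow{p} \E_{\mu_\xi}[g_\sigma] =: a_r(\xi)$ and likewise $\frac1n\sum_v g_\ex(T_v)\xrightarrow{p} \E_{\mu_\xi}[g_\ex] =: b_r(\xi)$. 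Setting $c_r(\xi) = a_r(\xi) - b_r(\xi)$ completes the argument; these are computable via the generating function of $\xi$ by recursing on the offspring of the root (e.g., the probability that an unconditioned GW tree has height exactly $r$ satisfies a simple iteration in $g_\xi$).

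The main obstacle is the passage from the deterministic formula to a clean \emph{local} functional: one must verify carefully that ``$v$ is a leaf of $\stem_r(T)$'' and ``$v$ is an exterior major vertex of $\stem_r(T)$'' truly depend only on $T_v$ and not on the rest of the tree (the subtlety is that stemming can remove a vertex because \emph{all} its neighbors, including its parent side, get removed --- one has to check the parent side never reduces the degree below what the descendant side already forces, or account for it), and that the fringe-tree convergence theorem as stated applies to functionals supported on events of unbounded but probabilistically decaying depth. Handling the $O(1)$ boundary terms near the root of $\cT_n^{GW}$ and confirming the line-graph exceptional case is asymptotically negligible are routine but need to be stated. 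A secondary technical point is ensuring the convergence is in probability (not merely in expectation), which follows from a second-moment/concentration argument on the fringe-tree empirical measure, already part of the cited fringe-tree machinery.
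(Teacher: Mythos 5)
Your proposal is correct and follows essentially the same route as the paper: reduce to counting leaves and exterior major vertices of the $r$-stem via Theorem~\ref{thm:k_metric_dim_tree}, recast these counts as fringe-tree (subtree) properties, apply the conditioned-to-unconditioned fringe-tree convergence of Janson, and compute the limiting probabilities by recursions on the offspring distribution. The "main obstacle" you flag — that stemming near the root is not a pure function of the fringe tree — is exactly what the paper resolves by introducing the root-preserving \emph{down-stemming} operation and proving that the resulting counts differ from the true ones by an additive error in $\{-1,0,+1\}$ (Lemma~\ref{lm:st_probs_close_to_md}).
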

An overview of the proof of Theorem~\ref{thm:rlaxed_md_random_trees} is given in Section~\ref{sec:prf_randomTrees}. Our proof provides explicit (albeit complicated) expressions for the coefficients $c_r(\xi)$, which depend solely on the offspring distribution $\xi$. As an example, consider the case where the offspring distribution is Poisson. This is an important example, where the Galton-Watson tree is distributed as a uniform random tree.
 
\begin{example}
 Let $\xi$ be the Poisson distribution with mean $1$, \textit{i.e.,} $\mathbb{P}(\xi = i) = e^{-1}/ i!$. 
 Denote
 \begin{align*}
   d_r \weq 
   \begin{cases}
           0 & \text{if } r=0, \\
           e^{d_{r-1}-1} & \text{otherwise,}
   \end{cases} \quad \text{ and } \quad 
   \ell_r \weq 
   \begin{cases}
    1/e & \text{if } r=0, \\
    d_r(e^{\ell_{r-1}} -1) & \text{otherwise.}
   \end{cases}
   \end{align*}
 Denote also $e_r = 1- e^{-s_r} -(s_r - \ell_r)$ and $s_r = \ell_r / (1 - e^{d_{r} -1} ) $.  An interpretation of the coefficients $d_r$, $\ell_r$, $e_r$ and $s_r$ is provided in Section~\ref{sec:prf_randomTrees}. The values $c_r(\mathrm{Poisson}(1))$ are given by 
   \begin{equation*}
     c_r(\xi) \weq s_r +e^{-s_r} - 1.
   \end{equation*}
    Table~\ref{tab:MD_Poisson} provides the numeric values of the coefficients $c_r$ for various values of $r$. Observe that the value of $c_r$ drops significantly from $r=0$ to $r=1$ (by a factor 3), and again from $r=1$ to $r=2$, but a bit less (by a factor 2). This hints again at significant savings in sensor resources for small values of the relaxation parameter~$r$.

\begin{table}[!ht]
\centering
\begin{tabular}{@{}ccccccccccc@{}}
\toprule
\( r \)          & 0      & 1      & 2      & 3      & 4      & 5      & 6      & 7      & 8      & 9      \\ \midrule
$c_r(\xi)$ & 0.1408 & 0.0544 & 0.0294 & 0.0185 & 0.0128 & 0.0094 & 0.0072 & 0.0057 & 0.0046 & 0.0038 \\ \bottomrule
\end{tabular}
\caption{Limit of $\md_{2r}/n$ for a Galton-Watson tree with \textit{Poisson} offspring distribution and mean $1$}
\label{tab:MD_Poisson}
\end{table}
\end{example}

\section{Numerical Experiments}
\label{section:experiments}

\subsection{Algorithm and Metrics}

\subsubsection{Approximating the relaxed metric dimension}

 We saw in Section~\ref{section:theory} that the relaxed metric dimension can be computed exactly and in a constructive manner on trees. However, computing the metric dimension in an arbitrary graph is NP-hard~\cite{khuller1996landmarks}, and the same can be expected for the $k$-relaxed version. Thus, to perform experiments on arbitrary graphs, we use a greedy algorithm that computes an approximate value of the relaxed metric dimension. We provide in Appendix~\ref{appendix:algo} the pseudo-code of the greedy algorithm (Algorithm~\ref{algo}) used to compute the relaxed metric dimension in all our experiments. 

  For any integer $k$, we denote by $\hS_k(G)$ the output of Algorithm~\ref{algo} applied on a graph $G=(V,E)$. Because Algorithm~\ref{algo} returns a $k$-relaxed resolving set, we have $|\hS_k(G)| \ge \md_k(G)$. Moreover, Theorem~\ref{thm:validity_algo} in Appendix shows that $|\hS_k(G)| / \md_k(G) \in O(\log |V|)$. 
  
  \subsubsection{Metrics used}
  \label{sec:metrics_used}
  To evaluate the effect of the relaxation parameter, we consider three different metrics. 
  
  (i) The first metric is the proportion of vertices belonging to the $k$-relaxed resolving set $\hS_k(G)$ found by Algorithm~\ref{algo}, as a function of $k$. It indicates the savings (in number of sensors) gained by resolving only pairs of vertices at distance larger than $k$, instead of resolving all the vertices ($k=0$).
  
  (ii) The second metric is the proportion of non-resolved vertices as a function of $k$. The larger this number, the lower the chance to uniquely detect the source by the sensors. 
  
  (iii)  To define the third and last metric, let us introduce some notations. For two vertices $u,v \in V$ and any set $S \subseteq V$, we denote $u \rel{S} v$ if $\Phi_{G}(u,S) = \Phi_{G}(v,S)$. Observe that $\rel{S}$ is an equivalence relationship over $V$, and the equivalent classes are the set of vertices that have the same identification vector. In other words, if the source is located at $u$, then the set
 \begin{align*}
  [u]_S \weq \{ v \in V \colon \Phi_{G}(v,S) = \Phi_{G}(u,S) \} 
 \end{align*}
 is the set of potential \textit{candidate vertices} for the source location. Therefore, the last metric we consider is the size $\alpha$ of the largest set of candidate vertices with respect to an arbitrary location of the source, namely
\begin{align*}
 \alpha 
 \weq \max_{u \in V} \left| \left[ u \right]_S  \right| 
 \weq \max_{u \in V} \left| \left\{ v \in V \colon \Phi_{G}(v,S) = \Phi_{G}(u,S) \right\} \right|. 
\end{align*}
This quantity $\alpha$ is the cardinality of the largest set of vertices having the same identification vector. 

\subsection{Synthetic Datasets}

 We provide results for standard random graph models: two types of random trees (\Barabasi-Albert and Galton-Watson models) and two types of random graphs (configuration model and random geometric graphs). For Galton-Watson trees, we choose the offspring distribution to be Poisson with mean 3. For the configuration model, we choose the degree sequence to be iid distributed from $2+X$ where $X$ denotes a Zipf distribution with exponent 2.5 and maximum value $n-2$.\footnote{This choice ensures that the minimum degree is larger or equal than 3, and hence the sampled graph is almost surely connected.} Finally, for random geometric graphs, the points are uniformly distributed on the $2$-d square $[0,1]^2$ and two vertices at distance less than $r = 1.5 r_c$ with $r_c = \sqrt{ \log(n) / (n \pi ) }$ are connected. Note that because $r > r_c$, the graphs sampled are almost surely connected. 
 
 Table~\ref{table:synthetic_graphs_statistics} lists some statistics of the random graphs considered. Among the standard statistics (such as average degree and diameter), we also report the size of the 1-shell. Recall that the $2$-core of a graph $G$ is the subgraph of $G$ where every vertex in the subgraph has a degree of at least $2$, and the $1$-shell is the set of vertices that do not belong to the $2$-core. In particular, the vertices of the $1$-shell are exactly the vertices that are removed by the iterative stemming operation. 
 
 \begin{table}[!ht]
\centering
\begin{tabular}{ c c c c c c c  }
\hline \toprule
 Network & $|V|$ & $|E|$ & $\bd$ & $D_{\max}$ & $\bar{D}$ & $|\shell|$ \\
 \midrule
\Barabasi-Albert & 1000 & 999 (0) & 2.0(0) & 20.3 (1.8) & 8.40 (0.60) & 1000 (0) \\
Galton-Watson & 1000 & 999 (0) & 2.0 (0) & 23.0 (3.0) & 15.2 (1.5) & 1000 (0) \\
configuration model & 1000 & 1908 (53) & 3.8 (0.11) & 9.11 (0.60) & 4.86 (0.33) & 0.7 (0.82) \\
random geometric graph & 1000 & 7314 (106) & 14.6 (0.21) & 24.5 (0.69) & 9.56 (0.12) & 0.25 (0.50) \\
\bottomrule
\end{tabular}
 \caption{Number of vertices $|V|$, number of edges $|E|$, average degree $\bd$, diameter $D_{\max}$, average shortest-path length $\bar{D}$, and size $|\shell|$ of the 1-shell of the random graph models when the number of vertices is set to 1000. In parenthesis: standard deviations.}
\label{table:synthetic_graphs_statistics}
\end{table}

Because the results for different models are similar, we describe them in the main text only for \Barabasi-Albert and random geometric graphs. We postpone to Appendix~\ref{appendix:additional_numerical_results} additional numerical results for the other classes of networks. 

 \subsubsection{\Barabasi-Albert trees}\label{subsection:experiments_BA}  
Figure~\ref{fig:evolution_md_BA} illustrates the evolution of the relaxed metric dimension in \Barabasi-Albert trees. As predicted by Theorem~\ref{thm:odd_dist}, the values for $k=2r$ and $k=2r+1$ are identical. This serves as an initial indication that, even though it is a greedy algorithm outputting an approximate solution, Algorithm~\ref{algo} reliably identifies resolving sets with the smallest cardinality on trees. Additionally, we observe a sharp decrease in the metric dimension when it is relaxed from $k=0$ to $k=2$. Although the proportion of non-distinguished vertices increases significantly (rising from 0\% to 60\%), the size of the largest equivalence class remains relatively small, with fewer than 20 vertices out of 1000 for $k=2$.

\begin{figure}[!ht]
 \centering
 \begin{subfigure}{0.32\textwidth}
  \includegraphics[width=\linewidth]{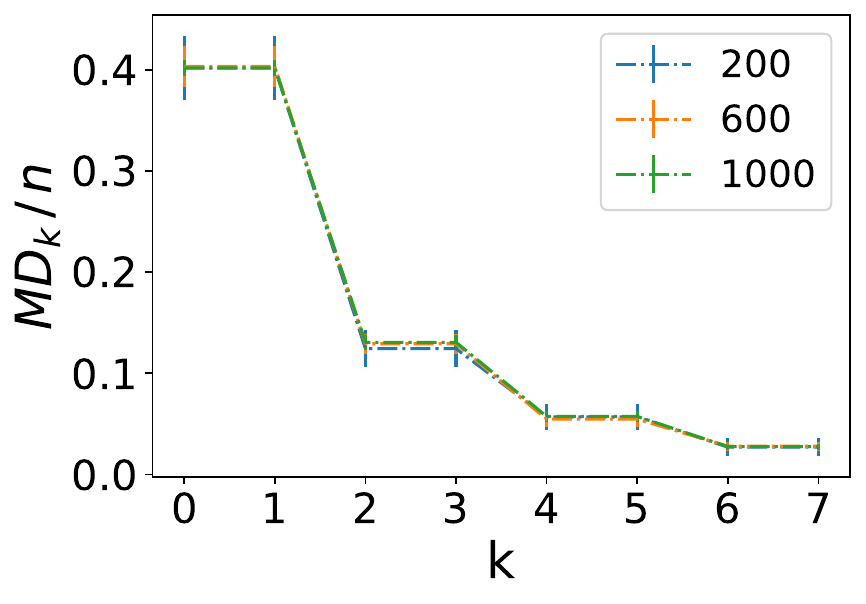}
  \caption{$\md_k / n$}
 \end{subfigure}
 \begin{subfigure}{0.32\textwidth}
  \includegraphics[width=\linewidth]{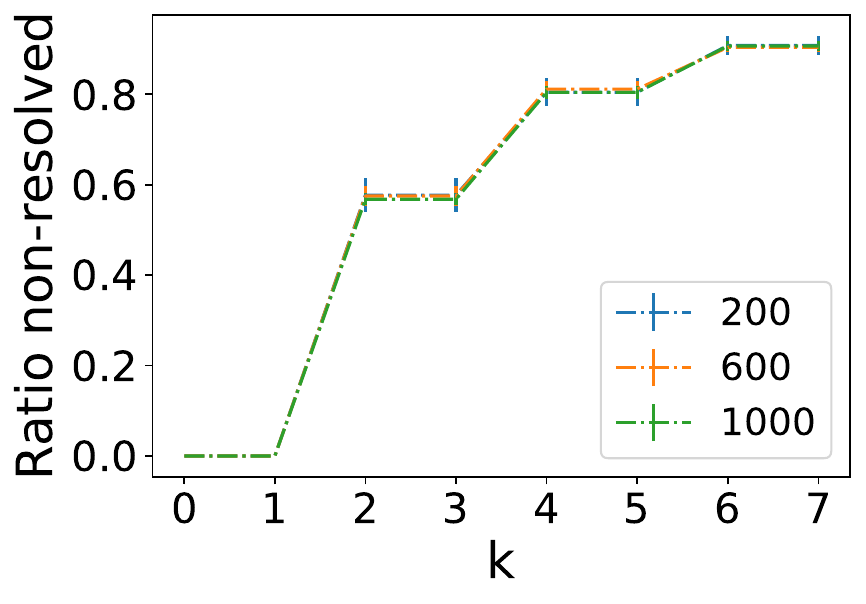}
  \caption{Ratio of non-resolved vertices}
 \end{subfigure}
 \begin{subfigure}{0.32\textwidth}
   \includegraphics[width=\linewidth]{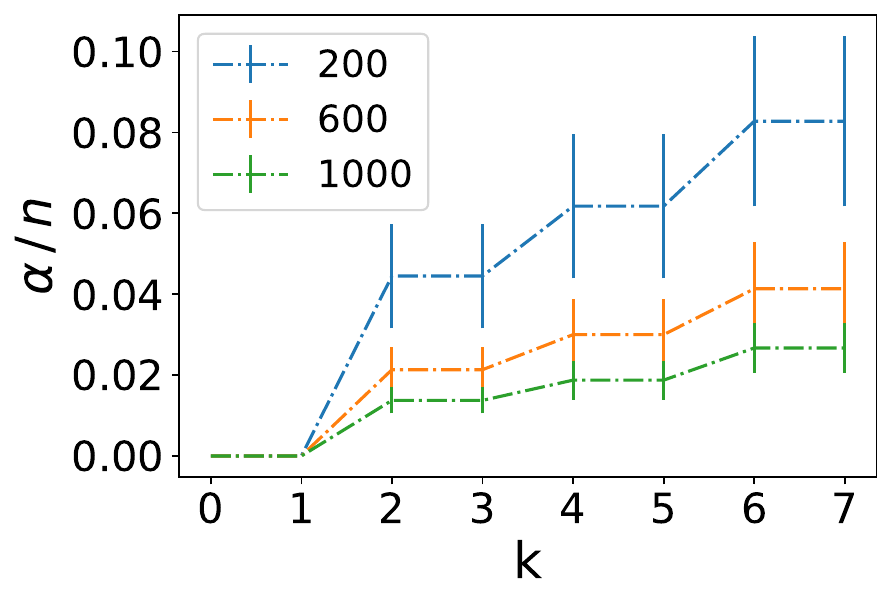}
  \caption{ $\alpha / n$} 
 \end{subfigure}
 \caption{\Barabasi-Albert random trees. Results are averaged over $20$ realizations, and error bars show the standard deviation.}
 \label{fig:evolution_md_BA}
\end{figure}

To gain a deeper insight, Figure~\ref{fig:visualization_BA} illustrates relaxed resolving sets generated by Algorithm~\ref{algo} for an instance of a \Barabasi-Albert tree, limited to 100 vertices for clarity. Even though Algorithm~\ref{algo} is a greedy algorithm, it consistently identifies resolving sets with the smallest possible cardinality—an observation that held true across all our simulations on random trees. Furthermore, when the metric dimension is relaxed, the equivalence classes of the non-resolved vertices predominantly consist of leaves connected to the same vertex. This last observation explains the small size of the largest equivalence class. 

\begin{figure}[!ht]
    \centering
    \begin{subfigure}{0.32\textwidth}
        \includegraphics[width=\linewidth]{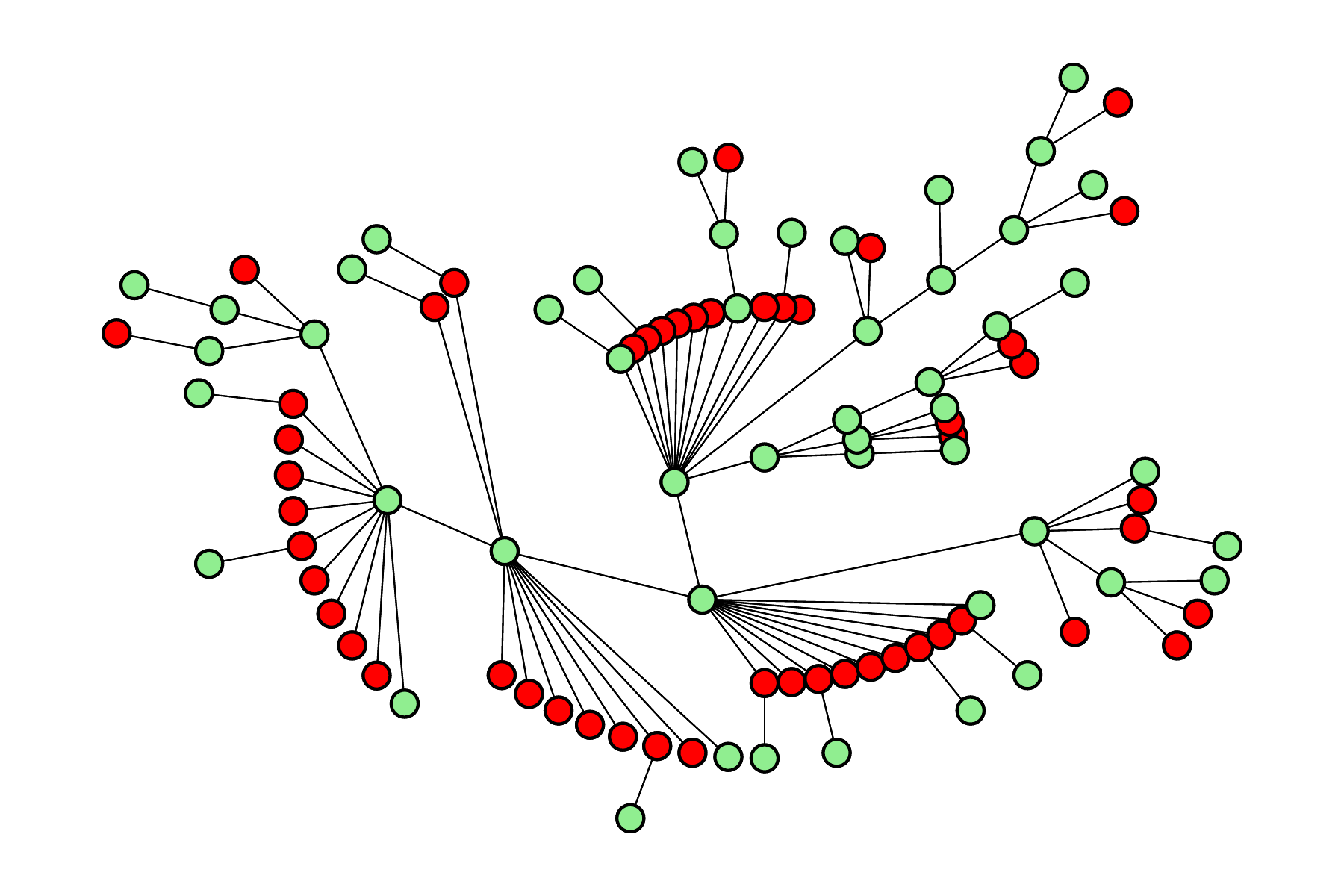}
        \caption{$k=0$}
    \end{subfigure}
    \begin{subfigure}{0.32\textwidth}
        \includegraphics[width=\linewidth]{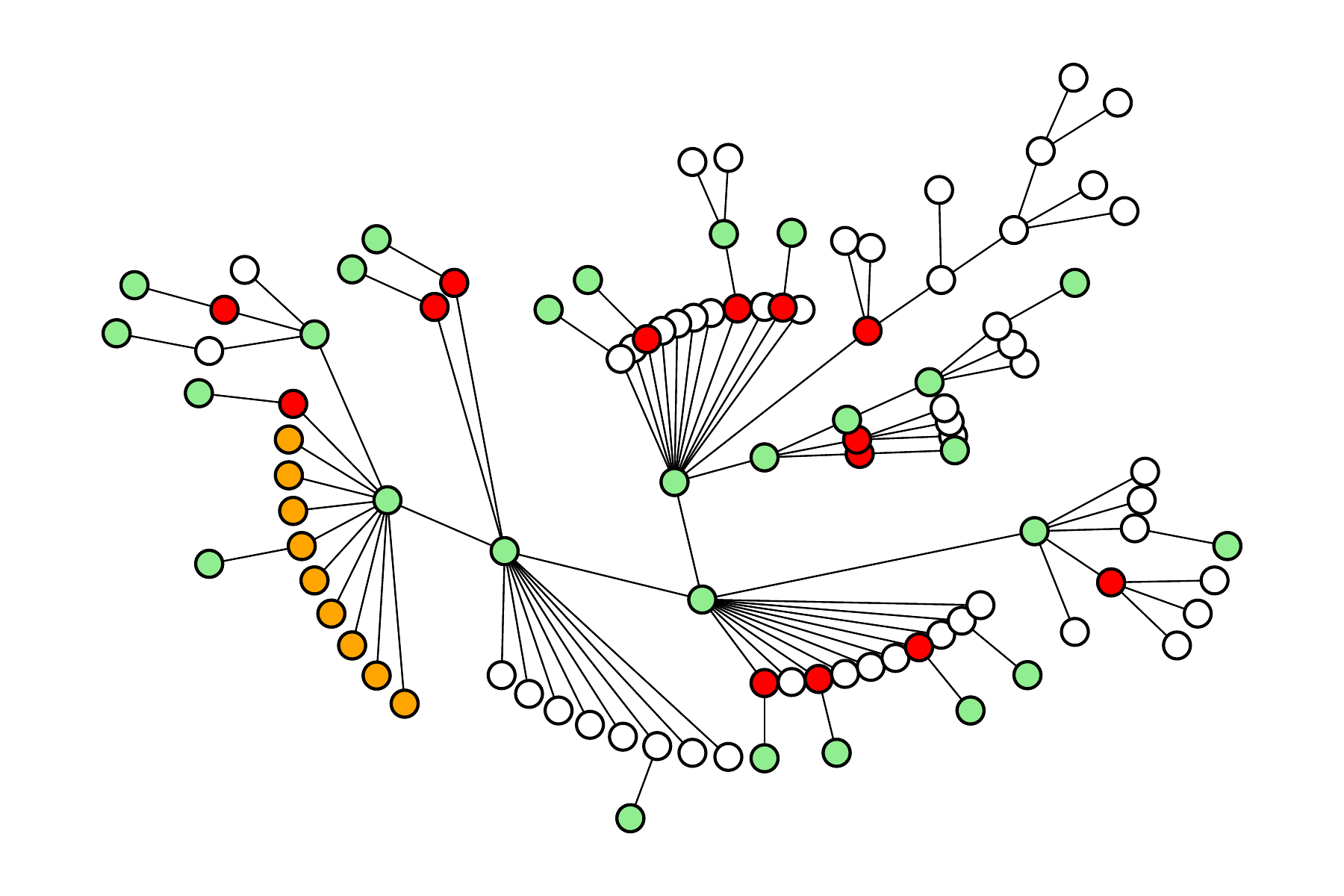}
        \caption{$k=2$}
    \end{subfigure}
    \begin{subfigure}{0.32\textwidth}
        \includegraphics[width=\linewidth]{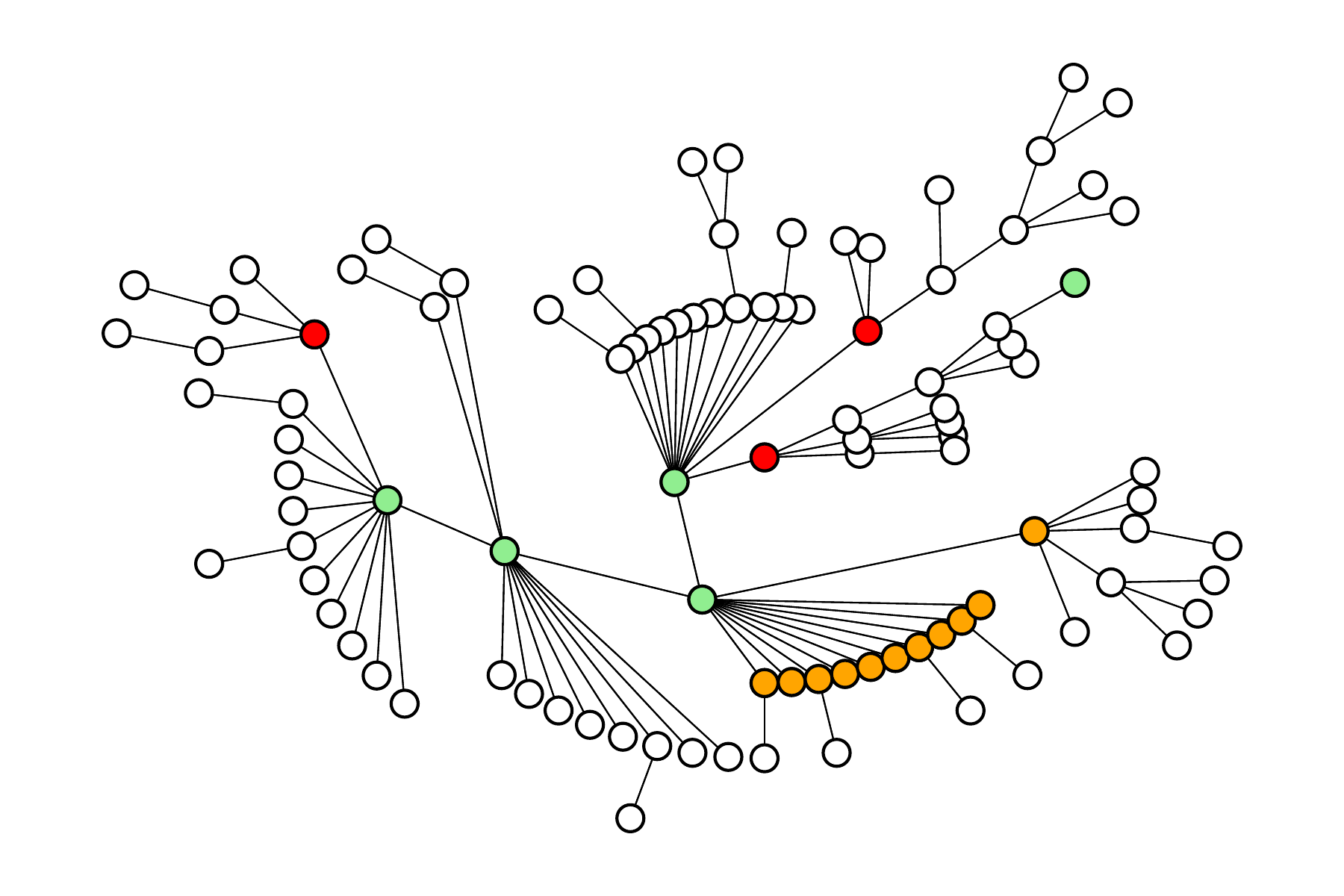}
        \caption{$k=4$}
    \end{subfigure}
    \caption{Illustration of resolving set and non-resolved vertices on a \Barabasi-Albert random tree for various values of the relaxation parameter $k$. {\color{red}Red}: vertices belonging to the $k$-relaxed resolving set found by Algorithm~\ref{algo}. {\color{green}Green}: vertices with a unique identification vector. {\color{orange}Orange}: vertices belonging to the largest equivalent class of non-resolved vertices. }
    \label{fig:visualization_BA}
\end{figure}

\subsubsection{Geometric random graph}
Figure~\ref{fig:evolution_md_RGG} shows the evolution of the relaxed metric dimension in random geometric graphs. Although these graphs differ significantly from \Barabasi-Albert random trees, we observe a similar behavior in the relaxed metric dimension. Specifically, the metric dimension decreases sharply when relaxed from $k=0$ to $k=1$, accompanied by a substantial increase in the proportion of non-distinguished vertices (rising from 0\% to 60\%). However, the size of the largest equivalence class remains quite small.

\begin{figure}[!ht]
 \centering
 \begin{subfigure}{0.32\textwidth}
  \includegraphics[width=\linewidth]{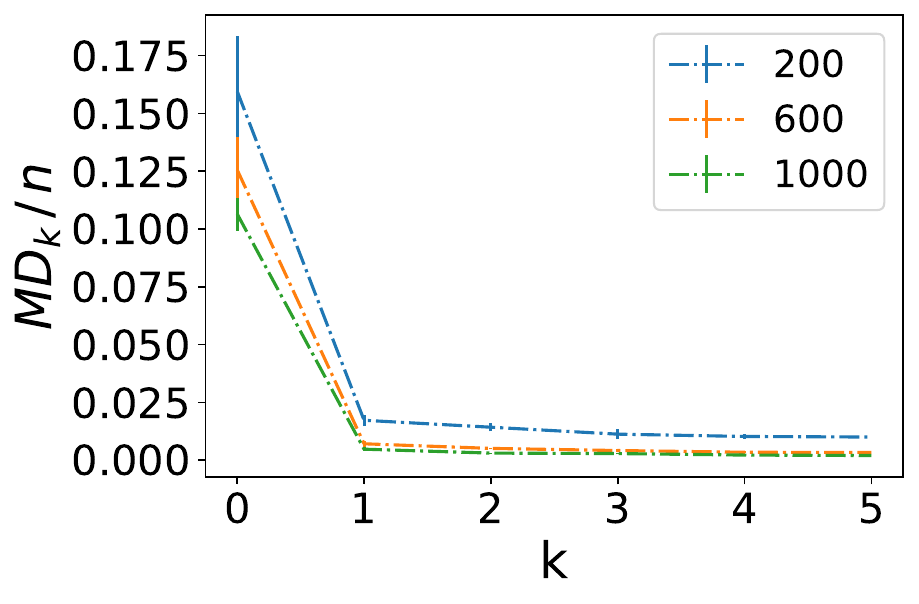}
  \caption{$\md_k / n$}
 \end{subfigure}
 \begin{subfigure}{0.32\textwidth}
  \includegraphics[width=\linewidth]{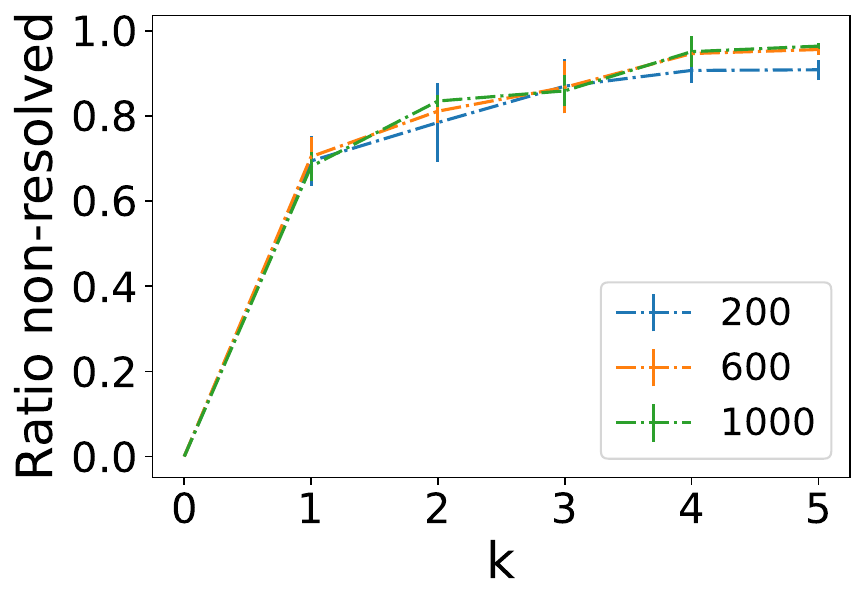}
  \caption{Ratio of non-resolved vertices}
 \end{subfigure}
 \begin{subfigure}{0.32\textwidth}
  \includegraphics[width=\linewidth]{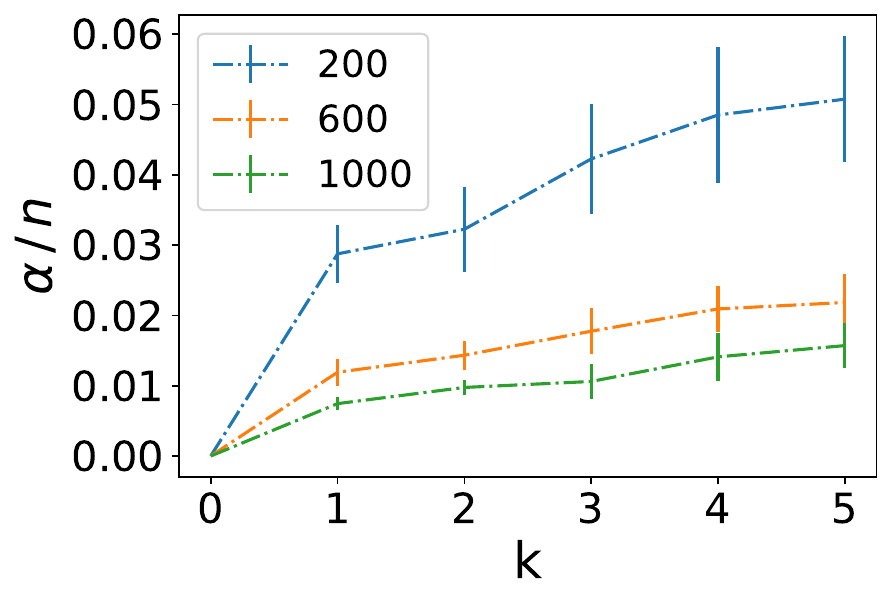}
  \caption{ $\alpha / n$ }
 \end{subfigure}
 \caption{Random geometric graphs in $2$-dimensional Euclidean space with radius $r = 1.5 \sqrt{\frac{\log n}{n \pi }}$. Results are averaged over $20$ realizations, and error bars show the standard deviation.}
 \label{fig:evolution_md_RGG}
\end{figure}

We draw in Figure~\ref{fig:visualization_RGG} relaxed resolving sets obtained by Algorithm~\ref{algo} for an instance of a random geometric graph (limited to 100 vertices for illustrative purposes). We observe that the equivalence classes of non-resolved vertices consist of vertices that are spatially close to one another in the metric space.

\begin{figure}[!ht]
    \centering
    \begin{subfigure}{0.32\textwidth}
        \includegraphics[width=\linewidth]{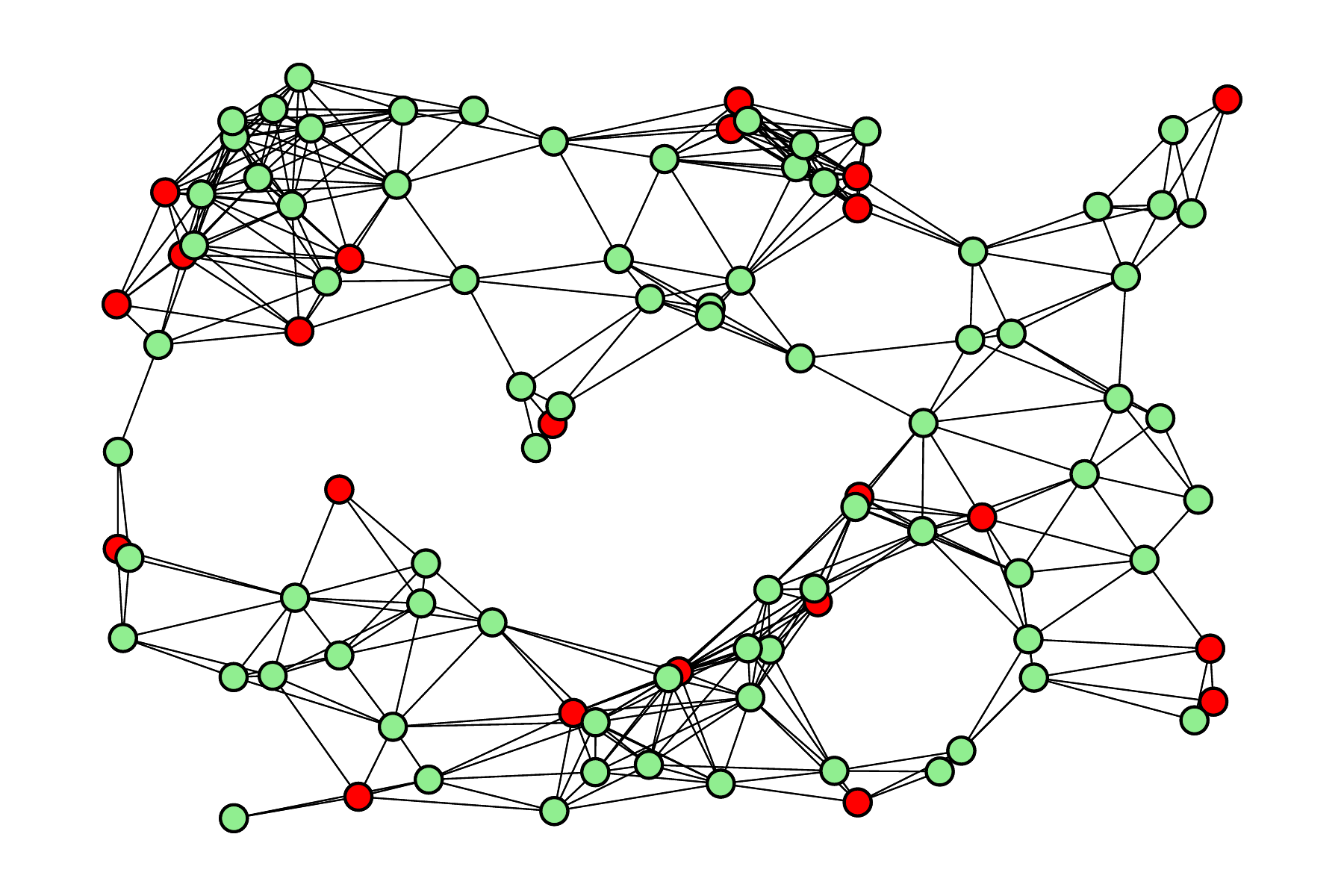}
        \caption{0-relaxed resolving set}
    \end{subfigure}
    \begin{subfigure}{0.32\textwidth}
        \includegraphics[width=\linewidth]{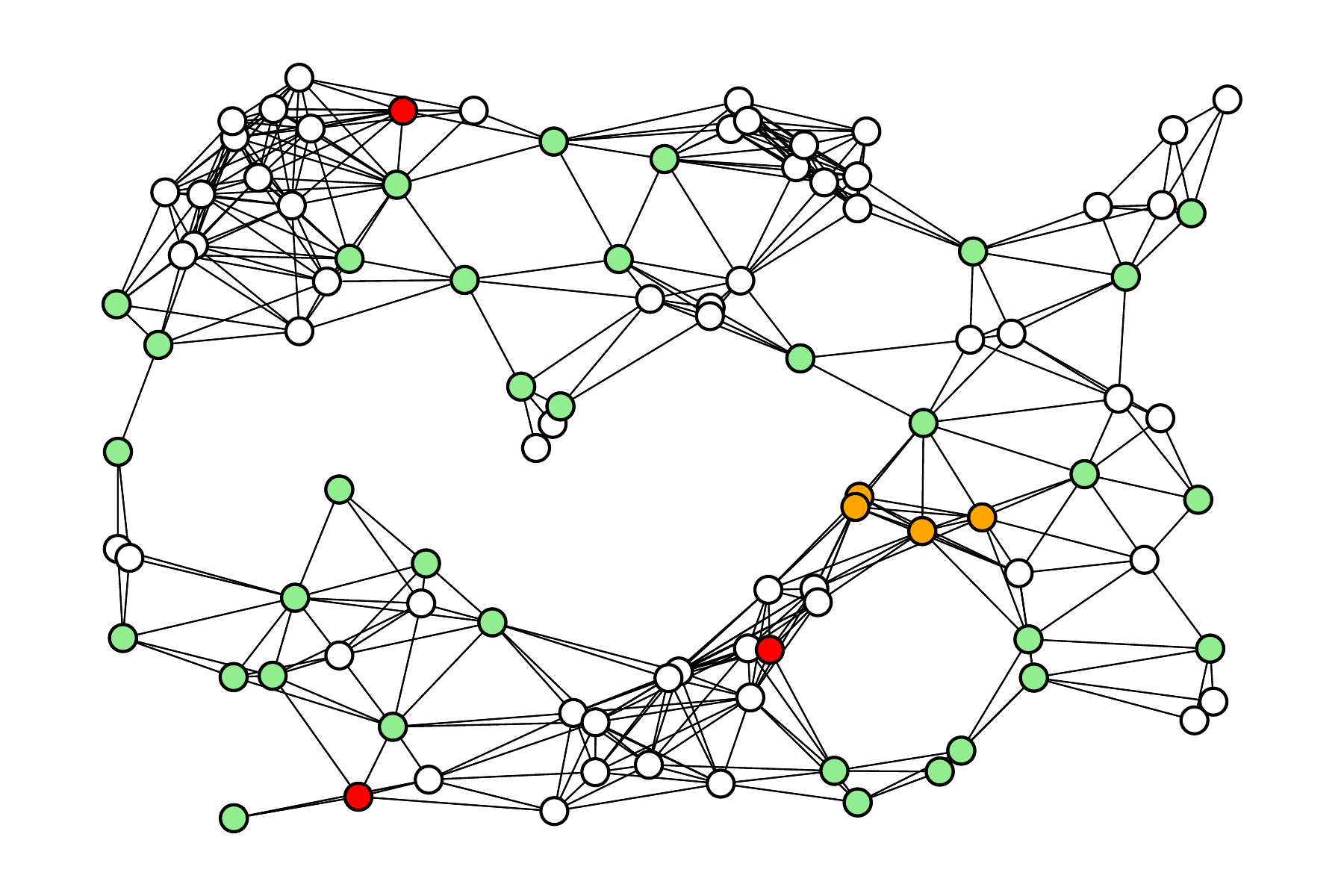}
        \caption{2-relaxed resolving set}
    \end{subfigure}
    \begin{subfigure}{0.32\textwidth}
        \includegraphics[width=\linewidth]{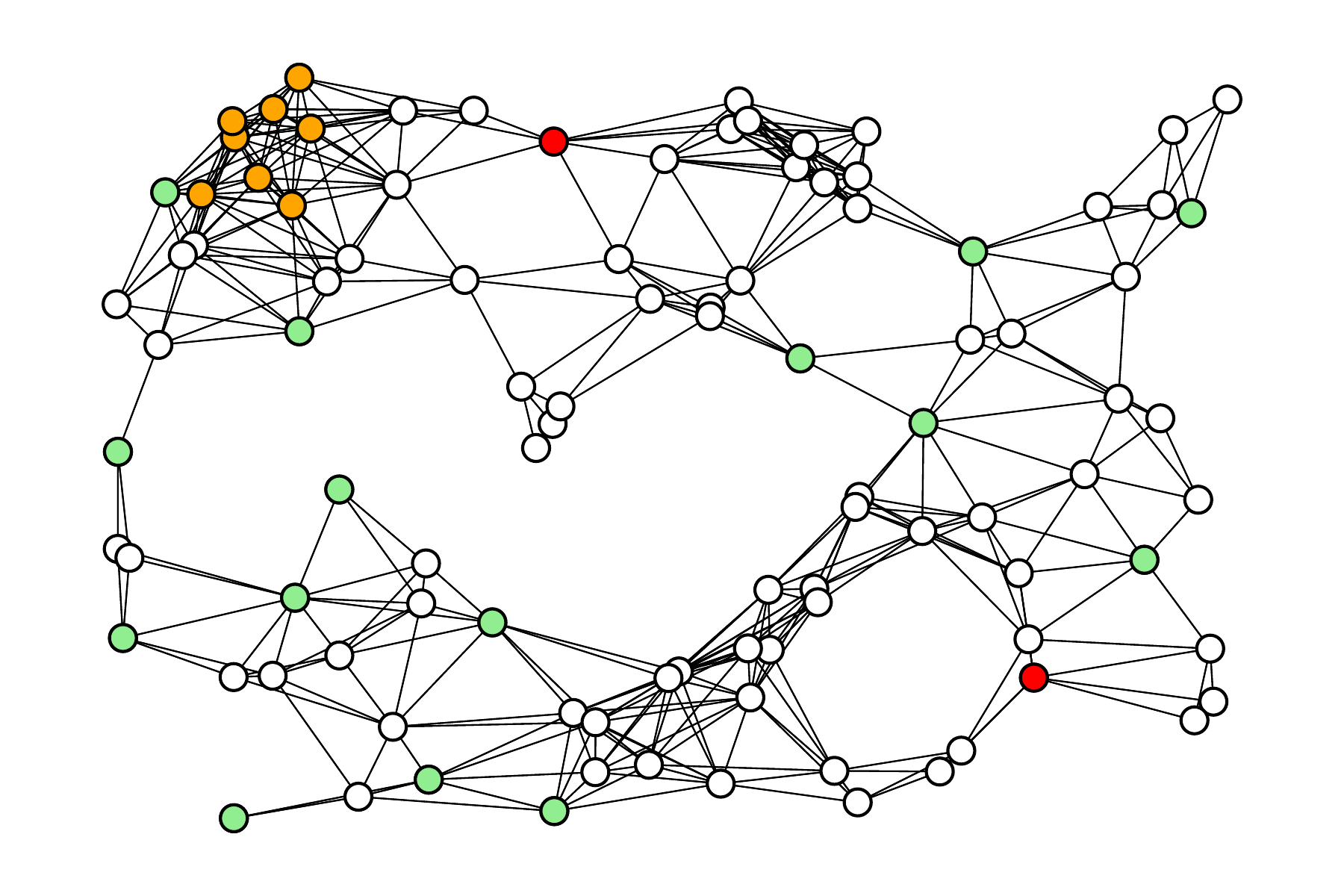}
        \caption{4-relaxed resolving set}
    \end{subfigure}
    \caption{Resolving sets obtained by Algorithm~\ref{algo} on a random geometric graph. Vertices are positioned according to their coordinate in the underlying metric space $[0,1]^2$. {\color{red}Red}: vertices belonging to the relaxed resolving set found by Algorithm~\ref{algo}. {\color{green}Green}: vertices with a unique identification vector. {\color{orange}Orange}: vertices belonging to the largest equivalent class of non-resolved vertices.}
    \label{fig:visualization_RGG}
\end{figure}

\subsection{Real Datasets}

We analyze three real-world networks to investigate their structural properties. The first two networks, \textit{Copenhagen-calls} and \textit{Copenhagen-friends}, represent social relationships among university students participating in the Copenhagen Networks Study \cite{real_graphs_copenhagen}. While \textit{Copenhagen-calls} captures interactions through phone calls, \textit{Copenhagen-friends} reflects Facebook friendships. The second network, \textit{Yeast} protein interactions,\footnote{Data available at \url{konect.cc/networks/moreno_propro/}} models metabolic interactions in yeast, where each vertex corresponds to a protein, and edges represent interactions between proteins. Lastly, the \textit{Co-authorships} network\footnote{Data available at \url{http://konect.cc/networks/dimacs10-netscience/}} captures collaborations among researchers in network science, with vertices representing authors and edges indicating co-authorships. For graphs that are originally disconnected, we restrict our analysis to their largest connected components. Table~\ref{table:real_graphs_statistics} summarizes the key characteristics of these networks.

\begin{table}[h!]
\centering
\begin{tabular}{ c c c c c c c  }
\hline \toprule
Network &   $|V|$  &  $|E|$   & $\bar{d}$ & $D_{\max}$ & $\bar{D}$ & $|S_1|$ \\
 \midrule
Copenhagen-calls & 347  & 477 & 2.75 & 22 & 7.40 & 141 \\
Copenhagen-friends & 800  & 6418 & 16.05 & 7 & 2.98 & 20 \\
Yeast & 1458 & 1948 & 2.67 & 19 & 6.81 & 864 \\
Co-authorships & 379 & 914   & 4.82 & 17 & 6.04 & 27 \\
\bottomrule
\end{tabular}
 \caption{Number of vertices $|V|$, number of edges $|E|$, average degree $\bd$, diameter $D_{\max}$, average shortest-path length $\bar{D}$, and size of the 1-\textit{shell} $|S_1|$ of the real graphs considered.}
\label{table:real_graphs_statistics}
\end{table}

\begin{figure}[!ht]
 \centering
 \begin{subfigure}{0.32\textwidth}
  \includegraphics[width=\linewidth]{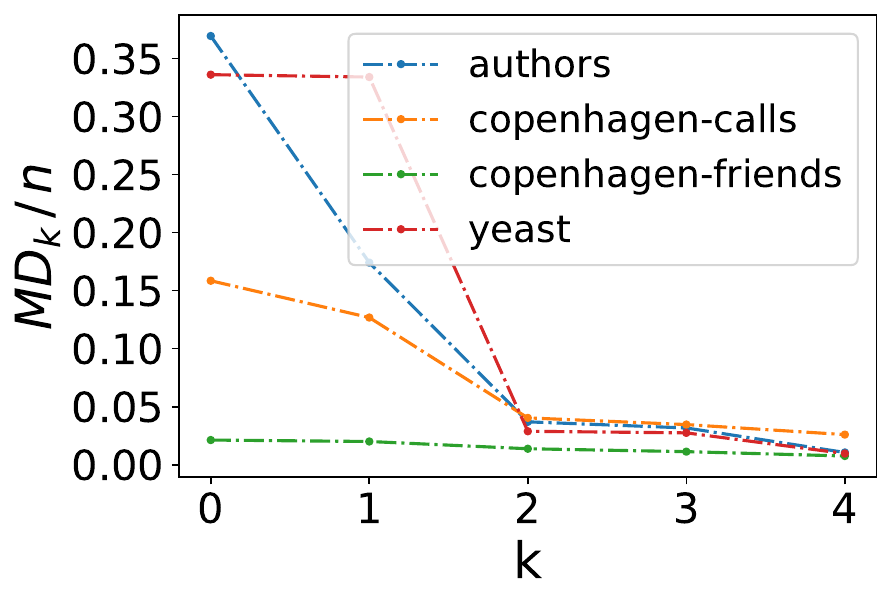}
  \caption{$\frac{ \md_k } { n }$}
 \end{subfigure}
 \begin{subfigure}{0.32\textwidth}
  \includegraphics[width=\linewidth]{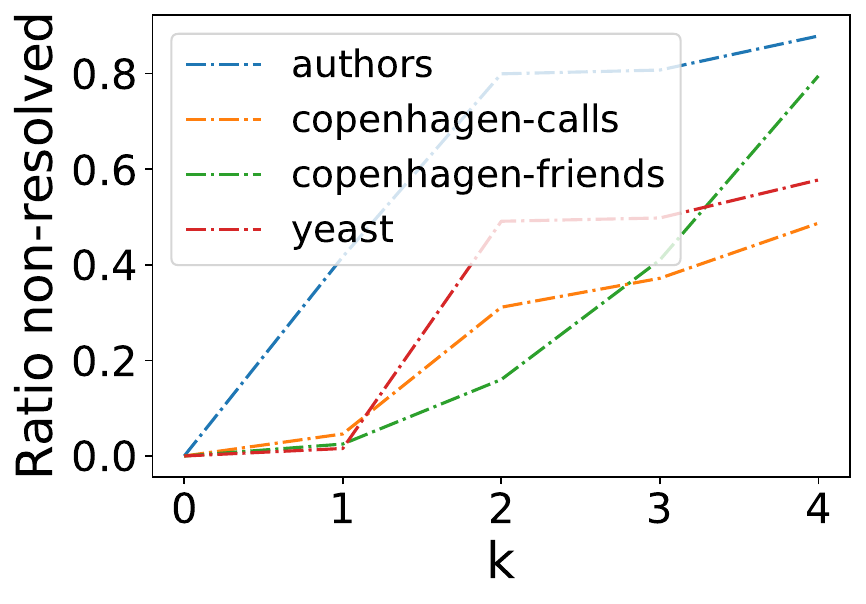}
  \caption{Ratio non-resolved vertices}
 \end{subfigure}
 \begin{subfigure}{0.32\textwidth}
  \includegraphics[width=\linewidth]{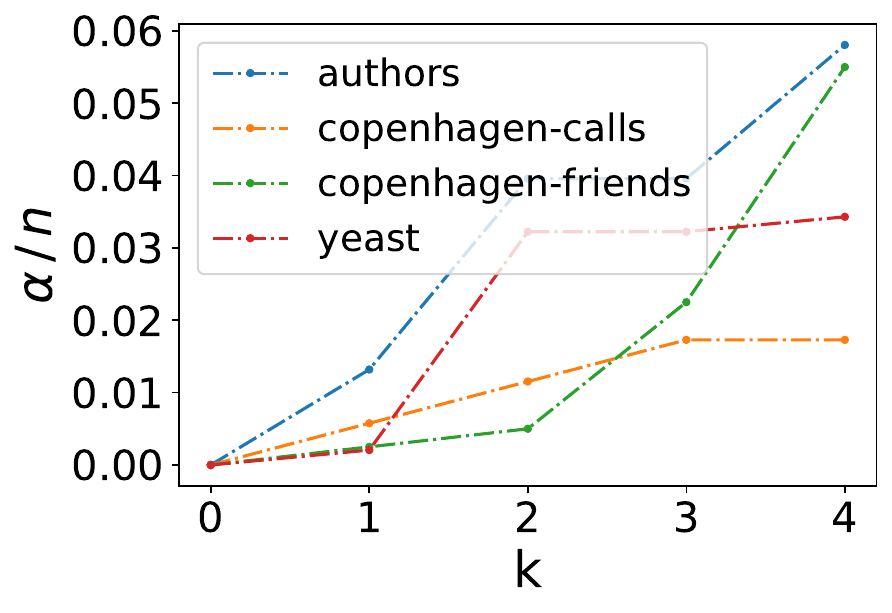}
  \caption{ $\alpha / n$}
 \end{subfigure}
 \caption{Relaxed metric dimension of real graphs.}
 \label{fig:evolution_md_realGraphs}
\end{figure}

Figure~\ref{fig:evolution_md_realGraphs} illustrates how the relaxed metric dimension evolves across these real-world networks. For most of them, introducing relaxation results in a substantial reduction in the number of required sensors, with the exception of the \textit{Copenhagen-friends} network, where the non-relaxed metric dimension is already low. Additionally, while an even modest relaxation rapidly increases the number of non-resolved vertices, the size~$\alpha$ of the largest equivalence class always remains small. These findings align closely with the patterns observed earlier for random graphs, further validating the generality of these behaviors.

\begin{figure}[!ht]
    \centering
    \begin{subfigure}{0.32\textwidth}
        \includegraphics[width=\linewidth]{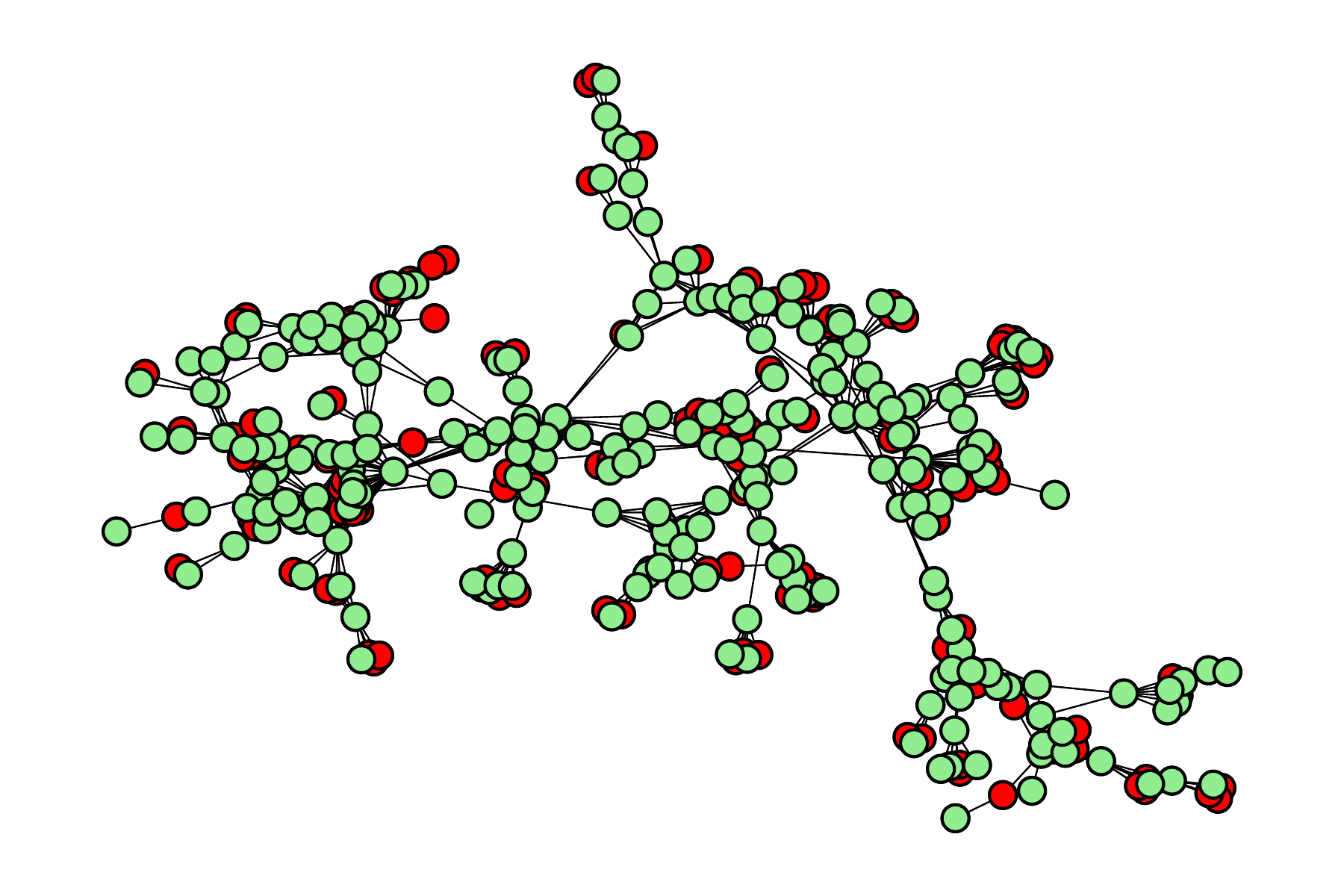}
        \caption{0-relaxed resolving set}
    \end{subfigure}
    \begin{subfigure}{0.32\textwidth}
        \includegraphics[width=\linewidth]{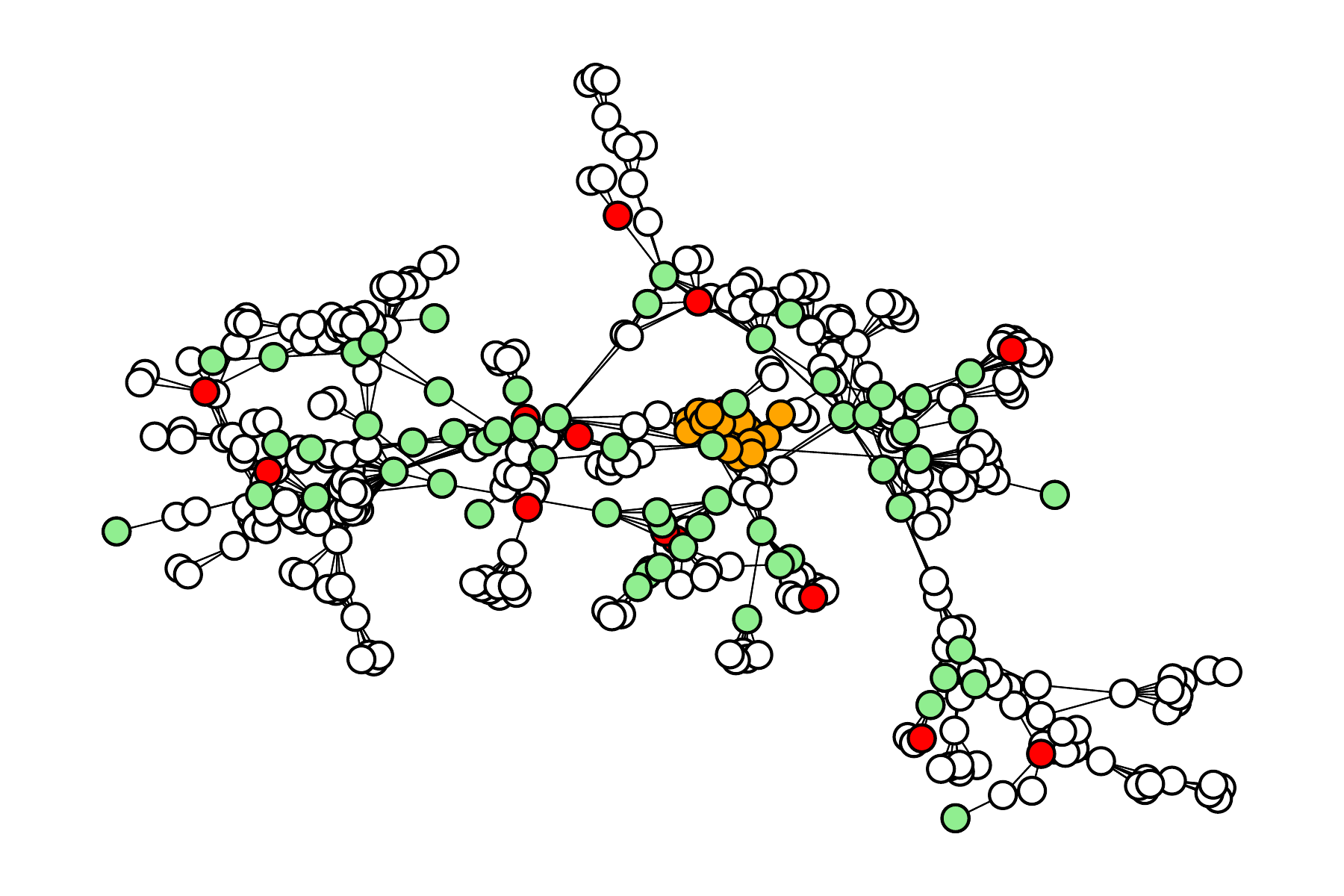}
        \caption{2-relaxed resolving set}
    \end{subfigure}
    \begin{subfigure}{0.32\textwidth}
        \includegraphics[width=\linewidth]{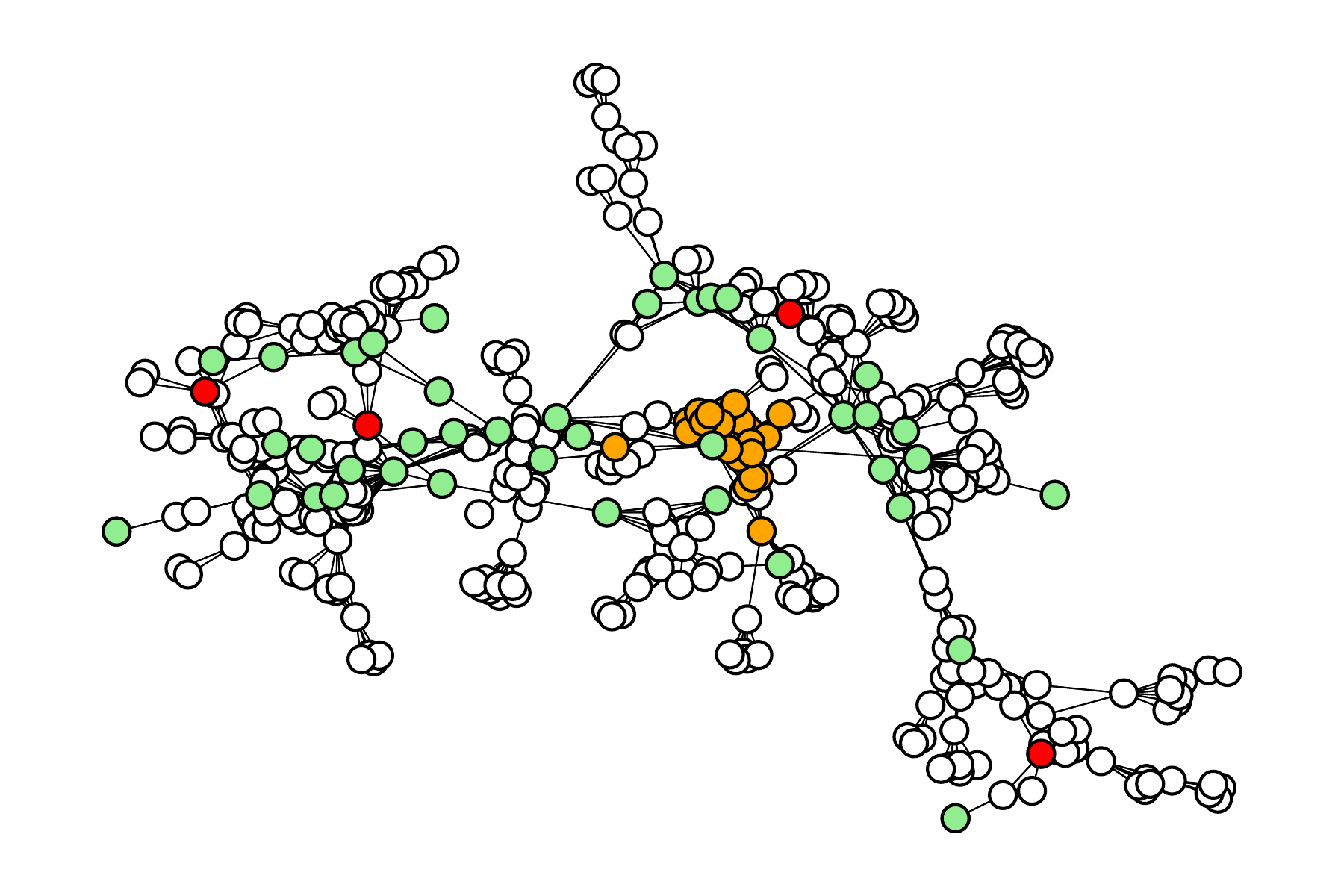}
        \caption{4-relaxed resolving set}
    \end{subfigure}
    \caption{Resolving sets obtained by Algorithm~\ref{algo} on \textit{co-authorship} graph. {\color{red}Red}: vertices belonging to the relaxed resolving set found by Algorithm~\ref{algo}. {\color{green}Green}: vertices with a unique identification vector. {\color{orange}Orange}: vertices belonging to the largest equivalent class of non-resolved vertices.}
    \label{fig:visualization_coauthorship}
\end{figure}

\section{Two-step Localization}
\label{section:twoStepsLocaalization}

 We motivated in the introduction the relaxation of the metric dimension as a trade-off between the savings in the number of sensors at the price of a reduced precision in the localization of source. The previous section shows that in practice, the number of potential suspects (i.e., the size~$\alpha$ of the largest equivalence class defined in Section~\ref{sec:metrics_used}) remains small for a wide range of the relaxation parameter $k$. In this section, we further explore this observation by proposing a two-step, passive-active strategy to locate a target and its benefits over the single-step, passive strategy. This two-step strategy is only applicable if the source localization does not change and the source signal is persistent over time, and is needed only if the exact localization of the source is required.

  Suppose that our goal is to find a target $u \in V$ on a graph $G = (V,E)$ in two steps. The first step is passive and simply consists in choosing a set $\hS_{k}^1$ of sensors, such that $\hS_{k}^1$ is a $k$-relaxed resolving set of the graph $G$ (in our experiments, we obtain $\hS_{k}^1$ by Algorithm~\ref{algo}). If the target~$u$ has a unique identification vector with respect to this set $S$, then we succeed at finding its location. But, as we saw in the previous section, if $k > 0$, a quite large fraction of the vertices may remain non uniquely identified by these fixed sensors. Therefore, the target $u$ will likely not be identified in this first step, and will need to be identified among the potential remaining candidate vertices. Those are the vertices having the same identification vector with respect to $\hS_{k}^1$ as $u$, that is, 
  \begin{align*}
   [u]_{ \hS_{k}^1 } \weq \left\{ v \in V \colon \Phi\left( v,\hS_{k}^1 \right) = \Phi \left( u,\hS_{k}^1 \right) \right\}.
  \end{align*} 
  The second step is an active step, which adds new sensors $\hS_{k,u}^{2}$ to resolve all the vertices belonging to this set of potential candidate vertices. As the goal is to minimize the number of sensors, we compute~$\hS_{k,u}^{2}$ in a greedy manner, similar to the computation of the relaxed metric dimension by Algorithm~\ref{algo}. We report the smallest number of sensors needed in the worst-case scenario, namely 
\begin{align*}
  q^*_k(G) \weq \max_{ u \in V } \left| \hS_{k}^{1} \cup \hS_{k,u}^{2} \right|. 
\end{align*}

Notice that $q^*_k(G) =  \left| \hS_{k}^{1} \right| + \max_{ u \in V } \left| \hS_{k,u}^{2} \right|$. Hence, when $k=0$ we do not need any additional sensor, and thus $q^*_k(G) = \md_0(G)$. Moreover, when $k \ge D_{\max}(G)$ where $D_{\max}(G)$ is the diameter of $G$, we have $\hS_{k}^{1} = \emptyset$. Because $[u]_{\emptyset} = V$ for any vertex $u$, the set $\hS_{k,u}^{2}$ must resolve all vertices, and hence $q^*_k(G) = \md_0(G)$ for $k \ge D_{\max}(G)$. 
Between these two extremes, $q^*$ is expected to decrease with $k$ until reaching a minimum value, after which it increases again. This behavior is empirically observed across the random graph models in Figure~\ref{fig:twoStepGame_syntheticGraphs}, as well as for real-world graphs in Figure~\ref{fig:twoStepGame_realGraphs}. The savings in terms of sensor resources can be drastic, leading to very small values of $q^*_k(G)$, if $k$ is selected in the good range of intermediate values avoiding between the first, passive phase and the second, active phase.

\begin{figure}[!ht]
 \centering
 \begin{subfigure}{0.32\textwidth}
  \includegraphics[width=\linewidth]{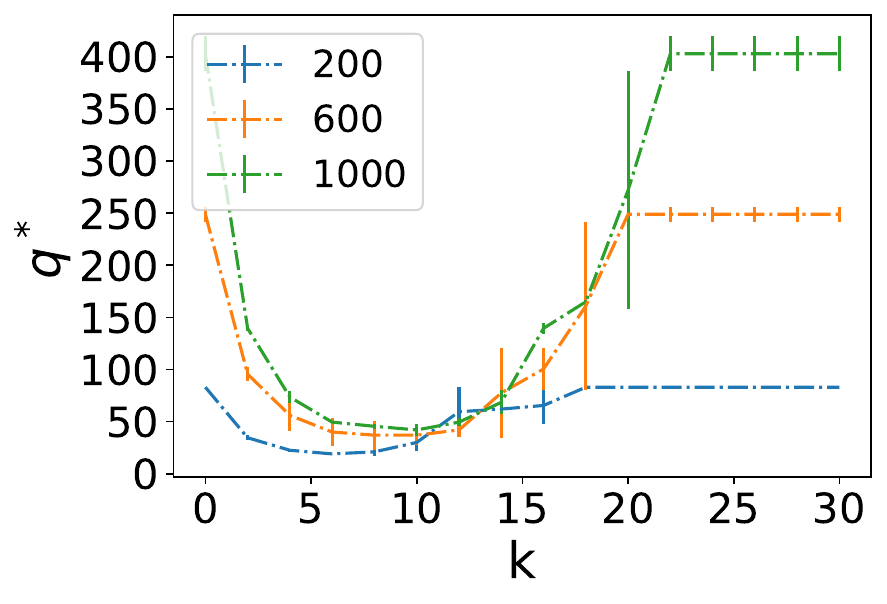}
  \caption{\Barabasi-Albert}
 \end{subfigure}
 \hfil
 \begin{subfigure}{0.32\textwidth}
  \includegraphics[width=\linewidth]{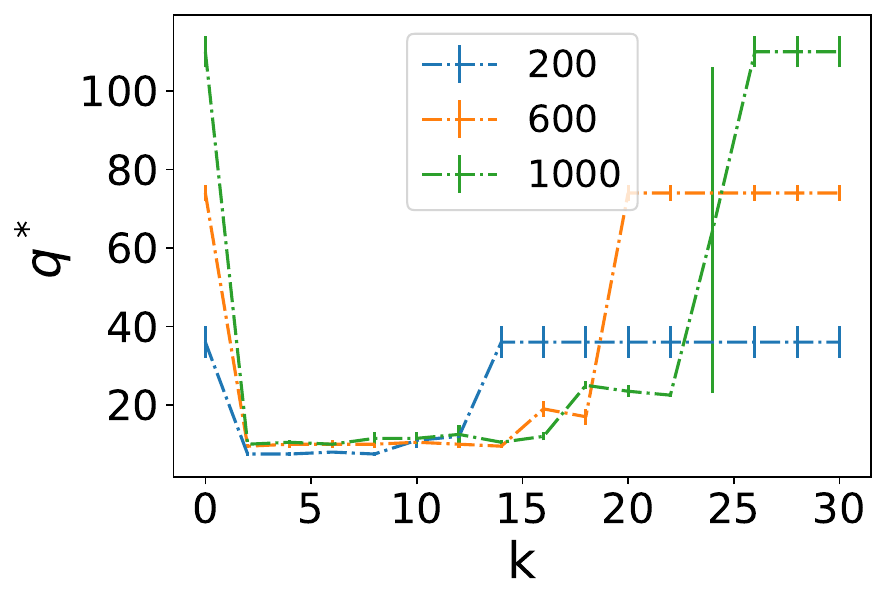}
  \caption{Random geometric graph}
 \end{subfigure}
 \caption{Value of $q^*_k(G)$ as a function of $k$ on \Barabasi-Albert and random geometric graphs. Error bars show the standard deviation over 10 runs.}
 \label{fig:twoStepGame_syntheticGraphs}
\end{figure}

\begin{figure}[!ht]
 \centering
 \begin{subfigure}{0.32\textwidth}
  \includegraphics[width=\linewidth]{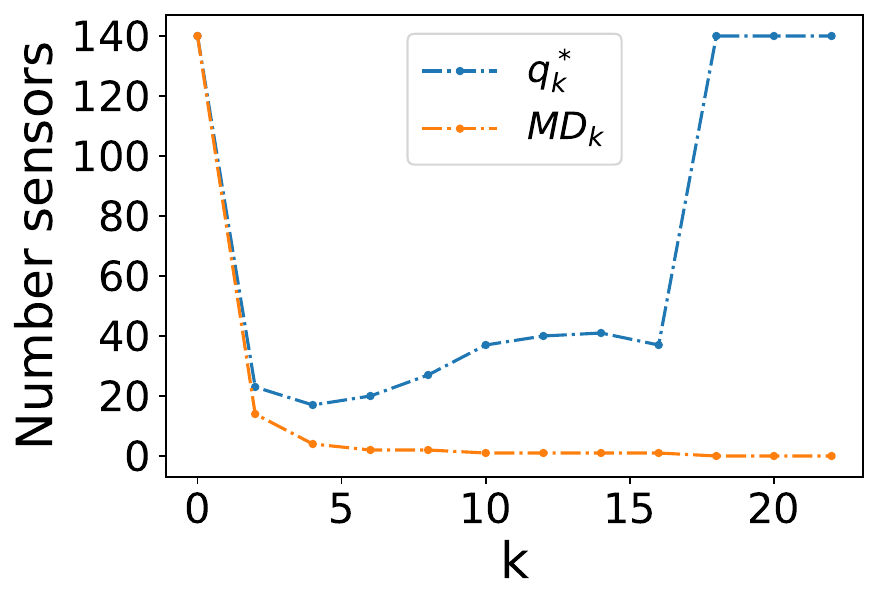}
  \caption{\textit{co-authors}}
 \end{subfigure}
 \hfill
 \begin{subfigure}{0.32\textwidth}
  \includegraphics[width=\linewidth]{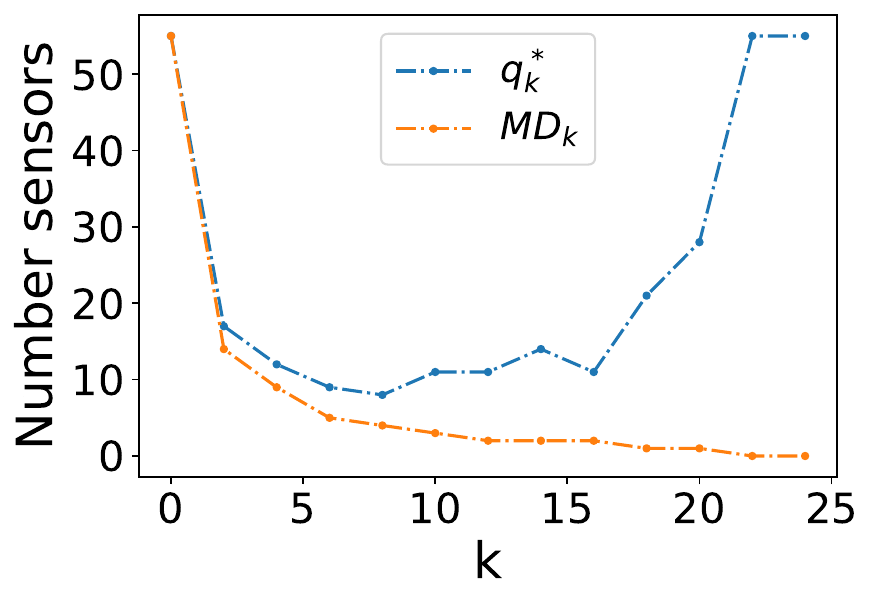}
  \caption{\textit{Copenhagen-calls}}
 \end{subfigure}
 \hfill
 \begin{subfigure}{0.32\textwidth}
  \includegraphics[width=\linewidth]{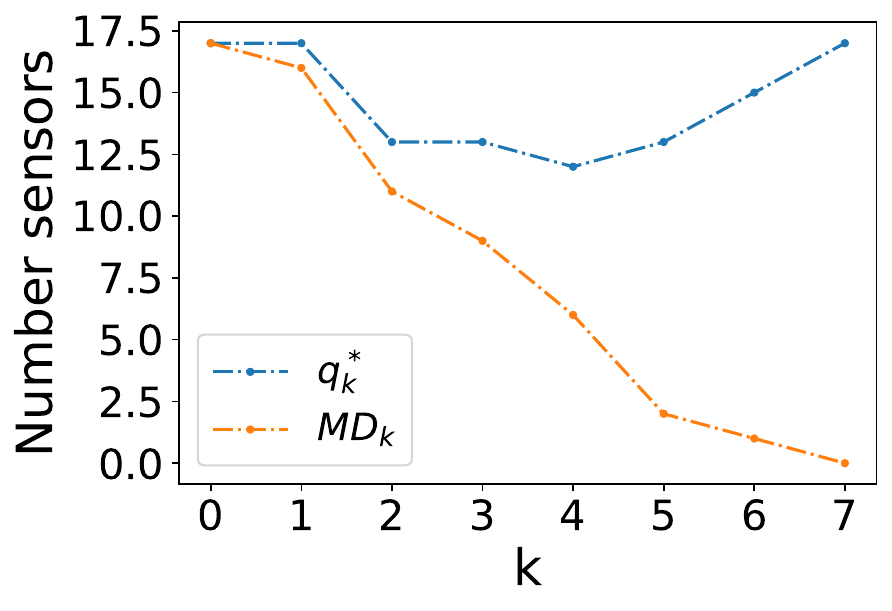}
  \caption{\textit{Copenhagen-friends}}
 \end{subfigure}
    \caption{Value of $q^*_k(G)$ (blue) and of the relaxed metric dimension $\md_k$ (orange) as a function of $k$ on three real graphs. }
    \label{fig:twoStepGame_realGraphs}
\end{figure}

\section{Overview of the Proofs}
\label{section:proof_overview}

\subsection{Proofs for Arbitrary Trees}

\subsubsection{Proof of Theorem~\ref{thm:odd_dist}}
\label{sec:prf_odd_dist} This section provides the proof of Theorem~\ref{thm:odd_dist}.

\begin{proof}[Proof of Theorem~\ref{thm:odd_dist}]
(i) Clearly, $\md_{2r}(T) \geq \md_{2r+1}(T)$. Indeed, let $R$ be any $i$-relaxed resolving set of a graph. Then $R$ is also an $j$-relaxed resolving set for $j \geq i$, because $R$ distinguishes all vertices at distance larger than $i$, which include all vertices at at distance larger than $j \geq i$.\\

(ii) Conversely, we show that $\md_{2r}(T) \leq \md_{2r+1}(T)$ by proving that on a tree, any $(2r+1)$-relaxed resolving set is also a $2r$-relaxed resolving set. To $(2r+1)$-resolve a vertex $u$, we need a resolving set that distinguishes $u$ at least from all vertices $w$ with $d(u,w) > 2r+1$. Now, consider a $2r$-resolving set, which additionally needs to distinguish $u$ from all vertices exactly at a distance $2r+1$. We will now show that any vertex $v$ in the $(2r+1)$-resolving set also resolves $u$ from any vertex $w$ located at an odd distance $d(u,w)$.\\
Let $u$ and $w$ be two vertices at an odd distance from each other, and let $v$ be an arbitrary vertex in the tree. Let $c$ be the vertex on the shortest path between $u$ and $v$, which is the closest to $v$ among all vertices on the path $u-w$ (this vertex $c$ is unique since the graph is a tree. We note that $c = u$ or $c = w$ is possible). We can then split the shortest path $v-u$ from $v$ to $u$ into two segments $v-c$ and $c-u$. Similarly, we can decompose the shortest path $v-w$ between $v$ and $w$ into the two segments $v-c$ and $c-w$. Now, since $d(u,w)$ is odd, $c$ cannot lie precisely at the midpoint of the path $u-w$ and therefore $d(c,u) \neq d(c,w)$, which in turn implies that  $d(v,c) + d(c,u) = d(v,u) \neq  d(v,w) = d(v.c) + d(c,w)$. This means that no vertex can be at the same distance from two vertices that are at an odd distance from each other. Consequently, no pair of vertices at an odd distance from each other can share the same identification vector. This implies that $\md_{2r}(T) \leq \md_{2r + 1}(T)$. \\

Combining (i) and (ii) proves the result.

\end{proof}

\subsubsection{Proof of Theorem~\ref{thm:k_metric_dim_tree}}
\label{sec:prf_k_metric_dim_tree}

The proof of Theorem~\ref{thm:k_metric_dim_tree} uses two key lemmas. The first lemma, proven in Appendix \ref{sec:prf_lm_lower_bound_dim}, provides a lower bound on the metric dimension of any graph~$G$.

\begin{lemma}
\label{lm:lower_bound_dim2}
Let $G$ be a connected graph and $r$ a nonnegative integer such that $\stem_r(G)$ is not a line graph. Then
\begin{align*}
 \md_{2r}(G) \wge \sigma_r(G) - \ex_r(G).
\end{align*}
\end{lemma}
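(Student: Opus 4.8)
The plan is to show that any $2r$-relaxed resolving set $S$ of $G$ must have cardinality at least $\sigma_r(G) - \ex_r(G)$ by exhibiting, inside the stemmed graph $\stem_r(G)$, a family of "twin-like" leaf structures that force many sensors to lie close to them. Concretely, I would argue that for each exterior major vertex $v$ of $\stem_r(G)$ with adjacent leaf paths $P_1, \dots, P_t$ (so $t \ge 1$), the set $S$ must contain at least $t-1$ vertices that are "assigned" to the leaves of $P_1,\dots,P_t$, and moreover these assignments are disjoint across distinct exterior major vertices. Summing $t-1$ over all exterior major vertices of $\stem_r(G)$ and adding back the contribution of leaves not covered this way yields exactly $\sum (\text{leaves at }v) - (\text{number of exterior major vertices}) = \sigma_r(G) - \ex_r(G)$, which is the classical counting identity behind Equation~\eqref{eq:md_tree}.

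The key step is to lift the distinguishing obstruction from $\stem_r(G)$ back up to $G$. Let $\ell$ be a leaf of $\stem_r(G)$; by definition of stemming, $\ell$ had, in $G$, a pendant subtree of "height at most $r$" hanging below it (a subtree all of whose vertices get pruned within $r$ rounds). The crucial observation is: if $\ell$ and $\ell'$ are two leaves of $\stem_r(G)$ that are siblings on leaf paths attached to the same exterior major vertex $v$ of $\stem_r(G)$, then in $G$ one can find two vertices $x, x'$ — one in the pruned subtree below $\ell$ at depth exactly $r$, the other below $\ell'$ at depth exactly $r$ (padding with the leaf path length so that the two branch points are reached symmetrically) — such that $d_G(x,x') \ge 2r + 1$ (they lie in different branches below $v$, each already $\ge r$ away from its leaf-path endpoint) while $d_G(w, x) = d_G(w, x')$ for every vertex $w$ of $G$ that is not itself in one of these two pendant subtrees. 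Hence to $2r$-resolve the pair $(x, x')$, the set $S$ must contain a vertex inside one of the two pendant subtrees below $\ell$ or $\ell'$; iterating over all but one of the $t$ leaf paths at $v$ forces $t - 1$ distinct sensors, all confined to pendant subtrees that are disjoint for different $v$.

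The main obstacle is making the distance bookkeeping precise: I must choose the witness vertices $x, x'$ at the right depth so that (a) their mutual distance genuinely exceeds $2r$ — this requires the leaf paths in $\stem_r(G)$ plus the $r$ pruned layers to combine to length $> 2r$, which is where the hypothesis $2r < \diam$ and the structure of stemming enter — and (b) they remain a "true twin pair" with respect to all outside vertices, so that only a sensor inside their pendant subtrees can separate them. A clean way to handle (b) is to note that any vertex $w$ outside the two pendant subtrees reaches both $x$ and $x'$ through the common ancestor $v$ (or through the common leaf-path endpoint), so $d_G(w,x) = d_G(w,v) + d_G(v,x)$ and symmetrically, and one arranges $d_G(v,x) = d_G(v,x')$ by construction. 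I would also need to treat the degenerate cases separately — exterior major vertices of $\stem_r(G)$ whose leaf paths are "short", and the possibility that a leaf path of $\stem_r(G)$ itself disappears under further structure — but these reduce to the same twin argument with smaller depths. The final count then follows by the same leaf-minus-exterior-major-vertex algebra used in the classical tree case.
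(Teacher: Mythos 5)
Your proposal is correct and follows essentially the same route as the paper's proof: for each exterior major vertex $v$ of $\stem_r(G)$, exhibit twin witnesses at depth $r$ inside the pruned branches hanging off $v$, show they are equidistant from every vertex outside those branches while being at distance at least $2r+2$ from each other, conclude that all but one branch per exterior major vertex must contain a sensor, and use disjointness of the branches to obtain the count $\sigma_r(G)-\ex_r(G)$. The only refinement worth making is that the forced sensor need only lie somewhere in the whole branch at $v$ (the leaf path together with its pruned appendages), not necessarily in the pruned pendant subtree below the leaf, since a vertex $w$ on the leaf path between $v$ and $\ell$ already violates your identity $d(w,x)=d(w,v)+d(v,x)$; this does not affect the disjointness across leaf paths and exterior major vertices, so the final count is unchanged.
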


The proof of Theorem~\ref{thm:k_metric_dim_tree} follows if we can show that this lower bound is tight when $G$ is a tree. 
To do so, we will use the following lemma. 

\begin{lemma}
  \label{upper_bound_dim2}  
  Let $T'$ be a tree containing a sub-tree $T$ that is not a line graph, and let $m = \md(T)$. If all vertices in $T' \setminus T$ are at most at distance $r$ from a vertex in $T$, that is, if
  \begin{equation}
      \forall v' \in T' \setminus T, \  \exists v \in T \colon d(v',v) \leq r,
  \end{equation}
  then there exists a $2r$-relaxed resolving set of size $m$ for $T'$.
\end{lemma}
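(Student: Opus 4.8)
The plan is to start from a resolving set $R$ of the subtree $T$ with $|R| = m = \md(T)$, and to "lift" it to a $2r$-relaxed resolving set $R'$ of $T'$ of the same size by relocating each sensor of $R$ outward into the attached pendant structure $T' \setminus T$. The guiding intuition is that, since $T$ is not a line graph, a resolving set of $T$ consists of all-but-one of the leaves hanging off each exterior major vertex (by Equation~\eqref{eq:md_tree} and the structure theory of~\cite{slater_leaves_1975}); when we pass to $T'$, each such leaf of $T$ may have grown a subtree of depth at most $r$, and we should push the sensor down to a deepest vertex of that subtree. I would first set up notation: root $T'$ conceptually at an interior vertex of $T$, so that "outward" is well-defined, and for each leaf or branch where a sensor of $R$ sits, let the replacement sensor $s'$ be a vertex of $T' \setminus T$ (or the same vertex, if nothing was attached there) that is a descendant of the original location, chosen as deep as possible; the depth hypothesis guarantees all of $T' \setminus T$ lies within distance $r$ of $T$, which I will use to bound how much distance information is lost.

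The core verification has two parts. \textbf{(a) Vertices of $T$ are still resolved exactly.} For $u,v \in T$ and a relocated sensor $s'$ sitting below original sensor $s \in R$, the shortest path from $u$ to $s'$ passes through the vertex $b \in T$ where the pendant subtree containing $s'$ attaches (in fact through $s$ if $s$ was interior to that pendant path), so $d_{T'}(u,s') = d_T(u,b) + d_{T'}(b,s')$, and the additive constant $d_{T'}(b,s')$ is the same for $u$ and $v$. Hence $\Phi_{T'}(u,R') = \Phi_{T'}(v,R')$ forces $\Phi_T(u,R) = \Phi_T(v,R)$ (after subtracting the common offsets coordinatewise), which gives $u = v$ since $R$ resolves $T$. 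So any two distinct vertices of $T$ are distinguished by $R'$ — with no relaxation needed. \textbf{(b) Vertices involving $T' \setminus T$ are resolved up to distance $2r$.} Take $u \in T' \setminus T$ and $v \in T'$ arbitrary with $\Phi_{T'}(u,R') = \Phi_{T'}(v,R')$; I want $d_{T'}(u,v) \le 2r$. Let $a \in T$ be the attachment point of $u$'s pendant subtree, so $d_{T'}(u,a) \le r$. I would argue that $v$ must lie in the pendant subtree hanging at the same attachment vertex $a$ (or very near it): indeed, $a$ is on every geodesic from $u$ to any sensor $s' \notin$ that pendant subtree, and the "footprint" of $u$ seen by those far sensors equals that of $a$ shifted by $d_{T'}(u,a)$; matching this with $v$'s footprint, together with the fact that $R$ resolves $T$, pins $v$'s attachment point to be $a$ as well (or shows $v \in T$ contradicts part (a) unless $v=$ some specific vertex forcing small distance). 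Once $u$ and $v$ both hang within distance $r$ of the common vertex $a$, the triangle inequality gives $d_{T'}(u,v) \le d_{T'}(u,a) + d_{T'}(a,v) \le 2r$, as desired. A small separate case is when the pendant subtree at $a$ contains no sensor at all (this can happen for the single "free" leaf per exterior major vertex, or for leaves of $T$ that are not adjacent to exterior major vertices); there one must check that the sensors elsewhere, together with the ordinary-resolving property of $R$ on $T$, still confine the ambiguity to that one pendant subtree, again yielding distance $\le 2r$.

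The main obstacle I anticipate is part (b), specifically the bookkeeping needed to show that $\Phi_{T'}(u,R') = \Phi_{T'}(v,R')$ really does force $u$ and $v$ into the same (or adjacent) pendant region: one has to handle the case where the relocation of a sensor $s \mapsto s'$ changes which vertex is the "branch point" relative to $u$, and the case where several sensors live inside one pendant subtree (so that this subtree is itself resolved internally and can contribute finer information). I would organize this by proving a clean structural claim first — roughly, "for every $w \in T'$, the vector $\Phi_{T'}(w,R')$ determines the attachment vertex $\pi(w) \in T$ of $w$ and the residual vector $\Phi_{T'}(w,R') - \Phi_T(\pi(w),R)$ has all coordinates in $\{0,\dots,r\}$" — and then the lemma follows because equal identification vectors force equal $\pi(\cdot)$ and hence both vertices are within distance $r$ of that common attachment vertex. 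Throughout, the hypothesis that $T$ is not a line graph is used to invoke the explicit leaf-based description of minimum resolving sets of trees, which is what makes the relocation well-defined.
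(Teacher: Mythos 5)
Your overall strategy---project every vertex of $T'$ onto its closest vertex in $T$, start from a minimum resolving set of $T$, and bound the residual ambiguity by $2r$---is the same as the paper's, and your part (a) is fine. (The relocation of sensors into $T'\setminus T$ is an unnecessary complication: since each relocated sensor $s'$ is a descendant of the original leaf $s$ of $T$, every distance to $s'$ measured from outside that pendant subtree is just $d(\cdot,s)+d(s,s')$, so nothing is gained; the paper simply keeps the sensors at the chosen leaves of $T$.) The genuine gap is in part (b), specifically in the ``clean structural claim'' you propose to organize it around: it is \emph{false} that $\Phi_{T'}(w,R')$ determines the attachment vertex $\pi(w)$. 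Along the one leaf path per exterior major vertex that carries no sensor, moving one step toward the free leaf increases \emph{every} coordinate of the identification vector by exactly $1$, so two vertices hanging below \emph{different} points of that path can share an identification vector. Concretely: let $T$ be a spider with center $w$ and three legs of length $2$, put sensors on two of the three leaves, let $u$ be the third (sensor-free) leaf and let $v\in T'\setminus T$ be a new child of $u$'s neighbour; then both $u$ and $v$ are at distance $4$ from each sensor, yet $\pi(u)\ne\pi(v)$. Hence equal identification vectors do not force equal $\pi(\cdot)$, and your fallback---that the ambiguity is ``confined to that one pendant subtree'' and closed by a triangle inequality through a common attachment vertex---does not cover this case either, because $u$ and $v$ project to different vertices of $T$ and lie in different pendant regions.

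This failure mode is precisely the crux of the lemma, and it requires the argument of the paper's Lemma~\ref{lm:distinguish_same_me_leaf}: if $u_T$ and $v_T$ lie on the same sensor-free leaf path from an exterior major vertex $w$ and the identification vectors agree, then $d(w,u)=d(w,v)$; writing (with $u_T$ the projection closer to $w$) $d(w,u)=d(w,u_T)+d(u_T,u)$ and $d(w,v)=d(w,u_T)+d(u_T,v_T)+d(v_T,v)$ and cancelling gives $d(u,v)=2\,d(u_T,u)\le 2r$. This is a cancellation identity, not a triangle inequality through a common attachment point. To complete your plan you would also need the complementary fact (the paper's Lemmas~\ref{lm:condition_for_easy_distinction} and~\ref{lm:distinguish_separate_me_leaves}) that the shift ambiguity $\Phi(a,S)=\Phi(b,S)+\alpha\mathbf{1}_{|S|}$ can occur \emph{only} when $a$ and $b$ lie on the same sensor-free leaf path---this is where the hypothesis that $T$ is not a line graph and the explicit leaf-based description of its minimum resolving sets are actually used. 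With the determinacy claim weakened to this correct statement and the free-leaf-path case handled by the cancellation argument, your outline coincides with the paper's proof.
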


Lemma~\ref{upper_bound_dim2} bridges the gap between what is known about the classic metric dimension on trees and the $2r$-relaxed metric dimension. The idea of the proof is to take a tree $T$ that can be resolved with a resolving set $R$ of size $m$. Then, we consider a second tree $T'$ such that $T$ is a sub-tree of $T'$ and every vertex in $T'$ is either in $T$ or is at most $r$ steps away from a vertex in $T$. If we look back at the notion of stemming used earlier, the vertices in $\stem_r(T')$ form a subset of the vertices in $T$. Lemma~\ref{upper_bound_dim2} shows that $T'$ can be $2r$-relaxed resolved with $R$. This is done by checking the resolvability of vertices as a function of their position in $T'$ and of the closest vertex in $T$. The proof is similar to the proof of the regular metric dimension of trees in \cite{MD_for_tree}, albeit the analysis of the different cases is more tedious and is divided in several lemmas in Appendix~\ref{sec:upper_bound_dim2}. We can now proceed with the proof of Theorem~\ref{thm:k_metric_dim_tree}.

\begin{proof} [Proof of Theorem~\ref{thm:k_metric_dim_tree}]
By Lemma \ref{lm:lower_bound_dim2}, $\md_{2r}(T) \geq \sigma_r(T) - \ex_r(T)$. Let us now establish the reverse inequality. We first construct a resolving set $R$ for $\stem_r(T)$ as follows. If $\stem_r(T)$ is a line graph, one of its end-vertices forms a resolving set by itself, and $\md_{2r}(T) = 1$. Otherwise, for each exterior major vertex $u \in \stem_r(T)$, We include all of the leaves of $u$ but one. As a result, $R$ is of size $\sigma_r(T) - \ex_r(T)$. Moreover, $R$ is a resolving set of $\stem_r(T)$ (by \cite[Theorem~5]{MD_for_tree}, which was recalled before in Equation (\ref{eq:formula_md_tree})). Observe that any vertex $v \in T \setminus \stem_r(T)$ can be at most at distance $r$ from the $r$-stem of $T$ since only $r$~stemmings were applied. Therefore we apply Lemma~\ref{upper_bound_dim2} and establish that there exists a $2r$-relaxed resolving set of size $\sigma_r(T) - \ex_r(T)$ for the tree $T$. 
\end{proof}

\subsection{Proofs for Random Trees}
\label{sec:prf_randomTrees}

\subsubsection{Subtree properties and down-stemming}

 Consider a rooted tree $T$ and a vertex $v \in T$. We denote by $T_v$ the subtree of $T$ rooted at $v$ and oriented away from the root of $T$ (\textit{i.e.,} the subtree consisting of $v$ and all its descendants). A \textit{subtree property} $\cP$ is a property of a vertex $v \in T$ that depends only on the subtree $T_v$. The number of vertices in $T$ satisfying property $\cP$ is denoted by 
 \begin{align*}
   N_{\cP}(T) \weq | \{ v \colon T_v \in \cP \} |.
 \end{align*}
 The following known result relates the frequency of a subtree property $\cP$ in a \textit{conditioned} Galton-Watson tree to the corresponding probability of $\cP$ in an \textit{unconditioned} Galton-Watson tree with the same offspring distribution.

 \begin{theorem}[Theorem 1.3 in \cite{janson_fringe_tree_2016}] \label{thm:st_n_to_uncond_prob}
  Let $(\mathcal{T}^{GW}_n)_{n \in \N}$ be a sequence of critical Galton-Watson trees conditioned on having $n$ vertices, with offspring distribution $\xi$, where $\E\left[\xi\right] = 1$ and $\E\left[\xi^2\right] < \infty$. Let $\mathcal{F}$ be an unconditioned Galton-Watson tree with the same offspring distribution $\xi$. Then, for every subtree property $\cP$, we have 
  \begin{equation*}
     \frac{N_{\mathcal{P}}(T^{GW}_n)}{n} \xrightarrow{p} \mathbb{P}(\mathcal{F} \in \mathcal{P}),
    \end{equation*}
    as $n \to \infty$. 
\end{theorem}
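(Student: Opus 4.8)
The starting observation is that $N_{\mathcal{P}}(\mathcal{T}^{GW}_n)/n$ is precisely the \emph{quenched} fringe-tree probability: if $V$ is chosen uniformly at random among the $n$ vertices of $\mathcal{T}^{GW}_n$ (independently of everything else), then $N_{\mathcal{P}}(\mathcal{T}^{GW}_n)/n = \mathbb{P}\big( (\mathcal{T}^{GW}_n)_V \in \mathcal{P} \mid \mathcal{T}^{GW}_n \big)$. So the claim is a weak law of large numbers for the additive functional $T \mapsto \sum_{v} \mathbf{1}\{T_v \in \mathcal{P}\}$, and I would establish it by a first- and second-moment argument after truncating to fringe trees of bounded size. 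Concretely, fix $M$ and write $\mathcal{P} = \mathcal{P}_{\le M} \cup \mathcal{P}_{>M}$ according to whether the fringe tree has at most $M$ vertices. Since $\mathbb{P}(|\mathcal{F}| > M) \to 0$ as $M \to \infty$, it suffices to prove the statement for each fixed finite tree shape $t$ (taking $\mathcal{P} = \{t\}$ and then summing the finitely many shapes with $|t| \le M$), together with the uniform tail bound $\E\big[ N_{\{|T_v| > M\}}(\mathcal{T}^{GW}_n) \big]/n \to \mathbb{P}(|\mathcal{F}| > M)$, which by Markov's inequality makes the $\mathcal{P}_{>M}$ contribution negligible with high probability; a standard $\varepsilon$--$M$ argument then glues the pieces together.

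\textbf{First moment.} For a fixed finite tree $t$ with $k = |t|$ vertices, I would use the usual "attach $t$ at a prescribed vertex" decomposition of weighted Galton--Watson trees: contracting the fringe subtree $T_v \cong t$ of a marked vertex $v$ to a single vertex yields a weighted Galton--Watson tree on $n - k + 1$ vertices, so that the offspring-weighted count of $n$-vertex trees with $T_v \cong t$ factors as a constant depending only on $t$ times the partition function at $n-k+1$. Dividing by the partition function $\mathbb{P}(|\mathcal{T}^{GW}| = n)$ and invoking the local limit theorem for critical Galton--Watson trees, $\mathbb{P}(|\mathcal{T}^{GW}| = n) \sim c\, n^{-3/2}$ along the span of $\xi$ (valid because $\E[\xi] = 1$ and $\E[\xi^2] < \infty$), the ratio $\mathbb{P}(|\mathcal{T}^{GW}| = n-k+1)/\mathbb{P}(|\mathcal{T}^{GW}| = n) \to 1$, which gives $\E[N_t(\mathcal{T}^{GW}_n)]/n \to \mathbb{P}(\mathcal{F} \cong t)$. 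This is exactly the annealed fringe-tree convergence of Aldous.

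\textbf{Second moment.} To upgrade convergence of the mean to convergence in probability, take two independent uniform vertices $V, W$; then $\E[N_t(\mathcal{T}^{GW}_n)^2]/n^2 = \mathbb{P}\big( (\mathcal{T}^{GW}_n)_V \cong t,\ (\mathcal{T}^{GW}_n)_W \cong t \big)$ up to an $O(1/n)$ diagonal term. Once both fringe trees are conditioned to have the fixed size $k$, they are vertex-disjoint with probability $1 - O(1/n)$, and on that event the same attachment computation with two prescribed disjoint locations reduces the weighted count to a Galton--Watson tree on $n - 2k + 2$ vertices; the local limit theorem again sends the ratio of partition functions to $1$, so $\mathbb{P}\big( (\mathcal{T}^{GW}_n)_V \cong t, (\mathcal{T}^{GW}_n)_W \cong t \big) \to \mathbb{P}(\mathcal{F} \cong t)^2$. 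Hence $\mathrm{Var}\big( N_t(\mathcal{T}^{GW}_n) \big) = o(n^2)$, and Chebyshev's inequality gives $N_t(\mathcal{T}^{GW}_n)/n \xrightarrow{p} \mathbb{P}(\mathcal{F} \cong t)$. Summing over the finitely many shapes with $|t| \le M$ and combining with the truncation step completes the argument. (Alternatively, one could simply invoke Janson's additive-functional machinery for conditioned Galton--Watson trees, of which this is a special case, but the moment method above is self-contained.)

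\textbf{Main obstacle.} The delicate part is the second-moment estimate: one needs the local limit theorem $\mathbb{P}(|\mathcal{T}^{GW}| = n) \sim c\, n^{-3/2}$ with enough uniformity that ratios at arguments shifted by $O(1)$ tend to $1$, and one must carefully control the "collision" event in which the two random fringe trees overlap (one nested in the other, or sharing a common ancestor) and show it contributes negligibly. Both points are where the hypothesis $\E[\xi^2] < \infty$ is used, and dealing cleanly with the periodicity/span of $\xi$ is the main bookkeeping burden; by contrast, the first-moment computation and the truncation are comparatively routine.
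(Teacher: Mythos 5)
The paper does not prove this statement at all: it is imported verbatim as Theorem~1.3 of \cite{janson_fringe_tree_2016}, so there is no internal proof to compare against. Your moment-method sketch is nonetheless a correct and essentially complete outline of the classical proof under the hypothesis $\E[\xi^2]<\infty$ — it is, in substance, the argument of \cite[Lemma~9]{aldous_fringe_tree_1991}, which the paper itself points to in the introduction as the proof valid precisely under this second-moment assumption. All the ingredients you list are the right ones: the identity $N_{\cP}(\cT^{GW}_n)/n=\P((\cT^{GW}_n)_V\in\cP\mid\cT^{GW}_n)$, truncation to fringe trees of size at most $M$ (using $N_{>M}=n-N_{\le M}$ and Markov to kill the tail), the contraction/weight-factorization identity reducing $\E N_t(\cT^{GW}_n)$ to a ratio of partition functions, the local limit theorem $\P(|\cT^{GW}|=n)\sim c\,n^{-3/2}$ along the span of $\xi$, and the observation that for fixed $|t|=k$ two independent uniform fringe copies collide with probability $O(k/n)$, giving $\mathrm{Var}(N_t)=o(n^2)$ and Chebyshev. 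You also correctly identify the bookkeeping pain points (span/periodicity, uniformity of the local limit theorem under $O(1)$ shifts, control of the nesting event). The only difference from the cited source is one of generality: Janson's Theorem~1.3 dispenses with $\E[\xi^2]<\infty$ by a different technique (size-biasing/spine arguments rather than a second-moment bound), whereas your argument genuinely needs the second moment for the local limit theorem and the variance estimate; since the theorem as stated in this paper assumes $\E[\xi^2]<\infty$, your route suffices for the result actually used here.
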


 Let $T^{GW}_n$ be a critical Galton-Watson tree with $n$ vertices. Because $T^{GW}_n$ is a tree, we can apply Theorem~\ref{thm:k_metric_dim_tree} to express its $2r$-relaxed metric dimension as a function of the number of its leaves and exterior major vertices in the $r$-stem of $T_n$. 
 The proof strategy follows a similar approach to that in~\cite{komjathy_Odor_GW_MD}, where the authors define two subtree properties to approximate the number of leaves and exterior major vertices in~$T^{GW}_n$, respectively. By computing the probability of these properties occurring in an unconditioned Galton-Watson tree, they establish the asymptotics of the (non-relaxed) metric dimension of Galton-Watson trees. 
 
 In our case, we adapt the subtree properties to approximate the number of leaves and exterior major vertices of $\stem_k( T^{GW}_n)$, respectively. However, additional care is required, as the iterated stemming operations may remove the root of the original tree~$T_n^{GW}$, and the subtree properties crucially depend on the presence of the root. To address this, we modify the stemming operation to ensure that the root of $T_n^{GW}$ is always preserved. We call this operation \new{down-stemming}. In Lemma~\ref{lm:st_probs_close_to_md} below, we show that this adjustment on the stemming operation affects the relaxed metric dimension by at most one vertex. 

\begin{definition}
 Let $T$ be a rooted tree. The \new{down-stem} of $T$, denoted as $\ds(T)$, is the unique subgraph of $T$ induced by all vertices of degree strictly larger than one and the root (regardless of its degree). Furthermore, let $\ds_r(T)$ denote the $r$-fold application of the down-stemming operation on $T$ and let $\ds_0(T) = T$. 
\end{definition}

\subsubsection{Subtree properties for the relaxed metric dimension}

We now define two subtree properties with which we approximate the number of leaves and exterior major vertices in the $r$-stemmed tree. Let the \textit{height} of a vertex be its distance to the root and let the height of a tree be the maximal distance to the root from any other vertex (The root has thus height zero).

\begin{definition} 
 \label{def:subtree_L_E_prop}
 Let $\mathcal{P}_r^L$ be the subtree property that the subtree is of height $r$. Furthermore, let $\mathcal{P}^E_r$ be the subtree property of a tree $T_v$ such that in $\text{Down-Stem}_r(T_v)$ the root $v$ has degree at least two and at least one of its subtrees is a line graph to a leaf (a subtree having a single vertex is considered to be a line).
\end{definition}
See Figure~\ref{fig:st_prop_def} for an example.

\begin{figure}[!ht]
    \centering
    \begin{subfigure}{0.30\textwidth}
    \centering
        \includegraphics[width=\linewidth]{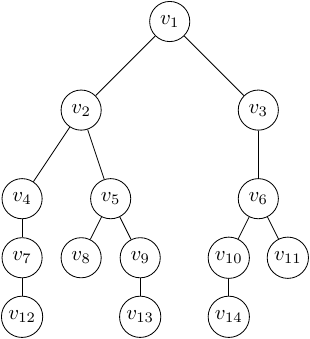}
        \caption{Tree $T$}
    \label{fig:st_prop_def2_a}
    \end{subfigure}
    \hfil
    \begin{subfigure}{0.30\textwidth}
    \centering
        \includegraphics[width=0.75\linewidth]{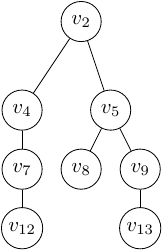}
        \caption{Subtree $T_{v_2} \in \mathcal{P}^E_2$}
    \label{fig:st_prop_def2_b}
    \end{subfigure}
    \hfil
    \begin{subfigure}{0.30\textwidth}
    \centering
        \includegraphics[width=0.5\linewidth]{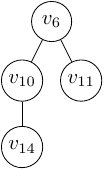}
        \caption{Subtree $T_{v_6} \in \mathcal{P}^L_2$}
    \label{fig:st_prop_def2_c}
    \end{subfigure}
    \caption{Illustration of the two properties $\cP^E_2$ and $\cP^L_2$. 
    Figure~\ref{fig:st_prop_def2_a} shows a tree $T$, rooted at $v_1$. 
    Figure~\ref{fig:st_prop_def2_b} shows $T_{v_2}$, the subtree of $T$ rooted at $v_2$. Because $\ds_2(T_{v_2})$ is composed of the vertices $v_2$, $v_4$ and $v_5$, the root $v_2$ of $\ds_2(T_{v_2})$ has degree 2. Hence $T_{v_2} \in \mathcal{P}^E_2$. 
    Figure~\ref{fig:st_prop_def2_c} shows $T_{v_6}$, the subtree of $T$ rooted at $v_6$. This subtree has height 2, and hence $T_{v_6} \in \mathcal{P}^L_2$. Note that the only subtree of $T$ verifying $\cP_2^E$ is $T_{v_2}$ while the subtrees of $T$ verifying $\cP_2^L$ are $T_{v_4}$, $T_{v_5}$ and $T_{v_6}$. 
    }
    \label{fig:st_prop_def}
    \Description[Tree with example of subtrees satisfying the properties $\mathcal{P}^E_2$ and $\mathcal{P}^L_2$]{The first figure shows an arbitrary tree $T$, the second figure shows a subtree $T_v$ of $T$ where the root of $\ds_2(T_v)$ has two children, one of which is a leaf. Hence, it satisfies $\mathcal{P}^E_2$. The third figure shows a subtree $T_u$ of $T$ where the root of $\ds_2(T_u)$ is a leaf. Hence, it satisfies $\mathcal{P}^L_2$}
\end{figure}

 We denote by $N^L_r(T)$ (resp., by $N^E_r(T)$) the number of vertices of a tree $T$ satisfying the properties $\cP_r^L$ (resp., $\cP^E_r$). 
However, the following lemma demonstrates that counting the occurrences of $\cP^L_r$ and $\cP_r^E$ provides a close approximation to the true relaxed metric dimension of $T$. For an illustration of scenarios where the subtree properties may overestimate or underestimate the count of leaves or exterior major vertices, we refer the reader to Figure~\ref{fig:over_under_counting} in Appendix~\ref{appendix:st_probs_close_to_md}.

\begin{restatable}{lemma}{StProbsVsMD} \label{lm:st_probs_close_to_md}
    For any tree $T$ and nonnegative integer $r$, 
    \begin{equation*}
        \md_{2r}(T) \weq N^L_r(T) - N^E_r(T) + \varepsilon,
    \end{equation*}
    where $\varepsilon \in \{ -1, 0, +1\}$.
\end{restatable}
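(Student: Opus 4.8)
The plan is to compare the exact expression for $\md_{2r}(T)$ from Theorem~\ref{thm:k_metric_dim_tree}, namely $\sigma_r(T) - \ex_r(T)$ (ignoring for a moment the line-graph edge cases), with the fringe-counts $N^L_r(T) - N^E_r(T)$, and to show that each of the two differences $|\sigma_r(T) - N^L_r(T)|$ and $|\ex_r(T) - N^E_r(T)|$ is controlled up to the single root vertex, so that the net error is at most one. The central issue is that $\stem_r$ and $\ds_r$ differ only in their treatment of the root: $\ds_r$ always keeps the root, whereas $\stem_r$ may delete it. Since the subtree properties $\cP^L_r$ and $\cP^E_r$ are defined via $\ds_r$ of a fringe tree $T_v$, the plan is to first establish a clean dictionary between (a) "$v$ is a leaf of $\stem_r(T)$" and "$T_v \in \cP^L_r$", and between (b) "$v$ is an exterior major vertex of $\stem_r(T)$" and "$T_v \in \cP^E_r$", and then to quantify precisely where this dictionary fails.

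First I would prove the leaf correspondence. A vertex $v$ survives $r$ rounds of stemming iff every vertex it would need to "lean on" survives, and one checks inductively that $v \in \stem_r(T)$ iff the fringe tree $T_v$ has height at least $r$ \emph{or} $v$ has an ancestor-side branch keeping it alive; restricting to whether $v$ is a \emph{leaf} of $\stem_r(T)$, I would show this is governed by $T_v$ having height exactly $r$ together with the behaviour above $v$. The key simplification is that in the down-stem $\ds_r(T_v)$, the root is retained unconditionally, so $T_v \in \cP^L_r$ captures exactly the "downward" condition (height exactly $r$). The discrepancy between $N^L_r(T)$ and $\sigma_r(T)$ therefore comes from vertices $v$ whose downward structure makes them look like a stemmed leaf but whose upward structure in $T$ changes the verdict — and I would argue that the only genuinely ambiguous vertex is the one that becomes (or fails to become) the root of $\stem_r(T)$, contributing at most $\pm 1$. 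Figure~\ref{fig:over_under_counting} (referenced in the statement) is presumably the catalogue of these boundary scenarios, so I would organize the argument around enumerating them: (i) $\stem_r(T)$ empty, (ii) $\stem_r(T)$ a single vertex, (iii) $\stem_r(T)$ a line graph, (iv) the generic case; in each, tally the over/under-count.

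Next I would handle the exterior-major-vertex correspondence analogously. Here $v$ is an exterior major vertex of $\stem_r(T)$ iff in $\stem_r(T)$ it has degree at least $3$ wait — degree at least $3$ in the stemmed tree, but "major" in the statement of $\cP^E_r$ requires degree at least two \emph{in the down-stem} because the down-stem has already discarded the need to count the edge going "up" toward the (possibly pruned) original root as one of the three. So I would show: $v$ is an exterior major vertex of $\stem_r(T)$ iff $\ds_r(T_v)$ has root-degree $\ge 2$ and at least one root-subtree is a line to a leaf — this is exactly $\cP^E_r$ — \emph{except} when $v$ lies on the path to the original root, where one upward branch is invisible to the fringe tree. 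Again the number of affected vertices is at most one (the special vertex nearest the root), so $|N^E_r(T) - \ex_r(T)| \le 1$. Combining, $\md_{2r}(T) = \sigma_r(T) - \ex_r(T) = \big(N^L_r(T) \pm 1\big) - \big(N^E_r(T) \pm 1\big)$, and a short case check — using the fact that the two $\pm 1$ errors occur at the \emph{same} boundary vertex or in compensating directions — collapses the total to a single $\varepsilon \in \{-1,0,+1\}$ rather than $\{-2,\dots,+2\}$; alternatively one absorbs the line-graph cases of Theorem~\ref{thm:k_metric_dim_tree} (where $\md_{2r}(T)=1$) separately, and there the fringe counts are checked directly.

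The main obstacle I anticipate is the bookkeeping at the root: showing that the two sources of error (leaf miscount and exterior-major miscount) never combine to give a total error of $2$, which requires a careful argument that whenever the leaf count is off by $+1$ the exterior-major count is off by $0$ or $+1$ in a way that cancels, and so on through all the boundary configurations. This is where a clean invariant — e.g. tracking the single vertex $\rho_r$ that is the would-be root of $\stem_r(T)$ and classifying its status (absent / isolated / line-endpoint / interior / major) — is essential, and it is the step I would expect to be the most delicate and the longest. The downward correspondences themselves (height-exactly-$r$ for leaves, down-stem-degree $\ge 2$ with a line-subtree for exterior major vertices) are straightforward inductions on $r$ and should not cause trouble.
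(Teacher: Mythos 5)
Your proposal follows essentially the same route as the paper's proof: translate leaves and exterior major vertices of $\stem_r(T)$ into the fringe-tree properties $\cP^L_r$ and $\cP^E_r$, localize every discrepancy at the root (the paper additionally routes the exterior-major comparison through $\ds_r(T)$, using that $\ds_r(T)-\stem_r(T)$ is a leaf path), bound each miscount by one, and then verify by case analysis that the individual errors can never compound to $\pm 2$ — which is exactly the paper's Claims 1 through 5. The only refinement the paper adds is that the leaf discrepancy is one-sided ($\sigma_r(T)-N^L_r(T)\in\{0,+1\}$, an undercount only, occurring when the root of the stem is a leaf), which slightly simplifies the final mutual-exclusion check you correctly flag as the delicate step.
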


Asymptotically, the additional term $\varepsilon$ appearing in Lemma~\ref{lm:st_probs_close_to_md} becomes negligible. Therefore, combined with Theorem~\ref{thm:st_n_to_uncond_prob}, Lemma~\ref{lm:st_probs_close_to_md} shows that
\begin{align*}
   \frac{\md_{2r} \left( \cT^{GW}_n \right) }{n} \xrightarrow{p} \mathbb{P}\left( \mathcal{F}  \in \mathcal{P}^L_r \right) -  \mathbb{P}\left( \cF \in \cP^E_r \right),
\end{align*}
where $\cF$ is an unconditioned Galton-Watson tree with offspring distribution $\xi$. 
To finish the proof of Theorem~\ref{thm:rlaxed_md_random_trees}, we need to compute the probabilities $\mathbb{P}\left( \cF \in \cP^L_r \right)$ and $\mathbb{P}\left( \cF \in \cP^E_r \right)$. Denote 
\begin{align}\label{def:expression_ell_k_e_k}
  \ell_r \weq \mathbb{P}\left( \mathcal{F}(\xi)  \in \mathcal{P}^L_r \right)
  \quad \text{ and } \quad 
  e_r \weq \mathbb{P}\left( \mathcal{F}(\xi)  \in \mathcal{P}^E_r \right).
\end{align}
Note that $\ell_r$ and $e_r$ depend on the offspring distribution $\xi$; however, for clarity, this dependence is omitted in the notations that follow. We also introduce the two following auxiliary variables:

\begin{equation}\label{def:expression_d_k_s_k}
  \begin{aligned}
    d_r & \weq \mathbb{P}\left(\mathcal{F}(\xi) \text{ is of height strictly less than } r\right), \\
    s_r & \weq \mathbb{P}\left( \ds_r(\mathcal{F}(\xi)) \text{ is a line graph}\right).
  \end{aligned}
\end{equation}
The following lemma relates the key quantities $\ell_r$ and $e_r$ to the auxiliary quantities $d_r$ and $s_r$. 
\begin{lemma} \label{lm:ell_k_e_k}
    Let $\cF$ be an unconditioned Galton-Watson tree with offspring distribution $\xi$ and denote $\mathbb{P}(\xi = i)=p_i$. Recall the subtree properties $\mathcal{P}^L_r$ and $ \mathcal{P}^E_r$ defined in Definition~\ref{def:subtree_L_E_prop}. Then, 
    \begin{align*}
        \ell_r \weq 
        \begin{cases}
            p_0 &\text{for } r=0, \\
            \sum_{j=0}^{\infty} \left[ p_j (d_{r-1} + \ell_{r-1})^j \right] -d_r &\text{for } r \geq 1,
        \end{cases}  
        \quad \text{ and } \quad 
        e_r \weq 1 - \sum_{j=0}^\infty p_j (1-s_r)^j - s_r +\ell_r. 
    \end{align*}
\end{lemma}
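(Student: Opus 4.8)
The plan is to condition on the root's offspring count and recursively relate a property holding for $\mathcal{F}$ to the same (or a related) property holding for each of the independent i.i.d.\ subtrees hanging off the root. Throughout, I write $\mathcal{F}_1, \dots, \mathcal{F}_j$ for the subtrees rooted at the $j$ children of the root of $\mathcal{F}$, which are i.i.d.\ copies of $\mathcal{F}$, conditioned on the root having $j$ children (probability $p_j$).

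\emph{The formula for $\ell_r$.} For $r = 0$, the property $\mathcal{P}^L_0$ is that the subtree has height $0$, i.e.\ the root is a leaf, which happens with probability $p_0$. For $r \geq 1$, I would first observe that $\mathcal{P}^L_r$ (``$T_v$ has height exactly $r$'') can be written as the difference of two events: ``$T_v$ has height $\leq r$'' minus ``$T_v$ has height $\leq r-1$'', and the latter has probability $d_r$ by the definition~\eqref{def:expression_d_k_s_k}. For the event ``height $\leq r$'': conditioning on the root having $j$ children, this holds iff every subtree $\mathcal{F}_i$ has height $\leq r-1$, i.e.\ each $\mathcal{F}_i$ has height $\leq r-2$ or height exactly $r-1$; the probability that a single subtree has height $\leq r-1$ is $d_{r-1} + \ell_{r-1}$ (height $<r-1$ plus height $=r-1$). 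By independence this gives $\sum_j p_j (d_{r-1} + \ell_{r-1})^j$ for ``height $\leq r$'', and subtracting $d_r$ yields the stated recursion. (The $r=1$ base case should be checked separately against $\ell_0 = p_0$ and $d_0 = 0$, $d_1 = p_0$.)

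\emph{The formula for $e_r$.} This is the more delicate part and I expect it to be the main obstacle, because the property $\mathcal{P}^E_r$ refers to $\ds_r(T_v)$ and the down-stemming interacts with the subtree decomposition in a subtle way: the root $v$ of $\ds_r(T_v)$ needs degree $\geq 2$, and at least one of $v$'s subtrees in $\ds_r(T_v)$ must be a path to a leaf. My approach is to note that a child subtree $\mathcal{F}_i$ of the root ``survives'' $r$-fold down-stemming as a line-graph-to-leaf branch precisely when $\ds_r$ applied within that branch leaves a path — which, since down-stemming from the perspective of the whole tree removes from $\mathcal{F}_i$ exactly the vertices at height $<r$ in $\mathcal{F}_i$ that become degree-$\leq 1$, should correspond to $\ds_r(\mathcal{F}_i)$ being a line graph; this event has probability $s_r$. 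So I would compute the complementary probability: the root of $\ds_r(\mathcal{F})$ fails to have degree $\geq 2$ with at least one line-to-leaf subtree. Decompose the failure: either (a) after down-stemming the root is not a branching-with-a-leaf-branch configuration. The cleanest bookkeeping: among the $j$ child subtrees, let $X$ be the number whose $r$-down-stem is a line graph (each independently with probability $s_r$). One needs to carefully translate ``root of $\ds_r(\mathcal{F})$ has degree $\geq 2$ and $\geq 1$ line-to-leaf child'' into a statement about $j$ and $X$, handling the annoying boundary cases where down-stemming collapses the root itself or where a ``line graph'' subtree is a single vertex. I expect the identity $1 - \sum_j p_j (1-s_r)^j - s_r + \ell_r$ to emerge as: $1 - \sum_j p_j(1-s_r)^j$ is the probability that at least one child subtree $r$-down-stems to a line; then $-s_r + \ell_r = -(s_r - \ell_r)$ subtracts off the degenerate configurations (where the surviving structure is itself just a line, or the root ends up with degree $1$), i.e.\ $s_r - \ell_r$ is the probability of the ``bad'' line-type collapse that must be excluded. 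Pinning down exactly why that correction term is $s_r - \ell_r$ — tracing through the down-stem of $\mathcal{F}$ itself versus the down-stems of its branches, and verifying the degenerate single-vertex-counts-as-line convention lines up — is the crux of the argument, and I would do it by a careful case analysis on the root degree $j \in \{0, 1, \geq 2\}$ and the values of the auxiliary quantities, checking the base case $r$ small by hand against the explicit Poisson values in the preceding example.
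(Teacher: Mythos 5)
Your treatment of $\ell_r$ is correct and is essentially the paper's argument: the paper conditions on the root having $j$ children and writes the conditional probability of height exactly $r$ as $(d_{r-1}+\ell_{r-1})^j - d_{r-1}^j$ (all children of height $\le r-1$, at least one of height exactly $r-1$), which is algebraically identical to your ``height $\le r$ minus height $\le r-1$'' decomposition.

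For $e_r$, however, the proposal stops exactly at the step that carries the content of the lemma. You correctly set up the complement $e_r = 1-\mathbb{P}(\mathcal{F}\notin\mathcal{P}^E_r)$ and identify $\sum_j p_j(1-s_r)^j$ as the probability that no child subtree down-stems to a line, but you only conjecture that the remaining correction is $s_r-\ell_r$ and explicitly defer its derivation (``the crux of the argument'') to an unspecified case analysis. The paper closes this gap in two concrete steps. First, the complement event, restricted to the case where some child subtree does down-stem to a line, is exactly the event $\mathcal{H}_{r,j}$ that precisely one of the $j$ child subtrees survives as a line while the other $j-1$ are entirely removed by the $r$-fold down-stemming (each removal having probability $d_r$), so its conditional probability is $j s_r d_r^{j-1}$; these two cases are disjoint and together exhaust the negation of ``degree at least two and a line-graph child.'' Second --- and this is the step your proposal is missing entirely --- the resulting sum $\sum_{j\ge 1} p_j\, j s_r d_r^{j-1}$ is evaluated not by a fresh case analysis but by reusing the fixed-point equation $s_r = \ell_r + s_r\sum_{j\ge 1} p_j j d_r^{j-1}$ that defines $s_r$ in Lemma~\ref{lm:d_k_m_k}, which yields $s_r\sum_{j\ge 1} p_j j d_r^{j-1} = s_r-\ell_r$. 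Your heuristic reading of $s_r-\ell_r$ as the probability that the down-stem collapses to a line with the root of degree one is the right interpretation, but without the identification of the event $\mathcal{H}_{r,j}$ and the algebraic reuse of the $s_r$ recursion, the stated formula for $e_r$ is not established.
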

Finally, the following two lemmas express the auxiliary quantities $d_r$ and $s_r$. 

\begin{lemma} \label{lm:d_k_m_k}
 Let $\cF$ be an unconditioned Galton-Watson tree with offspring distribution $\xi$. Then
 \begin{equation*}
   d_r \weq
   \begin{cases}
     0 &\text{for } r=0, \\
     \sum_{j=0}^{\infty} p_j d^j_{r-1} &\text{for } r \geq 1,
   \end{cases}
   \quad \text{ and } \quad 
        s_r \weq \frac{\ell_r}{1- \sum_{j=1}^{\infty} p_jjd^{j-1}_r}.
    \end{equation*}
\end{lemma}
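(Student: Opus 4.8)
Both identities rest on the recursive self-similarity of a Galton--Watson tree: conditionally on the root having $j$ offspring, the $j$ subtrees hanging from the root are i.i.d.\ copies of $\mathcal{F}$. I would prove them in turn.

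For $d_r$: since every rooted tree has height at least $0$, the event that the height is strictly less than $0$ is impossible, so $d_0 = 0$. For $r \ge 1$, the height of $\mathcal{F}$ equals $0$ when the root has no offspring and $1 + \max_i \mathrm{height}(T_i)$ otherwise, so $\mathrm{height}(\mathcal{F}) < r$ holds precisely when every subtree has height $< r-1$ (vacuously true when the root has no offspring, which is consistent with $r \ge 1$). Conditioning on $j$ and using independence of the subtrees gives $d_r = \sum_{j \ge 0} p_j d_{r-1}^{\,j}$, the $j = 0$ term being $p_0$ by the convention $0^0 = 1$ (only relevant for $r = 1$, where $d_0 = 0$).

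For $s_r$ I would first establish the structural behaviour of down-stemming on a rooted tree $T$: for every $r \ge 0$ the vertices of $\ds_r(T)$ are exactly the root together with all $v$ whose subtree $T_v$ has height $\ge r$, and for a child $c$ of the root with $\mathrm{height}(T_c) \ge r$ the subtree of $\ds_r(T)$ rooted at $c$ equals $\ds_r(T_c)$. This follows by induction on $r$ using $\ds_r = \ds \circ \ds_{r-1}$: a non-root vertex survives one further round of stemming iff it retains at least one child, i.e.\ iff one of its child-subtrees had height $\ge r-1$, i.e.\ iff its own subtree has height $\ge r$; keeping the root is exactly what prevents this boundary accounting from propagating down the tree. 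Consequently $\ds_r(\mathcal{F})$ is the root $\rho$ carrying, for each subtree $T_i$ of height $\ge r$, an independent copy of $\ds_r(T_i)$, and it is a line graph (necessarily a rooted path descending to a leaf) iff at most one subtree has height $\ge r$ and, when exactly one does, its down-stem is itself a line graph.

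Conditioning on the number $j$ of offspring of the root, and using that a subtree of height $< r$ is erased entirely: $\ds_r(\mathcal{F})$ reduces to the single vertex $\rho$ while the original tree still reached depth $r$ exactly when every subtree has height $< r$ but one attains height exactly $r-1$, an event of probability $\sum_j p_j(d_r^j - d_{r-1}^j) = d_{r+1} - d_r = \ell_r$ (via the $d$-recursion); the only other contribution is that exactly one subtree has height $\ge r$ and down-stems to a line graph, of probability $s_r$ by self-similarity, while the other $j-1$ subtrees have height $< r$. This gives the fixed-point relation
\begin{equation*}
  s_r \weq \ell_r + s_r \sum_{j \ge 1} p_j\, j\, d_r^{\,j-1},
\end{equation*}
whence $s_r = \ell_r \big/ \bigl(1 - \sum_{j \ge 1} p_j j d_r^{\,j-1}\bigr)$; the denominator is positive because $\sum_{j \ge 1} p_j j d_r^{\,j-1} = g_\xi'(d_r) < g_\xi'(1) = \mathbb{E}[\xi] = 1$ by strict convexity of the generating function together with $d_r < 1$. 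Equivalently, one may peel the unique root-to-leaf spine of a line-graph down-stem and sum the geometric series $\sum_{m \ge 0} \ell_r\, g_\xi'(d_r)^m$. The main obstacle is the structural lemma for down-stemming and the accompanying care with degenerate configurations — in particular recognizing that the collapsed case must be weighted by $\ell_r$ and not by $d_{r+1}$, which is precisely what pins the numerator to $\ell_r$; the remaining algebra is routine.
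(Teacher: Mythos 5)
Your proposal is correct and follows essentially the same route as the paper's proof: condition on the number of offspring of the root, use the i.i.d.\ self-similarity of the child subtrees to obtain $d_r=\sum_{j\ge 0}p_j d_{r-1}^j$, and derive the fixed-point relation $s_r=\ell_r+s_r\sum_{j\ge 1}p_j j d_r^{j-1}$ for the down-stemmed line-graph event, then solve for $s_r$. If anything you are more explicit than the paper — the structural characterization of $\ds_r$ (root plus vertices whose subtree has height at least $r$), the positivity of the denominator via $\sum_{j\ge1}p_j j d_r^{j-1}<\mathbb{E}[\xi]=1$, and the observation that the collapsed single-vertex case must be weighted by $\ell_r$ rather than $d_{r+1}$ — but these are refinements of the same argument, not a different one.
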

We refer to Appendix~\ref{appendix:proof_auxilliary_quantities_random_trees} for the proofs of Lemmas~\ref{lm:ell_k_e_k} and Lemma~\ref{lm:d_k_m_k}. All these results combined lead to the proof of Theorem~\ref{thm:rlaxed_md_random_trees}.

\section{Conclusion and Discussion}
\label{section:conclusion}

 We conclude this paper by discussing several related works and some future work. 

\subsection{Variants of the Metric Dimension}

  Several variations of the metric dimension have been studied in the literature, including the \textit{double metric dimension}, the \textit{$k$-truncated metric dimension}, and the \textit{$k$-metric dimension}.

  A set $S$ of sensors is a \textit{double resolving set} if and only if any two vertices can be distinguished by their relative distances to two sensors. Specifically, for any two vertices $v_1, v_2 \in V$, there exist two sensors $s_1, s_2 \in S$ such that $d(s_1, v_1) - d(s_1, v_2) \ne d(s_2, v_1) - d(s_2, v_2)$. These sets rely on relative rather than absolute distances, making them particularly useful in scenarios where only differences in distances are available, such as identifying the source of an epidemic with an unknown start time~\cite{spinelli2018many}. The \textit{$k$-truncated metric dimension} assumes that sensors can only differentiate vertices within a certain distance $k$~\cite{estrada2019k,frongillo2022truncated,bartha2023sharp}. This is particularly relevant in environments where the accuracy of distance measurements diminishes with increasing range. Lastly, the \textit{$k$-metric dimension} requires vertices to be distinguishable by at least $k$ sensors~\cite{hernando2008fault,estrada2019k}. This ensures robust identification even in scenarios where up to $k-1$ sensors may fail.

  All the aforementioned variants of the metric dimension \textit{increase} the number of required sensors by imposing stricter conditions. In contrast, our relaxation \textit{reduces} the number of sensors needed. Another strategy for achieving this reduction involves using a sequential approach. Consider the following game introduced in~\cite{seager2012locating}: an invisible and immobile target is hidden at a vertex $u$, and at each step, a single sensor can be placed, revealing its distance to $u$. The objective is to locate $u$ using the fewest possible steps. The \textit{sequential metric dimension} is defined as the minimum number of steps required to locate the target, regardless of its position, by optimally placing sensors one at a time. This concept has been studied in trees~\cite{bensmail2020sequential} and random graphs~\cite{odor2021sequential}. 

  The sequential approach allows for dynamic sensor placement, which can significantly reduce the total number of sensors required. However, it assumes that sensors can be placed adaptively, and the number of steps needed may be as large as the sequential metric dimension itself. This trade-off highlights a fundamental tension between the number of sensors deployed and the strategy used for identification. On the one hand, a fully passive strategy, where all sensors are pre-placed to completely resolve the graph, ensures immediate identification in a single step but requires in general a very large number of sensors. On the other hand, a fully active strategy, based on the sequential approach, uses in general far fewer sensors but identifies the target after a larger number of steps. The two-step approach described in Section~\ref{section:twoStepsLocaalization} finds a balance between these extremes. Indeed, this hybrid strategy consists of an initial passive phase, where a relatively small number of sensors are placed to approximately locate the target. In the second active phase, fewer additional sensors are required to uniquely identify the target, as its approximate localization is already known. 

\subsection{Metric Dimension of Random Graphs}

 Our results in Section~\ref{subsection:randomTrees} establish the relaxed metric dimension in critical Galton-Watson trees, generalizing the work of~\cite{komjathy_Odor_GW_MD}. We note, however, that~\cite{komjathy_Odor_GW_MD} also derives the asymptotics of the metric dimension for another class of random trees, namely linear preferential attachment trees. We believe that the relaxed metric dimension of this class of trees can be obtained by performing an analysis similar to ours, and leave it for future work. 

 Beyond random trees, two classes of random graphs have been considered for the study of the metric dimension: \Erdos-\Renyi~\cite{bollobas2013metric,odor2021sequential} and random geometric graphs~\cite{lichev2023localization}. 

 On the one hand, \Erdos-\Renyi graphs show two different behaviors depending on the density regime. In the dense regime, where the edge density $p$ satisfies $n p = \omega( \log^5 n)$ and $n(1-p) \ge (3n \log\log n ) / \log n$, the authors in~\cite{bollobas2013metric} showed that the metric dimension exhibits a non-monotonous ``zig-zag'' behavior as a function of the average degree, when it increases from poly-logarithmic to linear in number of vertices $n$. This behavior arises because of the \textit{expansion property} of (dense) \Erdos-\Renyi graphs: the cardinality of the set of vertices at a certain graph distance from a given vertex $v$ does not differ much for most $v$.  Moreover, the work~\cite{odor2021sequential} establishes that the ratio between the sequential metric dimension and the metric dimension converges to a constant (and this constant equals 1 in some particular cases). This highlights that the power of adaptability is very limited in \Erdos-\Renyi graphs, and hints that, asymptotically, the relaxed-metric dimension of a dense \Erdos-\Renyi graphs should converge to the (non-relaxed) metric dimension.\footnote{In fact, the expansion property of dense \Erdos-\Renyi graphs implies that most vertices are at a distance $D_{\max}$ or $D_{\max}-1$, where $D_{\max}$ is the diameter of the graph. Hence, the relaxation for $k \le D_{\max} -2$ has only a limited effect.}

 \subsection{Is Stemming the key for Understanding the Relaxed Metric Dimension?}

 Our analysis of trees clearly demonstrates that the \textit{stem} (or the \textit{down-stem} for rooted trees) is a critical structure for determining the relaxed metric dimension. This naturally raises the question: could stemming also prove useful for analyzing the relaxed metric dimension in other types of graphs? Unfortunately, the answer appears to be negative—at least in the absence of additional assumptions. 

 Consider, for example, random geometric graphs. When the connection radius is sufficiently above the connectivity threshold, no vertex in these graphs has degree one. As a result, the stem operation leaves such graphs unchanged; the stem is the graph itself. Nevertheless, as observed in the numerical simulations of Section~\ref{section:experiments}, the relaxed metric dimension of random geometric graphs is often significantly smaller than their non-relaxed metric dimension. 

 Moreover, analyzing the equivalence classes of non-resolved vertices (Figure~\ref{fig:visualization_RGG}) suggests that an analogue of the stemming procedure for random geometric graphs might involve a form of graph coarsening based on the geometric positions of the vertices. Specifically, this process would construct a reduced graph $G'$ with fewer vertices than the original graph $G$, achieved by merging vertices that are close in the geometric space (and hence may form a clique) and  indistinguishable indeed by their distance to any other vertex. Such a coarsening approach would reduce the number of vertices while preserving the essential properties relevant to their relaxed metric dimension, much like the stemming process in trees. By focusing on spatial proximity, this coarsening could potentially provide insights in the way the relaxation reduces the number of required sensors, offering a new perspective on how geometric information influences sensor placement in these graphs. 

 Finally, there does not seem to be a consistent relationship between the size of the $1$-shell and the impact of relaxation on the metric dimension in real-world graphs. For instance, both the \textit{co-authorship} and \textit{copenhagen-friends} graphs have a small $1$-shell, yet the relaxed metric dimension is dramatically smaller than the standard metric dimension for \textit{copenhagen-friends}, whereas it remains almost unchanged for \textit{co-authorship}. Because the $1$-shell consists precisely of the vertices removed during the iterative stemming process\footnote{Recall that the $2$-core of a graph $G$ is the largest subgraph in which every vertex has degree at least $2$, and the $1$-shell is the set of vertices not belonging to the $2$-core.}, these observations suggest that stemming alone cannot fully explain the reduction in the number of sensors achieved through relaxation of the metric dimension in real graphs. Further investigation into graph-specific features is necessary to fully understand this phenomenon.

\bibliographystyle{plain}
\bibliography{main.bib} 

\begin{thebibliography}{10}

\bibitem{aldous_fringe_tree_1991}
David Aldous.
\newblock Asymptotic {Fringe} {Distributions} for {General} {Families} of
  {Random} {Trees}.
\newblock {\em The Annals of Applied Probability}, 1(2):228--266, 1991.

\bibitem{bartha2023sharp}
Zsolt Bartha, J{\'u}lia Komj{\'a}thy, and J{\"a}rvi Raes.
\newblock Sharp bound on the truncated metric dimension of trees.
\newblock {\em Discrete Mathematics}, 346(8):113410, 2023.

\bibitem{bensmail2020sequential}
Julien Bensmail, Dorian Mazauric, Fionn Mc~Inerney, Nicolas Nisse, and
  St{\'e}phane P{\'e}rennes.
\newblock Sequential metric dimension.
\newblock {\em Algorithmica}, 82(10):2867--2901, 2020.

\bibitem{bollobas2013metric}
B{\'e}la Bollob{\'a}s, Dieter Mitsche, and Pawe{\l} Pra{\l}at.
\newblock Metric dimension for random graphs.
\newblock {\em Electronic Journal of Combinatorics}, 20(4):P1, 2013.

\bibitem{caceres2007metric}
Jos{\'e} C{\'a}ceres, Carmen Hernando, Merce Mora, Ignacio~M Pelayo,
  Mar{\'\i}a~L Puertas, Carlos Seara, and David~R Wood.
\newblock On the metric dimension of cartesian products of graphs.
\newblock {\em SIAM journal on discrete mathematics}, 21(2):423--441, 2007.

\bibitem{MD_for_tree}
Gary Chartrand, Linda Eroh, Mark Johnson, and Ortrud Oellermann.
\newblock Resolvability in graphs and the metric dimension of a graph.
\newblock {\em Discrete Applied Mathematics}, 105:99--113, 10 2000.

\bibitem{set_cover}
V.~Chvatal.
\newblock A greedy heuristic for the set-covering problem.
\newblock {\em Mathematics of Operations Research}, 4(3):233--235, 1979.

\bibitem{estrada2019k}
A~Estrada-Moreno, IG~Yero, and JA~Rodr{\'\i}guez-Vel{\'a}zquez.
\newblock On the (k, t)-metric dimension of graphs.
\newblock {\em The Computer Journal}, 64(1):707--720, 2019.

\bibitem{fanti2017deanonymization}
Giulia Fanti and Pramod Viswanath.
\newblock Deanonymization in the bitcoin p2p network.
\newblock In {\em Advances in Neural Information Processing Systems},
  volume~30, Long Beach Convention Center, 2017. Curran Associates, Inc.

\bibitem{frongillo2022truncated}
Rafael~M Frongillo, Jesse Geneson, Manuel~E Lladser, Richard~C Tillquist, and
  Eunjeong Yi.
\newblock Truncated metric dimension for finite graphs.
\newblock {\em Discrete Applied Mathematics}, 320:150--169, 2022.

\bibitem{Harary_1976}
Frank Harary and Robert~A. Melter.
\newblock On the metric dimension of a graph.
\newblock {\em Ars Combin.}, 2:191--195, 1976.

\bibitem{hernando2008fault}
Carmen Hernando, Merc{\'e} Mora, Peter~J Slater, and David~R Wood.
\newblock Fault-tolerant metric dimension of graphs.
\newblock {\em Convexity in discrete structures}, 5:81--85, 2008.

\bibitem{janson2012simply}
Svante Janson.
\newblock Simply generated trees, conditioned galton--watson trees, random
  allocations and condensation.
\newblock {\em Probability Surveys}, 9:103--252, 2012.

\bibitem{janson_fringe_tree_2016}
Svante Janson.
\newblock Asymptotic normality of fringe subtrees and additive functionals in
  conditioned {Galton}–{Watson} trees.
\newblock {\em Random Structures \& Algorithms}, 48(1):57--101, 2016.

\bibitem{khuller1996landmarks}
Samir Khuller, Balaji Raghavachari, and Azriel Rosenfeld.
\newblock Landmarks in graphs.
\newblock {\em Discrete applied mathematics}, 70(3):217--229, 1996.

\bibitem{komjathy_Odor_GW_MD}
Júlia Komjáthy and Gergely Ódor.
\newblock The metric dimension of critical {Galton}-{Watson} trees and linear
  preferential attachment trees.
\newblock {\em European Journal of Combinatorics}, 95:103317, 2021.

\bibitem{lichev2023localization}
Lyuben Lichev, Dieter Mitsche, and Pawe{\l} Pra{\l}at.
\newblock Localization game for random geometric graphs.
\newblock {\em European Journal of Combinatorics}, 108:103616, 2023.

\bibitem{mitsche2015limiting}
Dieter Mitsche and Juanjo Ru{\'e}.
\newblock On the limiting distribution of the metric dimension for random
  forests.
\newblock {\em European Journal of Combinatorics}, 49:68--89, 2015.

\bibitem{Murray_Cohen_Encyclopedia_PubH}
Jillian Murray and Adam~L. Cohen.
\newblock Infectious disease surveillance.
\newblock In Stella~R. Quah, editor, {\em International Encyclopedia of Public
  Health (Second Edition)}, pages 222--229. Academic Press, Oxford, second
  edition edition, 2017.

\bibitem{odor2021sequential}
Gergely Odor and Patrick Thiran.
\newblock Sequential metric dimension for random graphs.
\newblock {\em Journal of Applied Probability}, 58(4):909--951, 2021.

\bibitem{real_graphs_copenhagen}
Piotr Sapiezynski, Arkadiusz Stopczynski, David Lassen, and Sune Lehmann.
\newblock Interaction data from the copenhagen networks study.
\newblock {\em Scientific Data}, 6:315, 12 2019.

\bibitem{seager2012locating}
Suzanne Seager.
\newblock Locating a robber on a graph.
\newblock {\em Discrete Mathematics}, 312(22):3265--3269, 2012.

\bibitem{shah2010rumors}
Devavrat Shah and Tauhid Zaman.
\newblock Rumors in a network: Who's the culprit?
\newblock {\em IEEE Transactions on Information Theory}, 57(8):5163--5181,
  2011.

\bibitem{slater_leaves_1975}
Peter~J. Slater.
\newblock Leaves of trees.
\newblock In {\em Proceedings of the {S}ixth {S}outheastern {C}onference on
  {C}ombinatorics, {G}raph {T}heory and {C}omputing}, volume No. XIV of {\em
  Congress. Numer.}, pages 549--559. Utilitas Math., Winnipeg, MB, 1975.

\bibitem{spinelli2018many}
Brunella Spinelli, L~Elisa Celis, and Patrick Thiran.
\newblock How many sensors to localize the source? the double metric dimension
  of random networks.
\newblock In {\em 2018 56th Annual Allerton Conference on Communication,
  Control, and Computing}, pages 1036--1043, Allerton, 2018. IEEE.

\bibitem{Tan_act_vs_pass_epi_surv}
Qi~Tan, Chenyang Zhang, Jiwen Xia, Ruiqi Wang, Lian Zhou, Zhanwei Du, and
  Benyun Shi.
\newblock Information-guided adaptive learning approach for active surveillance
  of infectious diseases.
\newblock {\em Infectious Disease Modelling}, 10(1):257--267, 2025.

\bibitem{10.1145/3485447.3512184}
Zhen Wang, Dongpeng Hou, Chao Gao, Jiajin Huang, and Qi~Xuan.
\newblock A rapid source localization method in the early stage of large-scale
  network propagation.
\newblock In {\em Proceedings of the ACM Web Conference 2022}, WWW '22, page
  1372–1380, New York, NY, USA, 2022. Association for Computing Machinery.

\end{thebibliography}

\appendix
\clearpage
\section{Table of notations}
\label{appendix:table_notations}

Table~\ref{tab:notations} summarizes the main notations used in the paper and the proofs. 
\begin{table}[!ht]
    \centering
    \begin{tabular}{@{}ll@{}}
        \toprule
        \textbf{Symbol} & \textbf{Description} \\
        \midrule
        \( G = (V,E), \ |V|=n \) & Graph with $n$ vertices and edge set $E$ \\
        \( T \) & Tree \\
        \( \md(G) \) & Ordinary metric dimension of graph $G$ \\
        \( \md_k(G) \) & $k$-relaxed metric dimension of graph $G$ \\
        \( \sigma(T) \) & Number of leaves in tree $T$ \\
        \( \ex(T) \) & Number of exterior major vertices in $T$\\
        \( \stem_r(T) \) & $r$-stem of tree $T$ \\
        \( \ds_r(T) \) & $r$-down-stem of tree $T$ \\
        \( \xi \) & Offspring distribution of a Galton-Watson tree\\
        \( T_v \) & Subtree of a rooted tree $T$ in the vertex $v$ (subtree facing away from root) \\
        \( d(u,v) \) & shortest path distance between vertices $u$ and $v$ \\
        \( S \subseteq V, S_k \subseteq V\) & Resolving set, $k$-relaxed resolving set\\
        \( \Phi_G(u,S) \) & Identification vector of vertex $u \in V$ w.r.t. resolving set $S$ \\
        \( B_i \) & $i^{\text{th}}$ branch of a tree\\
        \( \mathbf{1}_\ell \) & All-one vector of length $\ell$\\
        \( v_T \) & Closest vertex to $v$ in tree $T$ \\
        \( v_{\ex} \) & Closest exterior major vertex to $v$ in tree $T$ \\
        \( G_L \) & Leaf path (path of degree two vertices ending in a degree one vertex) \\
        \( (\mathcal{T}^{\text{GW}}_n)_{n \in \mathbb{N}} \) & Sequence of Galton-Watson trees conditioned on having $n$ vertices \\
        \( \mathcal{P} \) & Subtree property \\
        \( N_{\mathcal{P}} \) & Number of vertices with property $\mathcal{P}$ \\
        \( \mathcal{F} \) & unconditioned Galton-Watson tree \\
        \( u \stackrel{s}{\sim} v \) & Vertices equivalent under  $\Phi_G(\cdot,S)$\\
        \( [u]_S\) & Equivalence class of vertex $v$ under $\Phi_G(\cdot,S)$\\
        \( \alpha \) & Size of the largest equivalence class $[u]_S$ in graph $G$\\
        \bottomrule
    \end{tabular}
    \caption{Table of Notations}
    \label{tab:notations}
\end{table}

\section{Additional Proofs for Arbitrary Trees (Theorem~\ref{thm:k_metric_dim_tree})}

\subsection{Additional Lemmas Related to the Stem}

\begin{lemma} 
 \label{lm:stemmed_trees}
 Let $G$ be a graph and let $\stem_r(G)$ be its $r$-stem. No cycle in $G$ contains any vertex $v \in G \setminus \stem_r(G)$. 
\end{lemma}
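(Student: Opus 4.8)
The plan is to prove the slightly stronger and more convenient statement that \emph{every} vertex lying on a cycle of $G$ survives all stemming operations, i.e.\ belongs to $\stem_r(G)$ for every $r \ge 0$; the lemma is then immediate by contraposition, since a vertex $v \in G \setminus \stem_r(G)$ is by definition absent from $\stem_r(G)$. The argument is a straightforward induction on $r$.

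For the base case $r = 0$ we have $\stem_0(G) = G$ by convention, so any vertex on a cycle of $G$ trivially lies in $\stem_0(G)$. For the inductive step, fix a cycle $C$ of $G$ and assume every vertex of $C$ lies in $\stem_r(G)$. The key observation is that $\stem_r(G)$ is an \emph{induced} subgraph of $G$: hence, once all vertices of $C$ are present in $\stem_r(G)$, every edge of $C$ (an edge of $G$ between two vertices of $\stem_r(G)$) is present as well, so $C$ is a cycle of $\stem_r(G)$. In a simple graph a cycle has length at least $3$, so each vertex of $C$ has two distinct cycle-neighbours and therefore degree at least $2$ in $\stem_r(G)$. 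By definition $\stem_{r+1}(G) = \stem(\stem_r(G))$ is the subgraph induced by the vertices of degree strictly larger than $1$, so no vertex of $C$ is deleted, and every vertex of $C$ lies in $\stem_{r+1}(G)$. This closes the induction.

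Combining the two cases, every cycle vertex of $G$ lies in $\stem_r(G)$ for all $r$; equivalently, no vertex of $G \setminus \stem_r(G)$ lies on a cycle of $G$, which is exactly the statement of the lemma.

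I do not expect a genuine obstacle here. The only point that needs a little care is the inheritance step: because $\stem$ returns an \emph{induced} subgraph, the edges of a cycle are retained automatically as soon as its vertices are, which is what lets ``$C$ is a cycle'' pass from $\stem_r(G)$ to $\stem_{r+1}(G)$; and one implicitly uses that cycles have length at least $3$ so that each cycle vertex genuinely has degree at least $2$ after each stemming step.
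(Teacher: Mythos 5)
Your proof is correct and rests on the same idea as the paper's: vertices on a cycle have degree at least $2$ within the cycle, so they can never be removed by a stemming operation. The paper phrases this as a short contradiction argument while you make the iteration explicit via induction on $r$ (carefully noting that induced subgraphs preserve the cycle's edges), which is a slightly more rigorous write-up of the same reasoning.
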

\begin{proof}
 Consider a general graph $G = (V,E)$ and its subgraph $\stem_r(G)$ resulting from $r$ consecutive stemming operations. Let $v \in G \setminus \stem_r(G)$ be an arbitrary vertex that got removed in the stemming process. By contradiction, assume that $v$ is an element of a cycle in $G$. Any element of the cycle must have a degree of at least $2$. Therefore, no element of the cycle can be removed in any of the stemming operations. Therefore $v$ cannot be stemmed, which contradicts our assumption, and hence $v$ cannot have been an element of a cycle. 
\end{proof}

\clearpage
\subsection{Proof of Lemma~\ref{lm:lower_bound_dim2}} 
\label{sec:prf_lm_lower_bound_dim}

\begin{proof}[Proof of Lemma \ref{lm:lower_bound_dim2}]

Let $r$ be a non-negative integer and $S$ be any $2r$-relaxed resolving set of~$G$. If $\stem_r(G)$ is an isolated vertex, then $\stem_r(G)$ is neither a leaf nor an exterior major vertex. Hence, the right hand side of the inequality is zero, and the inequality $\md_{2r}(G) \ge 0$ is satisfied by any $G$. \\
If $\stem_r(G)$ has no exterior major vertex, it cannot have a leaf vertex. Assume by contradiction that $\stem_r(G)$ has a leaf but no exterior major vertex. Any non-path graph has to have at least one major vertex (degree at least three), hence there must be a closest major vertex to the leaf. Then, the closest major vertex to the leaf must also be exterior because the path to the leaf can only contain vertices of degree two and is therefore a leaf path. This contradicts the assumption and shows that if $\stem_r(G)$ has no exterior major vertex, it cannot have a leaf vertex. Hence if $\stem_r(G)$ has no exterior major vertex, the right hand side of the inequality is again zero. 

Therefore, assume that $\stem_r(G)$ contains at least one exterior major vertex. Let~$v$ be an arbitrary exterior major vertex in $\stem_r(G)$, and denote the neighbors of $v$ in $\stem_r(G)$ that are leaves as $u_1, u_2, \ldots, u_m$.
Recall from Lemma~\ref{lm:stemmed_trees} that to be a leaf in $\stem_r(G)$, $u_i$ has to be the root of an induced subtree $B_i$ of depth $r$ in $G$. We call this a branch. This subtree can only have been connected to $G$ via the edge $(v,u_i)$ because otherwise the vertex $u_i$ would be part of a cycle.

We first prove by contradiction that $S$ contains at least one vertex from each of the branches~$B_i$ with at most one exception, \textit{i.e.,} $ \left| \{ i \in [m] \colon B_i \cap S = \emptyset \} \right| \le 1 $. Indeed, suppose that two distinct branches, for example, $B_1$ and $B_2$, do not include any vertex from $S$. Let $w_1$ and $w_2$ be vertices in $G$ at distance $r+1$ from $v$ and on the branches $B_1$ and $B_2$, respectively. The existence of $w_1$ and $w_2$ is guaranteed since we have removed subtrees of height $r$ (rooted at $u_1$ and $u_2$) during the stem operation(s).

Since neither $B_1$ nor $B_2$ contains a vertex of $S$, it follows that the vertices ${w_1}$ and ${w_2}$ are indistinguishable with respect to $S$. In other words, for any vertex $x \in S$, the distances $d(x, w_1)$ and $d(x, w_2)$ are equal. Consequently, $\Phi_G(w_1, S) = \Phi_G(w_2, S)$ and $d(w_1,w_2) = 2r + 2$, which contradicts the assumption that $S$ is a $2r$-relaxed resolving set.

Therefore we have shown that $S$ contains at least one vertex from each of the branches $B_i$ $(1 \leq i \leq m)$, with at most one exception. 
This implies $\md_{2r}(G) \wge \sigma_r(G) - \ex_r(G)$. 
\end{proof}

\subsection{Proof of Lemma~\ref{upper_bound_dim2}} 
\label{sec:upper_bound_dim2}

\begin{proof}[Proof of Lemma~\ref{upper_bound_dim2}]
    Consider a tree $T'$ with a sub-tree $T$ as defined in Lemma \ref{upper_bound_dim2}. We begin by constructing a set $S$ of vertices from~$T$ as follows: for each exterior major vertex of $T$, we choose all its leaves except one and add them to the set $S$. It is known from the proof of Theorem~5 in~\cite{slater_leaves_1975} that $S$ is a resolving set for tree $T$. Let $|S| = m$. 

    Note that all vertices in $T$ are also distinguished in $T'$ as the shortest distances in $T$ are invariant between $T$ and $T'$. It remains to be shown that all vertices in $T' \setminus T$ are $2r$-distinguished from all other vertices in $T'$.
    
    Given a vertex $v\in T'$, we denote by $v_T \in T$ the closest vertex from $v$ belonging to $T$. In particular, $v=v_T$ if $v \in T$. We furthermore denote the all-one vector of length $|S|$ by $\mathbf{1}_{|S|}$.

    Consider an arbitrary vertex $v \in T'$ with $v_T$ its closest vertex in $T$. We argue by cases depending on the position of $v_T$ (see Fig. \ref{fig:lemma_vis} for an Example).
    
    \paragraph{Case 1} Suppose $v_T$ lies on a path between two vertices in $S$ and is not a exterior major vertex. Then, by Lemmas~\ref{lm:condition_for_easy_distinction} and~\ref{lm:phi_distinction}, $v$ is $2r$-distinguished from all other vertices in $T'$. 

    \paragraph{Case 2} Suppose $v_T$ lies on a path between a exterior major vertex $w$ and a leaf in $T$ that is not in~$S$.  Consider a second arbitrary vertex $u \in T'$ with its closest vertex in $T$ being $u_T$. We show that~$v$ can still be distinguished from any $u$.
    
    - If $u_T$ lies on a path between two vertices in $S$ and is not a exterior major vertex then by Case 1, $v$ and $u$ are $2r$-distinguished by $S$.

    - If $v_T$ and $u_T$ occur on paths from different exterior major vertices $w$ and $w'$ to leaves that are not in $S$, then by Lemma \ref{lm:distinguish_separate_me_leaves}, $v$ and $u$ are $2r$-distinguished by $S$.

    - Finally, if $v_T$ and $u_T$ occur on the same path from exterior major vertex $w$ to a leaf that are not in $S$, then by Lemma \ref{lm:distinguish_same_me_leaf}, if they have the same identification vector, the distance $d(v,u) \leq 2r$. Hence by contrapositive, $v$ and $u$ are $2r$-distinguished by $S$.

This concludes the proof of the Lemma.
\end{proof}

\begin{figure}[H]
    \centering
    \includegraphics[width=0.75\linewidth]{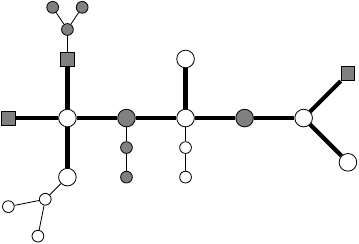}
    \caption{Example tree $T'$ for the cases in Lemma \ref{upper_bound_dim2}. In \textbf{bold} subtree $T$, square vertices are in $S$, Case $1$ vertices in gray and Case $2$ vertices in white.}
    \label{fig:lemma_vis}
    \Description[An example tree $T'$ that contains a subtree $T$, where vertices satisfying case 1 and 2 in the proof of Lemma \ref{upper_bound_dim2}.]{The tree $T'$ contains a subtree $T$. Sensors are placed in the leaves of $T$. Vertices belonging to case 1 or 2 in the proof are marked gray and white.}
\end{figure}

\begin{lemma}
 \label{lm:phi_distinction}
 Let $v_1 \in T$ and $u, v \in T'$ such that $d(v,u) >2r$ and $v_T = v_1$ and $u \in T'$. Suppose that~$v_1$ verifies the following condition:
 \begin{equation}
  \label{eq:v1v2_distinction}
  \forall v_2 \in T \setminus \{v_1\}, \nexists \alpha \in \mathbb{Z} \colon \quad \Phi_{T'}(v_1,S) \weq \Phi_{T'}(v_2,S) + \alpha \mathbf{1}_{|S|}.
 \end{equation}
 Then $u$ and $v$ are distinguished by the resolving set $S$.
\end{lemma}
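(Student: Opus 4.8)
The plan is to prove the contrapositive: assuming that $u$ and $v$ are \emph{not} distinguished by $S$, I will derive that $d(v,u) \le 2r$, contradicting the hypothesis. So suppose $\Phi_{T'}(u,S) = \Phi_{T'}(v,S)$. The idea is to locate $u$ relative to the subtree $T$ and to exploit condition~\eqref{eq:v1v2_distinction}. First I would argue that $u$ cannot lie in $T$: if $u \in T$, then $u_T = u$, and since $u \neq v$ (they are at positive distance) we may take $v_2 = u$ in~\eqref{eq:v1v2_distinction}; but $\Phi_{T'}(v,S) = \Phi_{T'}(u,S)$ would give the forbidden relation with $\alpha = 0$, a contradiction. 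Hence $u \in T' \setminus T$, and its closest vertex $u_T \in T$ is well-defined; let $\gamma = d(u, u_T)$, which is at most $r$ by the standing hypothesis of Lemma~\ref{upper_bound_dim2} on $T'$ (every vertex of $T' \setminus T$ is within distance $r$ of $T$). Similarly set $\beta = d(v, v_T) = d(v, v_1) \le r$.

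Next I would compute the identification vector of $v$ and of $u$ in terms of those of their projections. For any sensor $s \in S \subseteq T$, the shortest path from $v$ to $s$ passes through $v_T$ (because $v_T$ is the unique vertex of $T$ closest to $v$, and $T$ is connected and contains $s$), so $d_{T'}(v,s) = \beta + d_{T'}(v_1,s)$; thus $\Phi_{T'}(v,S) = \Phi_{T'}(v_1,S) + \beta \mathbf{1}_{|S|}$. The same reasoning gives $\Phi_{T'}(u,S) = \Phi_{T'}(u_T,S) + \gamma \mathbf{1}_{|S|}$. Combining with the assumed equality $\Phi_{T'}(u,S) = \Phi_{T'}(v,S)$, we get
\begin{equation*}
 \Phi_{T'}(v_1,S) \weq \Phi_{T'}(u_T,S) + (\gamma - \beta)\mathbf{1}_{|S|}.
\end{equation*}
If $u_T \neq v_1$, this is exactly the relation forbidden by~\eqref{eq:v1v2_distinction} with $v_2 = u_T$ and $\alpha = \gamma - \beta \in \mathbb{Z}$, a contradiction. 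Therefore $u_T = v_1 = v_T$.

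It remains to bound $d(v,u)$ when $v$ and $u$ share the same closest vertex $v_1$ in $T$. Here I would use that $T'$ is a tree: the shortest $v$–$u$ path is obtained by concatenating the $v$–$v_1$ and $v_1$–$u$ paths unless those paths share a common initial segment, so in any case $d(v,u) \le d(v,v_1) + d(v_1,u) = \beta + \gamma \le r + r = 2r$. This contradicts $d(v,u) > 2r$ and completes the proof. The only mildly delicate point — and the step I would be most careful about — is the claim that every shortest path from $v$ (or $u$) to a sensor $s \in S$ must pass through the projection $v_T$; this needs that $v_T$ is a cut vertex separating $v$ from $T$ in the tree $T'$, which follows because $T$ is a subtree (connected) and $v \notin T$, so the unique path from $v$ into $T$ enters $T$ precisely at $v_T$. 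Once that is established, the rest is bookkeeping with the additive constants $\beta$ and $\gamma$.
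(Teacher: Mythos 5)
Your proof is correct and follows essentially the same route as the paper's: decompose $\Phi_{T'}(v,S) = \Phi_{T'}(v_T,S) + d(v,v_T)\mathbf{1}_{|S|}$ (and likewise for $u$), derive the relation forbidden by condition~\eqref{eq:v1v2_distinction} unless $u_T = v_T$, and rule out $u_T = v_T$ via $d(u,v)\le 2r$. The only blemish is the claim that $\alpha = 0$ in your preliminary $u\in T$ case (it should be $\alpha = -d(v,v_1)$), but that case is subsumed by your general computation with $\gamma = 0$ anyway.
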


\begin{proof}
 Consider two vertices $v, u \in T'$ with $d(v,u)>2r$ and $u_T = v_1$. If $v_T = u_T$, then by definition of $T'$, $d(v,u) <2r$. Thus, $v_T \neq u_T$.

 We prove the result by contradiction. Let us assume $\Phi_{T'}(v,S) = \Phi_{T'}(u,S)$. 
 Notice firstly that for all $w \in T$ we have $d(u,w) = d(u,u_T) + d(u_T,w)$, and a similar relationship holds for $d(v,w)$. Hence, because all vertices in $S$ belong to $T$, we obtain
 \begin{align*}
  \Phi_{T'}(v,S) &\weq \Phi_{T'}(v_T, S) + d(v,v_T) \mathbf{1}_{|S|}, \\
  \Phi_{T'}(u,S) &\weq \Phi_{T'}(u_T, S) + d(u,u_T) \mathbf{1}_{|S|}.
 \end{align*}
 Using the assumption $\Phi_{T'}(v,S) = \Phi_{T'}(u,S)$, we have
 \begin{align*}
  \Phi_{T'}(u_T, S) \weq \Phi_{T'}(v_T, S) + \left( d(v,v_T) - d(u,u_T) \right) \mathbf{1}_{|S|}, 
 \end{align*}
which contradicts~\eqref{eq:v1v2_distinction}. Hence, $v$ and $u$ must have different identification vectors and therefore be distinguishable with respect to $S$.
\end{proof}

\begin{lemma} \label{lm:condition_for_easy_distinction}
  A vertex $v_1$ satisfy Condition~\eqref{eq:v1v2_distinction} if the following two conditions are verified:
    \begin{itemize}
        \item $v_1$ is an element of the shortest path between two vertices in $S$;
        \item $v_1$ is not an exterior major vertex. 
    \end{itemize}
\end{lemma}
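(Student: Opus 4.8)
The plan is to show that if $v_1$ lies on the shortest path between two sensors $s_i, s_j \in S$ and $v_1$ is not an exterior major vertex, then no other vertex $v_2 \in T$ can have an identification vector equal to $\Phi_{T'}(v_1, S)$ up to adding a constant multiple of $\mathbf{1}_{|S|}$. First I would exploit the path hypothesis: because $v_1$ lies on the shortest path from $s_i$ to $s_j$ in the tree, we have $d(v_1, s_i) + d(v_1, s_j) = d(s_i, s_j)$. If some $v_2 \ne v_1$ satisfied $\Phi_{T'}(v_1, S) = \Phi_{T'}(v_2, S) + \alpha \mathbf{1}_{|S|}$, then looking at the coordinates for $s_i$ and $s_j$ gives $d(v_1, s_i) = d(v_2, s_i) + \alpha$ and $d(v_1, s_j) = d(v_2, s_j) + \alpha$, hence $d(v_2, s_i) + d(v_2, s_j) = d(s_i, s_j) - 2\alpha$. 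Since $d(v_2, s_i) + d(v_2, s_j) \ge d(s_i, s_j)$ always (triangle inequality in the tree), this forces $\alpha \le 0$.

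Next I would handle the two cases $\alpha = 0$ and $\alpha < 0$ separately. If $\alpha = 0$, then $\Phi_{T'}(v_1, S) = \Phi_{T'}(v_2, S)$, i.e., $v_1$ and $v_2$ are equidistant from every sensor; but $S$ is a resolving set for $T$ (by the construction in the proof of Lemma~\ref{upper_bound_dim2}, invoking \cite{slater_leaves_1975}), and both $v_1, v_2 \in T$, so distances within $T$ coincide with distances in $T'$ for these vertices, forcing $v_1 = v_2$, a contradiction. If $\alpha < 0$, then $d(v_2, s_i) + d(v_2, s_j) < d(s_i, s_j)$, which is impossible in a tree (or any graph): it would violate the triangle inequality, since the shortest path from $s_i$ to $s_j$ has length exactly $d(s_i, s_j)$ and going through $v_2$ cannot be shorter. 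So $\alpha < 0$ is also ruled out, completing the argument for this pair of coordinates.

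The one subtlety — and the step I expect to require the most care — is the role of the hypothesis that $v_1$ is \emph{not} an exterior major vertex. The inequality chain above already rules out $v_2 \ne v_1$ using only the path condition, so I need to re-examine where the exterior-major-vertex condition is actually needed. The point is that the $\alpha < 0$ branch must genuinely be impossible; if $v_1$ were an exterior major vertex, then $S$ contains all but one of its leaf paths, and there could be a vertex $v_2$ on the omitted leaf path (or $v_1$'s position relative to such a leaf) for which shifting by a negative $\alpha$ becomes feasible because $v_1$'s distances to the sensors are not "pinned between" two sensors in the way the path hypothesis would suggest when the only sensors realizing the minimum lie on one side. Concretely, I would argue that because $v_1$ is on a path \emph{between} two sensors (so sensors exist on both "sides" of $v_1$ in the tree) and $v_1$ is not itself the exterior major vertex anchoring an unsensored leaf path, every geodesic deformation witnessed by a potential $v_2$ must keep $d(v_2, s_i) + d(v_2, s_j) \ge d(s_i, s_j)$ with equality only at $v_1$; the exterior-major-vertex exclusion prevents the degenerate configuration where $v_1$ sits at a branch point adjacent to an unsensored leaf path, which is exactly the situation where an off-path vertex could masquerade as $v_1$ under a uniform distance shift. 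I would formalize this by noting that if $v_1$ is not an exterior major vertex, then $v_1$ has at most one neighbor not separating it from $S$, pinning down $\Phi_{T'}(v_1, S)$ uniquely among vertices of $T$ up to the shift, and invoke the resolving property of $S$ once more to close the $\alpha = 0$ subcase cleanly.

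\begin{proof}[Proof of Lemma~\ref{lm:condition_for_easy_distinction}]
Let $v_1 \in T$ satisfy the two stated conditions: $v_1$ lies on the shortest path between two sensors $s_i, s_j \in S$, and $v_1$ is not an exterior major vertex. Suppose, for contradiction, that Condition~\eqref{eq:v1v2_distinction} fails, so there exist $v_2 \in T \setminus \{v_1\}$ and $\alpha \in \mathbb{Z}$ with
\[
\Phi_{T'}(v_1, S) \weq \Phi_{T'}(v_2, S) + \alpha \mathbf{1}_{|S|}.
\]
Because $v_1, v_2 \in T$ and $T$ is a subtree of $T'$, the relevant distances to the sensors in $S \subseteq T$ are the same whether computed in $T$ or in $T'$; we write $d$ for either.

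Since $v_1$ lies on the unique $s_i$--$s_j$ path in the tree $T$, we have $d(v_1, s_i) + d(v_1, s_j) = d(s_i, s_j)$. Comparing the $s_i$- and $s_j$-coordinates of the two identification vectors gives $d(v_1, s_i) = d(v_2, s_i) + \alpha$ and $d(v_1, s_j) = d(v_2, s_j) + \alpha$, whence
\[
d(v_2, s_i) + d(v_2, s_j) \weq d(s_i, s_j) - 2\alpha.
\]
On the other hand, the triangle inequality in $T$ yields $d(v_2, s_i) + d(v_2, s_j) \geq d(s_i, s_j)$, so $\alpha \leq 0$.

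If $\alpha < 0$, then $d(v_2, s_i) + d(v_2, s_j) < d(s_i, s_j)$, which contradicts the triangle inequality. Hence $\alpha = 0$, and therefore $\Phi_{T'}(v_1, S) = \Phi_{T'}(v_2, S)$, i.e.\ $d(v_1, s) = d(v_2, s)$ for every $s \in S$. But $S$ is a resolving set for $T$ (by the construction in the proof of Lemma~\ref{upper_bound_dim2}, following the proof of Theorem~5 in~\cite{slater_leaves_1975}), and $v_1, v_2$ are vertices of $T$ with identical distance vectors to $S$; this forces $v_1 = v_2$, contradicting $v_2 \in T \setminus \{v_1\}$.

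In all cases we reach a contradiction, so no such $v_2$ and $\alpha$ exist, and Condition~\eqref{eq:v1v2_distinction} holds for $v_1$.
\end{proof}
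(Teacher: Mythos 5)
There is a genuine gap, and it sits exactly where you predicted it would: the $\alpha<0$ branch. Your algebra gives $d(v_2,s_i)+d(v_2,s_j)=d(s_i,s_j)-2\alpha$, so for $\alpha<0$ this sum equals $d(s_i,s_j)+2|\alpha|$, which is \emph{strictly greater} than $d(s_i,s_j)$ — perfectly consistent with the triangle inequality, not a violation of it. The claim ``if $\alpha<0$ then $d(v_2,s_i)+d(v_2,s_j)<d(s_i,s_j)$'' is a sign error. Worse, the case $\alpha<0$ is genuinely realizable in the absence of the second hypothesis: if $v_1$ had a branch (a connected component of $T\setminus\{v_1\}$) containing no sensor, any $v_2$ in that branch satisfies $d(s,v_2)=d(s,v_1)+d(v_1,v_2)$ for \emph{every} $s\in S$, i.e.\ $\Phi_{T'}(v_1,S)=\Phi_{T'}(v_2,S)-d(v_1,v_2)\mathbf{1}_{|S|}$, which is precisely Condition~\eqref{eq:v1v2_distinction} failing with $\alpha=-d(v_1,v_2)<0$. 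Your finished proof never actually uses the hypothesis that $v_1$ is not an exterior major vertex, which is the tell: that hypothesis is what rules out this configuration, and a proof that does not invoke it cannot be correct.

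The paper's argument closes the gap by showing that under the two hypotheses \emph{every} branch of $v_1$ contains a sensor: if $\deg(v_1)\le 2$ this follows from $v_1$ lying between two sensors, and if $\deg(v_1)\ge 3$ then, since $v_1$ is not an exterior major vertex, no branch is a leaf path, so each branch contains an exterior major vertex and hence a sensor from the construction of $S$. Once every branch has a sensor, one compares a sensor $q$ outside the branch of $v_2$ (for which $d(q,v_2)=d(q,v_1)+d(v_1,v_2)$) with a sensor $q'$ inside it (for which $d(q',v_2)<d(q',v_1)+d(v_1,v_2)$), so the difference $\Phi_{T'}(v_2,S)-\Phi_{T'}(v_1,S)$ cannot be a constant vector, killing all $\alpha$ at once. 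Your treatment of $\alpha=0$ via the resolving property of $S$ is fine, and your reduction to $\alpha\le 0$ is correct, but to repair the proof you must replace the $\alpha<0$ step with an argument of this kind — for instance, show that constancy of $d(\cdot,v_2)-d(\cdot,v_1)$ over $S$ with value $|\alpha|>0$ forces all sensors to attach to the $v_1$--$v_2$ path at a single point, and that this is impossible when every branch of $v_1$ contains a sensor.
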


\begin{proof}
    Consider a vertex $v_1 \in T$ such that $v_1$ is not an exterior major vertex and is on the shortest path between two vertices in $S$.
    Let $\{B_i\}_{i \leq \text{deg}(v_1)}$ be the subtrees of $v_1$ induced by the connected components in $T \setminus \{v_1\}$, i.e. the subtrees or branches created by removing $v_1$ from the tree. 
    
    Note that each of these subtrees has to contain at least one vertex in $S$ because either $v_1$ has a degree less than three, and is on the path between two vertices in $S$ which means all $B_i$ contain vertices of $S$. Or else $v_1$ has a degree greater or equal to three but is not a exterior major vertex. Hence, each $B_i$ contains at least one exterior major vertex and with it some vertex in $S$.
    
    Consider an arbitrary vertex $v_2 \in T \setminus \{v_1\}$ and the path between $v_1$ and $v_2$ of length $d(v_1,v_2)$. For any two neighboring vertices the identification vector $\Phi_{T'}$ has to change in every component by $\pm 1$ due to the tree structure of $T'$. Let $B_{v_2}$ be the subtree of $v_1$ that $v_2$ belongs to. Then, the distance of any sensor $q \in S \setminus B_{v_2}$ has to be $d(q, v_2) = d(q, v_1) +d(v_1, v_2)$. We observe how the identification vector $\Phi_{T'}(\cdot,S)$ changes on the path between $v_1$ and $v_2$. As the distance to $q$ only increases on a path between $v_1$ and $v_2$, the identification vector component $[\Phi_{T'}(\cdot,S)]_q$ belonging to sensor $q$ has to increase.

    Consider $q' \in S \cap B_{v_2}$, a sensor in the branch containing $v_2$. On the path between $v_1$ and $v_2$ the distance to $q'$ has to decrease at least on the first step. Hence, $d(q', v_2) < d(q', v_1) + d(v_1, v_2)$. This implies that for all $\alpha \in \mathbb{Z}$
    \begin{align*}
        \Phi_{T'}(v_1,S) \wne \Phi_{T'}(v_2,S) + \alpha \mathbf{1}_{|S|}.
    \end{align*}
\end{proof}

\begin{lemma} \label{lm:distinguish_separate_me_leaves}
    If $u, v \in T'$ and $d(u,v) >2r$ such that $u_T, v_T$ occur on paths from different exterior major vertices to leaves, then $u$ and $v$ are distinguishable by $S$.
\end{lemma}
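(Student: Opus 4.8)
The plan is to reduce the statement to Lemma~\ref{lm:phi_distinction} by verifying that $v_T$ satisfies Condition~\eqref{eq:v1v2_distinction}. Concretely, I would let $w$ be the exterior major vertex of $T$ on whose leaf path $v_T$ lies, and let $q_0 \in S$ be a sensor that is a leaf hanging off $w$ through that same leaf path (such a sensor exists: $w$ has at least two leaf paths into $S$, or one leaf path into $S$ and $v_T$'s leaf path, so in all cases at least one sensor lies ``below'' $w$ on the branch containing $v_T$). The key geometric observation is that moving from $w$ down the leaf path towards $v_T$ strictly decreases the distance to $q_0$ along at least the first edge, whereas for any sensor $q \in S$ not in the branch of $T \setminus \{w\}$ containing $v_T$, the distance grows as $d(q,v_T) = d(q,w) + d(w,v_T)$. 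Thus the identification vector of $v_T$ relative to $S$ has a strictly different ``shape'' from that of $w$ and, more generally, from that of any vertex $v_2$ lying in a different branch of the tree — the component indexed by $q_0$ drops while some other component rises, so no uniform shift $\alpha \mathbf{1}_{|S|}$ can match them.

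The steps, in order, would be: (1) identify the exterior major vertex $w$ and a sensor $q_0 \in S$ on $v_T$'s leaf path, and also the exterior major vertex $w'$ and corresponding sensor $q_0'$ for $u_T$; (2) for any $v_2 \in T \setminus \{v_T\}$, split into the case where $v_2$ lies on $v_T$'s leaf path and the case where it does not, and in each case exhibit a pair of sensor-components of $\Phi_{T'}(\cdot,S)$ that move in opposite relative directions between $v_T$ and $v_2$, thereby ruling out~\eqref{eq:v1v2_distinction}; (3) invoke Lemma~\ref{lm:phi_distinction} with $v_1 = v_T$ to conclude that $v$ is distinguished from every vertex of $T'$ at distance greater than $2r$, in particular from $u$. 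An alternative, possibly cleaner, route is to argue symmetrically in both $v_T$ and $u_T$: express $\Phi_{T'}(v,S) = \Phi_{T'}(v_T,S) + d(v,v_T)\mathbf{1}_{|S|}$ and similarly for $u$, assume $\Phi_{T'}(v,S) = \Phi_{T'}(u,S)$, and derive that $\Phi_{T'}(u_T,S) - \Phi_{T'}(v_T,S)$ is a constant vector; then use the two sensors $q_0$ (below $w$) and $q_0'$ (below $w'$) to get a contradiction, since the $q_0$-component forces the constant to reflect a ``descent into $w$'s branch'' while the $q_0'$-component forces the opposite.

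The main obstacle I anticipate is a careful bookkeeping of the branch structure: one must be sure that whenever $v_T$ sits on a leaf path whose leaf is \emph{not} in $S$, there is nonetheless a sensor on that same branch off $w$ (it need not be on $v_T$'s own leaf path — $w$ may have several leaf paths, all but one feeding into $S$), and that this sensor's distance genuinely decreases as one walks from $w$ toward $v_T$. This is exactly the place where the hypothesis ``$T$ is not a line graph'' and the construction of $S$ (all leaves of each exterior major vertex but one) are used. Handling the boundary cases — $v_T = w$, or $u_T$ lying between its exterior major vertex and $w'$ in a degenerate way, or $w = w'$ (excluded by hypothesis, but worth noting explicitly) — is routine but must be stated. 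Once the branch bookkeeping is pinned down, the contradiction via the opposite-sign components is immediate.
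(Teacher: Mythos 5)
Your primary route --- verifying Condition~\eqref{eq:v1v2_distinction} for $v_T$ and invoking Lemma~\ref{lm:phi_distinction} --- does not work, and the reason is the very feature that makes this case require a separate lemma. The leaf path on which $v_T$ sits ends in a leaf that is \emph{not} in $S$, and its interior vertices have degree two, so that branch of $T\setminus\{w\}$ contains no sensor at all. Consequently every sensor is reached from $v_T$ through $w$, and $\Phi_{T'}(v_T,S)=\Phi_{T'}(w,S)+d(v_T,w)\mathbf{1}_{|S|}$ exactly: Condition~\eqref{eq:v1v2_distinction} \emph{fails} for $v_T$ (take $v_2=w$), which is precisely why Lemma~\ref{lm:condition_for_easy_distinction} excludes such vertices. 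Your ``key geometric observation'' that some sensor's distance strictly decreases on the first edge from $w$ toward $v_T$ is therefore false --- any sensor $q_0$ hanging off $w$ lies on a \emph{different} branch, so its distance increases along that walk. Moreover, such a $q_0$ need not even exist: if $w$ has exactly one leaf path (the one containing $v_T$, whose leaf is the excluded one), the construction of $S$ places no leaf of $w$ in $S$.

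Your alternative route is the right skeleton and is essentially what the paper does: assume $\Phi_{T'}(u,S)=\Phi_{T'}(v,S)$, peel off the uniform shifts $\Phi_{T'}(v,S)=\Phi_{T'}(v_{\ex},S)+(d(v_{\ex},v_T)+d(v_T,v))\mathbf{1}_{|S|}$ (and likewise for $u$), and reduce to showing that the identification vectors of the two distinct exterior major vertices $u_{\ex}\ne v_{\ex}$ cannot differ by a constant vector. But your mechanism for the final contradiction --- sensors ``below'' $w$ and ``below'' $w'$ --- inherits the same two defects (the sensor may not exist, and even when it does its distance does not decrease toward $v_T$). The paper's fix is to forget about sensors subordinate to $w$ and $w'$ and instead pick sensors that \emph{straddle} them: since $T$ is not a line graph, there are $q,q'\in S$ whose connecting path is ordered $q,w,w',q'$, whence $d(w,q)<d(w',q)$ while $d(w,q')>d(w',q')$, ruling out any uniform shift between $\Phi_{T'}(w,S)$ and $\Phi_{T'}(w',S)$. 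With that substitution your second argument closes; as written, it has a genuine gap.
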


\begin{proof}
    First note that for any two exterior major vertices $w \neq w' \in T$ we have
    \begin{equation} \label{eq:distinguish_maj_ext}
        \nexists \alpha \in \mathbb{Z} \colon \Phi_{T'}(w, S) = \Phi_{T'}(w', S) + \alpha \mathbf{1}_{|S|}.
    \end{equation}
    This is because both $w$ and $w'$ lie on some shortest path between two vertices $q, q' \in S$. Without loss of generality assume the path is ordered as $q, w, w', q'$. Then $d(w,q) < d(w',q)$ but also $d(w,q')> d(w',q')$, which shows Statement \ref{eq:distinguish_maj_ext}.

    Now consider two vertices $u,v \in T'$ as stipulated in Lemma \ref{lm:distinguish_separate_me_leaves}. We denote the closest vertices in $T$ as $u_T$ and $v_T$ and the closest exterior major vertices in $T$ as $u_{\text{ex}}$ and $v_{\text{ex}}$ respectively (potentially $u_T = u_{\text{ex}}$ or $v_T = v_{\text{ex}}$). We write their identification vectors as
    \begin{align}
        \Phi_{T'}(u, S) &= \Phi_{T'}(u_T, S) +d(u_T, u)\mathbf{1}_{|S|} = \Phi_{T'}(u_{\text{ex}}, S) +(d(u_{\text{ex}},u_T) +d(u_T, u))\mathbf{1}_{|S|} \\
        \Phi_{T'}(v, S) &= \Phi_{T'}(v_T, S) +d(v_T, v)\mathbf{1}_{|S|} = \Phi_{T'}(v_{\text{ex}}, S) +(d(v_{\text{ex}},v_T) +d(v_T, v))\mathbf{1}_{|S|}.
    \end{align}
    By contradiction, assume $\Phi_{T'}(u,S) = \Phi_{T'}(v,S)$, then
    \begin{align}
        \Phi_{T'}(u_{\text{ex}}, S) +(d(u_{\text{ex}},u_T) +d(u_T, u))\mathbf{1}_{|S|} &= \Phi_{T'}(v_{\text{ex}}, S) +(d(v_{\text{ex}},v_T) +d(v_T, v))\mathbf{1}_{|S|} \\
        (\Leftrightarrow) \ \ \ \ \Phi_{T'}(u_{\text{ex}}, S) &= \Phi_{T'}(v_{\text{ex}}, S) +\alpha \mathbf{1}_{|S|}.
    \end{align}
    Where $\alpha = d(v_{\text{ex}},v_T) +d(v_T, v) -d(u_{\text{ex}},u_T) -d(u_T, u)$ which clearly contradicts Statement \ref{eq:distinguish_maj_ext}. Hence $u$ and $v$ have to be distinguishable.
\end{proof}

\begin{lemma} \label{lm:distinguish_same_me_leaf}
    If $u, v \in T'$ such that $u_T, v_T$ occur on the same path between an exterior major vertex and a leaf not in $S$ and they have the same identification vector, then their distances is less than or equal to $2r$.
\end{lemma}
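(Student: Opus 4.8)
The plan is to set up coordinates along the relevant leaf path and exploit the tree structure so that the identification vector of a vertex $v\in T'$ is completely determined, up to an additive multiple of $\mathbf 1_{|S|}$, by the exterior major vertex $w$ at the head of the path together with a single scalar counting how far $v$ lies beyond $w$. First I would fix the exterior major vertex $w$ and the leaf $\ell\notin S$ such that $u_T,v_T$ both lie on the $w$--$\ell$ path in $T$; let $P$ denote this path. Since $\ell\notin S$, by construction of $S$ (all leaves of each exterior major vertex except one are taken) $P$ carries no sensor in its interior, and moreover for \emph{every} sensor $q\in S$ the shortest path from $q$ to any vertex of $P$ must pass through $w$, because $T$ is a tree and $w$ is the unique vertex at which $P$ attaches to the rest of $T$. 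Consequently, for any $x$ on $P$ we have $d(q,x)=d(q,w)+d(w,x)$ for all $q\in S$, i.e.
\[
\Phi_{T'}(x,S)\weq \Phi_{T'}(w,S) + d(w,x)\,\mathbf 1_{|S|}.
\]
The same decomposition through $w$ also applies to $u$ and $v$ themselves: writing $a=d(w,u_T)+d(u_T,u)$ and $b=d(w,v_T)+d(v_T,v)$ (the "depths" of $u,v$ measured from $w$, using that $u_T,v_T\in P$ and that $u,v$ hang off $T$ at $u_T,v_T$), we get $\Phi_{T'}(u,S)=\Phi_{T'}(w,S)+a\mathbf 1_{|S|}$ and $\Phi_{T'}(v,S)=\Phi_{T'}(w,S)+b\mathbf 1_{|S|}$.

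Next I would use the hypothesis $\Phi_{T'}(u,S)=\Phi_{T'}(v,S)$. Subtracting the two displayed expressions gives $(a-b)\mathbf 1_{|S|}=\mathbf 0$, hence $a=b$; call this common value $m$. Now I must bound $d(u,v)$. Both $u$ and $v$ lie in $T'$ within distance $r$ of $T$, attached along $P$; concretely $u$ is reached from $w$ by going down $P$ to $u_T$ and then possibly leaving $T$ for up to $r$ more steps, and similarly for $v$, with the total length of each such route equal to $m$. The shortest $u$--$v$ path in the tree $T'$ goes from $u$ up to the branch point where the two routes separate and back down to $v$; in the worst case the routes share only the vertex $w$ (they separate immediately after $w$), so $d(u,v)\le \bigl(m-d(w,\text{branch point})\bigr)+\bigl(m-d(w,\text{branch point})\bigr)\le 2m - 2\cdot 0$ is not yet enough — I need the extra input that $u_T,v_T$ lie on $P$ and that the off-$T$ portions have length at most $r$. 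The correct accounting: let $p=d(w,u_T)$ and $p'=d(w,v_T)$ with $p,p'$ the positions along $P$; then $m=p+d(u_T,u)=p'+d(v_T,v)$ with $d(u_T,u),d(v_T,v)\le r$. The branch point of the $u$ and $v$ routes is at least $\min(p,p')\ge 0$ down $P$ from $w$, and actually at distance $\min(p,p')$ when the off-path parts diverge, giving
\[
d(u,v)\wle \bigl(m-\min(p,p')\bigr) + \bigl(m-\min(p,p')\bigr) \weq 2\max(d(u_T,u)+ (p-\min(p,p')),\ \dots)\wle 2r,
\]
the last inequality because $m-\min(p,p')=\max(p,p')-\min(p,p')+d(\cdot_T,\cdot)\le r$ after noting WLOG $p\ge p'$ so $m-p'=p-p'+d(u_T,u)$, and combining $p-p'\le d(v_T,v)\le r$... — here I would instead argue cleanly that since $u_T$ and $v_T$ both lie on $P$ with $v_T$ between $w$ and $u_T$ (WLOG), the $u$--$v$ geodesic in $T'$ passes through $v_T$, so $d(u,v)=d(u,v_T)+d(v_T,v)$; and $d(u,v_T)=d(u,u_T)+d(u_T,v_T)$, while from $m=a=b$ we get $d(u_T,v_T)=p-p'=d(v_T,v)-d(u_T,u)$, whence $d(u,v)=d(u,u_T)+\bigl(d(v_T,v)-d(u_T,u)\bigr)+d(v_T,v)=2\,d(v_T,v)\le 2r$.

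The main obstacle, and the step I would be most careful about, is precisely this last geometric bookkeeping: one must correctly identify which of $u_T,v_T$ is closer to $w$, verify that the $T'$-geodesic between $u$ and $v$ genuinely routes through the nearer of $u_T,v_T$ (using that $u,v$ attach to $T$ only at $u_T,v_T$ and that everything off $P$ hangs within $r$ of $T$), and then convert the scalar identity $a=b$ into the clean cancellation $d(u,v)=2\,d(v_T,v)$. Everything else — the "all sensors see $P$ through $w$" observation and the resulting affine form of $\Phi_{T'}$ along $P$ — is a direct consequence of $T'$ being a tree and of $\ell\notin S$, and should be stated as a short preliminary claim before the case analysis.
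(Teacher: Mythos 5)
Your final argument is correct and is essentially the paper's proof: both reduce the hypothesis to $d(w,u)=d(w,v)$ via the observation that every sensor reaches the leaf path through the exterior major vertex $w$, and then the telescoping identity yields $d(u,v)=2\,d(v_T,v)\le 2r$ (the paper writes $2\,d(u_T,u)$ with the opposite WLOG). The abandoned branch-point estimate in the middle is an unnecessary detour, but the clean computation you settle on at the end matches the paper's.
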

\begin{proof}
    Consider $u,v \in T'$ such that $u_T$ and $v_T$ both lie on a path between a leaf not in $S$ and an exterior major vertex $w \in T$. Assume that they have the same identification vector $\Phi_{T'}(u, S) = \Phi_{T'}(v, S)$. Hence $d(w,u) = d(w,v)$. Without loss of generality assume $d(u_T, w) \leq d(v_T, w)$. We write

    \begin{alignat}{2}
         \quad && d(w, u) &= d(w,v)\\
         (\Leftrightarrow) \quad &&  d(w,u_T) + d(u_T, u) &= d(w, u_T) + d(u_T, v_T) + d(v_T, v) \\
        (\Leftrightarrow) \quad &&   2 d(u_T, u) &= \underbrace{d(u, u_T) + d(u_T, v_T) + d(v_T, v)}_{d(u,v)}
    \end{alignat}
And as the distance $d(u_T, u)$ is by definition less than or equal to $r$, we have shown that $d(u,v)$ must be less than or equal $2r$.  
\end{proof}

\section{Additional Proofs for Random Trees}
In this section, we prove all the lemmas related to the proof of the metric dimension in Galton-Watson trees. For reader's convenience, we restate the various lemmas. 

\subsection{Difference between the down-stem and the stem}
\label{appendix:sec_proof_lm_ds_leaf_path}

The following lemma shows that the vertices belonging to the down-stem but not to the stem form a line graph. We denote the graph difference between two graphs $G = (V,E)$ and $G' = (V',E')$ as $G -G'$, the subgraph of $G$ induced by $V \setminus V'$.

\begin{restatable}{lemma}{StDsLeafPath} \label{lm:ds_leaf_path}
  Consider an arbitrary rooted tree $T$ and a positive integer $r$ such that $2r < \text{diam}(T)$. Then, $\ds_r(T) - \stem_r(T)$ is a line graph (potentially of length zero, if the root has a degree larger than one at every stemming iteration).
\end{restatable}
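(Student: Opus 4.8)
The plan is to exploit the only structural difference between stemming and down-stemming: at each iteration, down-stemming keeps the root $\rho$ even when it has degree one, whereas stemming would delete it. So the set $\ds_r(T) \setminus \stem_r(T)$ can only contain vertices that would have been stemmed but are ``protected'' by their proximity to $\rho$. Concretely, I would proceed by induction on $r$. The base case $r=0$ is trivial since $\ds_0(T) = \stem_0(T) = T$. For the inductive step, suppose $\ds_{r-1}(T) - \stem_{r-1}(T)$ is a line graph (a path, possibly empty), and consider one more application of each operation. The key observation is that a vertex survives in $\ds_r(T)$ but not in $\stem_r(T)$ precisely when, at some iteration $i \le r$, it is the current root and has degree one. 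I would track the ``survival chain'': let $\rho = \rho_0$ and, whenever the current root has degree one at iteration $i$, let $\rho_{i}$ be its unique neighbor (which becomes the new root after stemming removes $\rho_{i-1}$ but down-stemming does not). These vertices $\rho_0, \rho_1, \dots$ form a path in $T$, and I claim $\ds_r(T) - \stem_r(T)$ is exactly an initial segment of this path.

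The main steps, in order: (i) formalize the notion that $\stem(\ds_{r-1}(T))$ and $\ds(\ds_{r-1}(T))$ agree except possibly at the root of $\ds_{r-1}(T)$ — indeed $\ds(H) = \stem(H) \cup \{\text{root of } H\}$ for any rooted tree $H$ whose root we are protecting, so $\ds(H)$ and $\stem(H)$ differ by at most the single vertex $\text{root}(H)$; (ii) iterate this: $\ds_r(T)$ is obtained from $\stem_r(T)$ by, at each of the $r$ stemming rounds, possibly re-adding the then-current root; since at round $i$ the current root (in the down-stem process) is $\rho_{i-1}$ and re-adding it connects it only to $\rho_i$ (its unique surviving neighbor), the accumulated extra vertices form the path $\rho_{i^*}, \rho_{i^*+1}, \dots, \rho_{r-1}$ where $i^*$ is the first round at which the protection actually mattered; (iii) conclude this extra vertex set induces a subpath of $T$, hence a line graph, and it is attached to $\stem_r(T)$ at a single vertex $\rho_r \in \stem_r(T)$ (using $2r < \diam(T)$ to guarantee $\stem_r(T)$ is nonempty, so the path does indeed ``dock'' onto the stem rather than being all of $\ds_r(T)$). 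The condition $2r<\diam(T)$ is what ensures $\stem_r(T) \ne \emptyset$; without it the down-stem could itself be a single vertex and the statement would need rephrasing.

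I would also use Lemma~\ref{lm:stemmed_trees} (no cycle meets $G \setminus \stem_r(G)$) to argue that each $\rho_i$ in the chain has, in the relevant intermediate tree, exactly one neighbor of degree $\ge 2$, so that re-adding it genuinely produces a pendant path and not a branching structure — this is what makes the extra vertices a line graph rather than a more complicated subgraph. The bookkeeping that the $\rho_i$ are distinct and consecutive along a genuine path of $T$ (rather than, say, revisiting vertices) follows because each stemming strictly shrinks the vertex set, so once $\rho_{i-1}$ is removed it never returns except as the protected root, and its ``parent'' $\rho_i$ lies strictly closer to the bulk of the tree.

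The main obstacle I anticipate is not conceptual but organizational: cleanly setting up the induction so that the interaction between ``which vertex is the current root'' and ``which vertices get stemmed this round'' is handled without circular reasoning. In particular one must be careful that the statement ``$\ds(H)$ differs from $\stem(H)$ only at $\text{root}(H)$'' is applied to the correct intermediate tree at each round, and that the protected root at round $i$ is exactly the vertex $\ds_{i-1}(T)$ is rooted at — which requires noting that down-stemming preserves the identity of the root at every step by definition. Once that indexing is pinned down, the line-graph conclusion is immediate since a connected subgraph of a tree in which every vertex has degree $\le 2$ is a path.
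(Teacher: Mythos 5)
Your plan is correct and follows essentially the same route as the paper: induction on $r$, observing that the difference $\ds_r(T)-\stem_r(T)$ can only grow by the single vertex where the current pendant path attaches to the stem (and only when that vertex has degree one in the stem but degree two in the down-stem), so the difference remains a pendant path. The only minor imprecision is your claim that the extra vertices are $\rho_{i^*},\dots,\rho_{r-1}$ for consecutive rounds after a first round $i^*$ — the growth can stall (when the attachment vertex keeps degree $\ge 2$ in the stem) and later resume — but your inductive step already covers both cases, so the conclusion is unaffected.
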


 \begin{proof}  Let $T$ be an arbitrary rooted tree and we denote $L_r(T) = \ds_r(T) - \stem_r(T)$. We prove by induction over $r$ the following hypothesis $\cH(r)$:
 \begin{align*}
  \cH(r) \ \colon \ L_r(T) \text{ is a line graph (potentially empty)}.
 \end{align*}

 \paragraph{Base Case $(r=0)$.} By the definition of the stemming and down-stemming operations $\stem_0(T) = \ds_0(T) = T$. Hence, $L_0 = \emptyset$ and thus $\cH(0)$ holds. 

 \paragraph{Inductive Step.} Suppose that $\cH(r)$ holds for some $r \ge 0$, and let us prove $\cH(r+1)$. 
 Consider the set $S_r$ of vertices in $\stem_r(T)$ having degree~1. By the definition of the stemming operation, these vertices are removed from $\stem_r(T)$ to obtain $\stem_{r+1}(T)$. 
 Similarly, denote $S'_r$ the set of non-root vertices in $\ds_r(T)$ having degree~1; these are the vertices removed from $\ds_r(T)$ to obtain $\ds_{r+1}(T)$. 
 Finally, denote by $L_r$ the graph $\stem_r(T) - \ds_r(T)$. By the induction hypothesis, $L_r$ is a (potentially empty) line graph. 

 If $L_r$ is empty, then $\stem_{r}(T) = \ds_{r}(T)$ and thus the difference $L_{r+1}$ between $\stem_{r+1}(T)$ and $\ds_{r+1}(T)$ is either empty (if the root has degree greater or equal than $2$) or the singleton $\{ \rm{root} \}$ composed of the root (if the root has degree 1). In any cases, $L_{r+1}$ is a line graph, and thus $\cH(r+1)$ holds. 
  
 Suppose now that $L_r$ is not empty (notice that it implies that the root of $T$ does not belong to $\stem_r(T)$). We consider $\ds_r(T)$ as a graph union of $\stem_r(T)$ and $L_r$. Let $v \in \stem_r(T) \cap L_r$, the vertex where the leaf path $L_r$ is attached to the stem. We discuss two cases depending on whether $v$ has degree $1$ in $\stem_r(T)$ and would be removed in the next iteration or not:
 \begin{enumerate}
  \item[(i)] Suppose $v \not\in S_r$. Then, we obtain $\ds_{r+1}(T)$ from $\ds_{r}(T)$ by deleting the set $S_r'$ of non-root vertices in $\ds_{r+1}(T)$ of degree $1$. Because $v \not\in S_r$ and the root does not belong to $\stem_r(T)$, we have $S'_r = S_r$. Hence, $L_{r+1} = L_r$ and thus the difference between $\ds_{r+1}(T)$ and $\stem_{r+1}(T)$ remains a line graph. 
  \item[(ii)] Suppose $v \in S_r$. Then, vertex $v$ has degree~2 in $\ds_r(T)$ and hence $v \not\in S_r'$ (or equivalently, $v \in \ds_{r+1}(T)$). Because $v \not\in \stem_{r+1}(T)$, the vertex~$v$ belongs to $L_{r+1}$. More precisely, we have $S_{r}' = S_r \setminus \{v\}$ and thus $L_{r+1}-L_r = \{v\}$. Because $v$ is attached to $L_r$, $L_{r+1}$ remains a line graph. 
 \end{enumerate}
 In any cases, $\cH(r+1)$ holds. 
\end{proof}

\subsection{Proof of Lemma~\ref{lm:st_probs_close_to_md}}
\label{appendix:st_probs_close_to_md}

We recall the statement of Lemma~\ref{lm:st_probs_close_to_md}. 

\StProbsVsMD*

The proof of this lemma consists of the careful analysis of how the true number of leaves and exterior major vertices in the stem can differ from the numbers estimated by the subtree properties  $\mathcal{P}_r^L$ and $\mathcal{P}^E_r$ on the $\ds_r(T)$ graph (see Figure~\ref{fig:over_under_counting}). We show that it is possible to miss at most one leaf and that we can either miss or over-count an exterior major vertex in two different ways. However, these over and under-counting events are mutually exclusive which means that we only over- or underestimate the relaxed metric dimension by at most one. 
Still, this is quite a tedious proof because we need to bridge the gap both between the stem and down-stem as well as the true graph structure and what we can count via the subtree properties.

\begin{proof}
Consider an arbitrary tree $T$ and a nonnegative integer $r$.

    \underline{Claim 1:} $\sigma_r(T)- N^L_r(T) \in \{0,+1\}$
    
    Recall that $\sigma_r(T)$ denotes the number of leaves in the $\stem_r(T)$ and for ease of notation assume that if the root gets removed in one of the stemming iterations, the root assignment moves to its only descendant. 
    
    Consider a vertex $v \in T$ such that $T_v \in \mathcal{P}_r^L$. Then, $v$ is a leaf in $\stem_r(T)$ as well as in  $\text{Down-}\stem_r(T)$. Hence, $\sigma_r(T) \geq N^L_r$.
    
    Now consider a leaf vertex in $\stem_r(T)$ or $\text{Down-}\stem_r(T)$ which is not the root. It must have had a subtree of depth $r$ in $T$ which gets removed through the stemming procedure. Hence, $T_v \in \mathcal{P}_r^L$. However, if the root is a leaf in $\stem_r(T)$ the subtree may not satisfy the property $\mathcal{P}_r^L$ (see Figure \ref{fig:ou_ctr_cl1}). Similarly, if the root in the $\text{Down-}\stem_r(T)$ is a leaf, the subtree may also not satisfy $\mathcal{P}_r^L$. Therefore, $\sigma_r(T) \leq N^L_r +1$.

    Next, we need to bound the error on our estimate of the exterior major vertices in the true stem $\stem_r(T)$. There are two sources of error, the first one (Claim 2) originates from the fact that we are counting occurrences of subtree properties instead of the actual number of exterior major vertices. The second error comes from using a the $\ds_r(T)$ instead of the true $\stem_r(T)$ to count these properties (Claim 3). Finally, we combine these sources of error and consider that some cases cannot occur together (Claim 4). 
    
    \underline{Claim 2:} $\ex(\text{Down-}\stem_r(T)) - N^E_r(T) \in \{-1, 0, +1\}$

    Consider a vertex $v \in V$ that satisfies $T_v \in \mathcal{P}_r^E$, if $v$ is not the root of $T$, this is a vertex with degree greater or equal to three and with a line graph to a leaf. Hence, this vertex is an exterior major vertex in $\ds_r(T)$ and $\ex(\text{Down-}\stem_r(T)) - N^E_r(T) \geq -1$.

    Next, consider $v \in \ds_r(T)$ and $v$ is an exterior major vertex, hence it has three or more neighbors and one of them is a path to a leaf. This means that $T_v \in \mathcal{P}_r^E$ unless the leaf path goes through the parent of vertex $v$ (see Figure \ref{fig:ou_ctr_cl2b}). If $v$ has a leaf path only through its parent, its subtree does not contain a leaf path and $\mathcal{P}_r^E$ is not satisfied. We will now show that only one vertex can have a leaf path through its parent. By contradiction, assume both a vertex $u$ and $v$ are exterior major vertices and have a leaf path through their parent. Then both vertices must be connected to the root via a leaf path, hence $u$ is on the leaf path to $v$ and vice versa. But both $u$ and $v$ have a degree greater than two and can therefore not be part of a leaf path (except for as the starting vertex). Hence only one vertex that is an exterior major vertex but does not satisfy $\mathcal{P}_r^E$ can exist and we summarize $\ex(\text{Down-}\stem_r(T)) - N^E_r(T) \leq +1$.

    \underline{Claim 3:} $ \ex(\stem_r(T)) - \ex(\ds_r(T)) \in \{0, -1\}$

    By Lemma~\ref{lm:ds_leaf_path}, $\ds_r(T)$ and $\stem_r(T)$ only differ by a leaf path $G_L$. 
    
    Consider an exterior major vertex $v$ in $\stem_r(T)$. This will also be an exterior major vertex in $\ds_r(T)$ except for if the leaf path $G_L$ is attached at the only leaf path of $v$ such that it is no longer a leaf path (i.e. one of the vertices on the path becomes a degree three vertex). This can only occur once in $\stem_r(T)$ because $G_L$ can only be part of one leaf path. But in this case, the vertex with degree three becomes a new exterior major vertex and the overall number of exterior major vertices is preserved. Hence $\ex(\ds_r(T)) - \ex(\stem_r(T)) \geq 0$.

    Now consider an exterior major vertex $w$ in $\ds_r(T)$, this is also an exterior major vertex in $\stem_r(T)$, except for if $w$ has degree 3 and is the vertex where the leaf path $G_L$ is attached to $\stem_r(T)$ (see for example Figure \ref{fig:ou_ctr_cl3}). This can either result in another exterior major vertex not being such in $\ds_r(T)$ as outlined in the paragraph above, or it results in an additional exterior major vertex in $\ds_r(T)$. Hence $\ex(\ds_r(T)) - \ex(\stem_r(T)) \leq 1$

    \underline{Claim 4:} $\ex(\stem_r(T))-N^E_r(T) \in \{-1, 0, +1\}$

    Based on Claim 2 and 3, this could only be violated if $\ex(\text{Down-}\stem_r(T)) - N^E_r(T) = -1$ and $ \ex(\stem_r(T)) - \ex(\ds_r(T)) = -1$ which would lead to $\ex(\stem_r(T))-N^E_r(T) = -2$. 
    As detailed in Claim 2, the first overestimate $N^E_r = \ds_r(T) +1$ can only occur if the root satisfies property $\mathcal{P}_r^E$, i.e. has exactly two children one of them a leaf path but no parent, which makes it not an exterior major vertex. Note the root having degree two implies that the $\ds_r(T) = \stem_r(T)$. Hence this precludes the second overestimate, which proves the statement.

    Finally, we combine the estimates of the number of leaves with the number of exterior major vertices. Again, we check whether we can underestimate the number of leaf vertices while overestimating the number of exterior major vertices:
    
    \underline{Claim 5:} $\sigma_r(T)- N^L_r(T) - (\ex(\stem_r(T))-N^E_r(T)) \in \{-1, 0, +1\}$
    
    Again we check whether $\sigma_r(T)- N^L_r(T) = +1$ and $\ex(\stem_r(T))-N^E_r(T) = -1$ can occur together. Note that we under estimate the number of leaves only if the root in $\stem_r(T)$ has degree one, which immediately precludes the case of overestimating the number of exterior major vertices due to the root being incorrectly identified as an exterior major vertex.

    The second overestimate of the number of exterior major vertices occurs if, as explained in Claim~3, the additional leaf path $G_L$ creates an exterior major vertex not present in the $\stem_r(T)$. The exterior major vertex $w$ is created exactly where the leaf path $G_L$ joins the $\stem_r(T)$. However in this case, the root in $\stem_r(T)$ must have been stemmed and by definition, moved to exactly where the leaf path meets the $\stem_r(T)$. Hence, $w$ is the root of the $\stem_r(T)$ and of degree two, which means no leaf is underestimated.
    
    As shown above, these are the only two cases in which the number of exterior major vertices can be overestimated. As they contradict the underestimation of a leaf, we verify Claim~5.

    Finally, note that by Theorem \ref{thm:k_metric_dim_tree}, $\md_{2r}(T) = \sigma_r(T) - \ex(\stem_r(T))$. Hence by Claim 5,
    \begin{equation}
        \md_{2r}(T) = N^L_r(T) - N^E_r(T) + \varepsilon
    \end{equation}
    where $\epsilon \in \{-1, 0, +1\}$. 
    \end{proof}

\begin{figure}[ht]
\centering
    \begin{subfigure}[b]{0.23\textwidth}
        \centering
        \includegraphics[width=\textwidth]{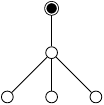}
        \caption{Black vertex is a leaf vertex but does not satisfy $\mathcal{P}^L_r$. \newline}
        \label{fig:ou_ctr_cl1}
    \end{subfigure}
    \hfill
    \begin{subfigure}[b]{0.23\textwidth}
        \centering
        \includegraphics[width=\textwidth]{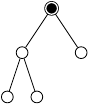}
        \caption{Black vertex satisfies $\mathcal{P}^E_r$ but is not an exterior major vertex. \newline}
        \label{fig:ou_ctr_cl2a}
    \end{subfigure}
    \hfill
    \begin{subfigure}[b]{0.23\textwidth}
        \centering
        \includegraphics[width=\textwidth]{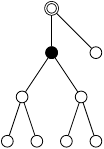}
        \caption{Black vertex is an exterior vertex but does not satisfy $\mathcal{P}^E_r$ \newline}
        \label{fig:ou_ctr_cl2b}
    \end{subfigure}
    \hfill
    \begin{subfigure}[b]{0.23\textwidth}
        \centering
        \includegraphics[width=\textwidth]{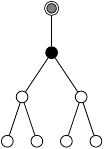}
        \caption{Black vertex is an exterior major vertex in $\ds_r(T)$, but not in $\stem_r(T)$}
        \label{fig:ou_ctr_cl3}
    \end{subfigure}
    \caption{Examples of $\ds_r(T)$ where the subtree properties $\mathcal{P}^L_r$ and $\mathcal{P}^E_r$ over- or under-count leaves or exterior major vertices. Root as a double lined vertex, the offending vertex as black, and a vertex in the down-stem but not in the stem in gray.}
    \label{fig:over_under_counting}
    \Description[Examples where the defined subtree properties over- or under-count the true number of leaves and exterior major vertices]{Examples of $\ds_r(T)$ where the subtree properties $\mathcal{P}^L_r$ and $\mathcal{P}^E_r$ over- or under-count leaves or exterior major vertices (as outlined in proof of Lemma \ref{lm:st_probs_close_to_md}). Root as a double lined vertex, the offending vertex as black, and a vertex in the down-stem but not in the stem in gray.}
\end{figure}

\subsection{Proof of Lemmas~\ref{lm:ell_k_e_k} and~\ref{lm:d_k_m_k}}
\label{appendix:proof_auxilliary_quantities_random_trees}

 In all the following, $\cF$ denotes an unconditioned Galton-Watson tree with offspring distribution $\xi$. We also let $p_i = \mathbb{P}(\xi = i)$. We recall the quantities $d_r$, $\ell_r$, $s_r$ and $e_r$ defined in~\eqref{def:expression_ell_k_e_k} and~\eqref{def:expression_d_k_s_k}. We prove Lemmas~\ref{lm:ell_k_e_k} and~\ref{lm:d_k_m_k} by establishing the expressions of the quantities $d_r$, $\ell_r$, $s_r$ and $e_r$. 

 \subsubsection{Expression of $d_r$}
 Recall that 
 \begin{align*}
   d_r & \weq \mathbb{P}\left(\mathcal{F} \text{ is of height strictly less than } r\right).
 \end{align*}
 We will establish that 
    \begin{equation} \label{eq:expression_d_r}
        d_r \weq
        \begin{cases}
            0 &\text{for } r=0, \\
            \sum_{j=0}^{\infty} p_j d^j_{r-1} &\text{for } r \geq 1.
        \end{cases}
    \end{equation}
    Assume $r=0$. Because, by definition, the height of a subtree cannot be strictly less than zero, we indeed have $d_0 = 0$. Assume now that $r \ge 1$. Using the law of total probability over the possible number of children of the root, we have 
    \begin{equation*}
        d_r = \sum_{j=0}^{\infty} p_j \ \mathbb{P}(\text{each child is the root of a subtree of height less than } r-1 \cond \text{the root has }j \text{ children}).
    \end{equation*}
    Because each child subtree is identically distributed and independent of the other subtrees, we have 
    \begin{align*}
      \mathbb{P}(\text{each child is the root of a subtree of height less than } r-1 \cond \text{the root has }j \text{ children}) \weq d_{r-1}^j,
    \end{align*}
    and therefore $d_r = \sum_{j=0}^{\infty} p_j d_{r-1}^j$. This establishes~\eqref{eq:expression_d_r}. 

 \subsubsection{Expression of $\ell_r$} Recall that 
 \begin{align*}
  \ell_r \weq \mathbb{P}\left( \cF \text{ is of height } r \right),
 \end{align*}
 and we establish that 
    \begin{equation*}\label{eq:expression_e_r}
        \ell_r \weq
        \begin{cases}
            p_0 &\text{for } r=0, \\
            \sum_{j=0}^{\infty} \left[ p_j (d_{r-1} + \ell_{r-1})^j \right] -d_r &\text{for } r \geq 1.
        \end{cases}
    \end{equation*}
    Suppose $r=0$. Observe that $\ell_0$ is the probability that $\cF$ has height 0, \textit{i.e.,} $\cF$ is simply the root. This arise only if the root has no children, and thus has $\ell_0 = p_0$. 

    Suppose now $r \ge 1$. If $\cF$ is a leaf after $r$ down-stemming operations, then it must have height exactly $r$. Applying the law of total probability over the offspring distribution, we obtain 
    \begin{equation*}
        \ell_r \weq \sum_{j=0}^{\infty} p_j \mathbb{P}(\mathcal{L}_{r-1,j}| \text{ the root has } j \text{ children}),
    \end{equation*}
    where $\cL_{r-1,j}$ is the event that each child-subtree of the root has height less than or equal to $r-1$ and one of the child-subtrees has height exactly $r-1$. The probability of having a subtree of height less than $r-1$ is $d_{r-1}$, and the probability of having a subtree of height exactly $r-1$ is $\ell_{r-1}$. Hence, 
    \begin{equation*}
      \mathbb{P}(\mathcal{L}_{r-1,j}|\text{the root has $j$ children}) \weq (d_{r-1} +\ell_{r-1})^j -d_{r-1}^j.
    \end{equation*}
    Hence, 

    \begin{align*}
     \ell_r 
     & \weq \sum_{j=0}^{\infty} p_j\left[ (d_{r-1} +\ell_{r-1})^j -d_{r-1}^j \right] \\ 
     & \weq \sum_{j=0}^{\infty} \left[p_j (d_{r-1} +\ell_{r-1})^j \right] -d_r.
    \end{align*}

 \subsubsection{Expression of $s_r$} Recall that 
   \begin{align*}
    s_r \weq \mathbb{P}\left( \ds_r(\cF) \text{ is a line graph} \right),
  \end{align*}
  and we establish that 
    \begin{equation*}
        s_r \weq \frac{\ell_r}{1- \sum_{j=1}^{\infty} p_jjd^{j-1}_r}.
    \end{equation*}
    The $\ds_r(\mathcal{F})$ is a line graph if it is a single vertex or the stemming removes all but one child which has a path underneath it. The leaf case is captured by the probability $\ell_r$ while the second probability can once more be expressed as a sum over the possible number of children. We obtain 
    \begin{equation*}
        s_r = \ell_r + \sum_{j=1}^\infty p_j \mathbb{P}\left(\mathcal{H}_{r,j}| \text{the root has $j$ children} \right)
    \end{equation*}
    where $\mathcal{H}_{r,j}$ is the event that after stemming, one of the $j$ child-subtrees is a path and all other child-subtrees have been removed. A child subtree is removed by $r$ times down-stemming if the subtree is of depth less than $r$. The probability of this event we denoted above by $d_r$. Hence, with the choice of one of the $j$ subtrees to be the path we write:
    \begin{equation} \label{eq:prop_line_and_no_sibs}
        \mathbb{P}\left(\mathcal{H}_{r,j}| \text{the root has $j$ children} \right) = j s_rd^{j-1}_r.
    \end{equation}
    Hence,
    \begin{equation*}
        s_r = \ell_r + s_r\sum^\infty_{j=1}p_j jd_r^{j-1} =\frac{\ell_r}{1- \sum^\infty_{j=1}p_j jd_r^{j-1}}.
    \end{equation*}

 \subsubsection{Expression of $e_r$} Finally, recall that \begin{align*}
  e_r \weq \mathbb{P}\left( \ds_r(\mathcal{F}) \text{ has at least two children and one child subtree is a leaf path} \right).
\end{align*}
We establish that 
    \begin{equation*}
        e_r = 1 - \sum_{j=0}^\infty p_j (1-s_r)^j - s_r +\ell_r.
    \end{equation*}
    We express the probability $e_r = \mathbb{P}( \mathcal{F} \in \mathcal{P}_r^E) = 1- \mathbb{P}(\mathcal{F} \notin \mathcal{P}_r^E)$. Recall that the property $\mathcal{P}^E_r$ requires the root of the $\ds_r(\mathcal{F})$ to have at least two children and one of the child-subtrees has to be a line graph. Hence, the negation of this is for the down-stem to have no line graph sub-tree or to have a line graph subtree but no other subtree (event $\mathcal{H}_{r,j}$). This occurs either if the original $\mathcal{F}$ has less than two children (probability $p_0 +p_1$) or the root of $\mathcal{F}$ has more than two children but when they get stemmed they loose at least one of the two properties above. 
    
    Conditioned on the root of $\mathcal{F}$ having $j\geq 2$ children, we write the probability of having no line graph subtrees in the children as:
    \begin{equation*}
        \mathbb{P}(\text{no child subtree of } \ds_r(\mathcal{F})\text{ is a line graph}| \text{ the root has $j$ children}) \weq (1-s_r)^j.
    \end{equation*}
    The second condition is disjoint from the first and can be written as in Equation \eqref{eq:prop_line_and_no_sibs}.
    Hence, we obtain:
    \begin{align*}
        e_r &=  1- \mathbb{P}(\mathcal{F} \notin \mathcal{P}_r^E) \\
        &= 1- p_0 -p_1 - \sum_{j=2}^\infty p_j \left((1-s_r)^j +s_rjd^{j-1}_r\right) \\
        &= 1- p_0 -p_1 - \sum_{j=0}^\infty p_j (1-s_r)^j +p_0 +p_1(1-s_r) -\sum_{j=1}^\infty p_js_rjd^{j-1}_r + p_1s_r \\
        &= 1- \sum_{j=0}^\infty p_j (1-s_r)^j - (s_r -\ell_r).
    \end{align*}

\section{\texorpdfstring{Computing the $k$-relaxed metric dimension of a graph}{Computing the k-relaxed metric dimension of a graph}}
\label{appendix:algo}
In this section, we present an algorithm that can approximate the $k$-relaxed metric dimension of a graph with $n$ vertices in polynomial time within a factor of $\cO(\log n)$. First, we show that there is an approximation preserving reduction from the problem of finding $\md_k(G)$ to the set cover problem. This reduction draws significant inspiration from the one depicted in~\cite{khuller1996landmarks}, which addresses the $\md_0(G)$ problem. Building upon this reduction, we can then use the $\mathcal{O}(\log{}n)$ factor approximation algorithm for the set cover problem~\cite{set_cover} to obtain an approximation algorithm for the $k$-relaxed metric dimension problem. 

\begin{theorem}
\label{thm:validity_algo}
Given an arbitrary graph $G = (V, E)$ with $n$ vertices, then $\md_k(G)$ can be approximated within a factor of $\mathcal{O}(\log{}n)$ in $\mathcal{O}(n^3)$ time-complexity.
\end{theorem}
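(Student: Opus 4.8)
The plan is to give an approximation-preserving reduction from the $k$-relaxed metric dimension problem to the \textsc{Set Cover} problem, in the spirit of the reduction of~\cite{khuller1996landmarks} for the ordinary metric dimension, and then to invoke the classical greedy $\mathcal{O}(\log n)$-factor approximation for \textsc{Set Cover}~\cite{set_cover}. The conceptual content is light; the work is in setting up the reduction so that it is exactly objective-preserving, and in the running-time bookkeeping.

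First I would build the \textsc{Set Cover} instance. Run a breadth-first search from every vertex of $G$ to compute the full distance matrix $\bigl(d(u,v)\bigr)_{u,v\in V}$; since $G$ is connected this costs $\mathcal{O}\bigl(n(n+|E|)\bigr)=\mathcal{O}(n^3)$. Take the ground set to be the collection of ``far'' pairs
\[
  U_k \weq \bigl\{\{u,v\}\colon u\neq v,\ d(u,v)>k\bigr\},
\]
so that $|U_k|\le\binom{n}{2}$, and for each vertex $w\in V$ define
\[
  \mathcal{S}_w \weq \bigl\{\{u,v\}\in U_k\colon d(w,u)\neq d(w,v)\bigr\},
\]
the set of far pairs that a sensor at $w$ distinguishes. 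Writing down the system $\{\mathcal{S}_w\}_{w\in V}$ costs $\mathcal{O}(n\,|U_k|)=\mathcal{O}(n^3)$. The correctness of the reduction is the key verification: unrolling the definition, a set $S\subseteq V$ fails to be a $k$-relaxed resolving set precisely when some $\{u,v\}$ with $d(u,v)>k$ has $\Phi_G(u,S)=\Phi_G(v,S)$, i.e. $d(w,u)=d(w,v)$ for every $w\in S$, i.e. $\{u,v\}\in U_k$ is covered by no $\mathcal{S}_w$ with $w\in S$. Hence $S$ is a $k$-relaxed resolving set iff $\{\mathcal{S}_w\colon w\in S\}$ covers $U_k$, so $\md_k(G)$ equals the optimum of this \textsc{Set Cover} instance and feasible solutions correspond with equal cost in both directions; approximation ratios therefore transfer verbatim. (In the degenerate case $k\ge\diam(G)$ one has $U_k=\emptyset$, $\md_k(G)=0$, and the empty cover is optimal.)

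Next I would run the greedy \textsc{Set Cover} heuristic (this is the procedure stated as Algorithm~\ref{algo}): repeatedly add the vertex whose $\mathcal{S}_w$ covers the most still-uncovered pairs. By the classical analysis its output has size at most $H_{|U_k|}\le \ln|U_k|+1\le 2\ln n+1=\mathcal{O}(\log n)$ times the optimum, which by the reduction is $\mathcal{O}(\log n)\cdot\md_k(G)$, and it is itself a $k$-relaxed resolving set. For the running time, the total input size of the set system is $\sum_{w\in V}|\mathcal{S}_w|=\mathcal{O}(n\,|U_k|)=\mathcal{O}(n^3)$, and greedy can be implemented in time linear in this input size using an appropriate bucket queue over the current covering counts; together with the $\mathcal{O}(n^3)$ preprocessing this gives total complexity $\mathcal{O}(n^3)$.

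I expect the main obstacles to be bookkeeping rather than depth: (i) pinning down the ``iff'' of the reduction directly from the definition of a $k$-relaxed resolving set, taking care of the degenerate cases ($k\ge\diam(G)$, and $G$ a path), and (ii) justifying that greedy \textsc{Set Cover} stays within $\mathcal{O}(n^3)$ — a naive implementation that rescans all sets and recomputes intersection sizes in each of up to $n$ rounds costs $\mathcal{O}(n^4)$, so one must use the fact that the input has size $\mathcal{O}(n^3)$ and that greedy runs in near-linear time in that size.
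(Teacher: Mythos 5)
Your proposal is correct and follows essentially the same route as the paper: an approximation-preserving reduction to \textsc{Set Cover} with universe the far pairs $\{u,v\}$, $d(u,v)>k$, and one set per vertex collecting the pairs it distinguishes, followed by the greedy $\mathcal{O}(\log n)$ set-cover approximation. Your added care about implementing greedy in time linear in the $\mathcal{O}(n^3)$ input size (rather than naively rescanning) is a detail the paper glosses over, but the argument is the same.
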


\begin{proof}
We construct an instance of the set cover problem from $G$. The intuition is that every pair of distinct vertices separated by a distance greater than $k$ must be distinguished by a sensor. We can easily compute all the pairs of vertices that are distinguished by placing a sensor on a given vertex. The $k$-relaxed metric dimension problem is that of finding a set of vertices of minimum cardinality such that every pair of vertices (at a distance greater than $k$) is distinguished by some vertex in this set. The elements of the universe (in the set cover problem) correspond to pairs of vertices $\{u, v\}$ of $G$ such that $u \neq v$ and $d(u,v) > k$. For each vertex $v \in V$, we place the set of all pairs of vertices which are distinguished by placing a sensor at $v$ into a single subset $S_v$. Therefore there are a maximum of $\binom{n}{2}$ elements and $n$ subsets in the set cover problem. Moreover, there is a set cover of size $m$ if and only if there exists a minimal cardinality $k$-relaxed resolving set of size $m$ in $G$. Finding a set cover within a factor of $\mathcal{O}(\log{}n)$ therefore yields the same approximation for the $k$-relaxed metric dimension problem. 
\end{proof} 

Hence, the following simple greedy heuristic is able to approximate the $k$-relaxed metric dimension of a graph with $n$ vertices in polynomial time within a factor of $\mathcal{O}(\log{}n)$.

\RestyleAlgo{ruled}

\SetKwComment{Comment}{/* }{ */}

\begin{algorithm}[hbt!]
\caption{An algorithm to estimate the $k$-relaxed metric dimension of a graph}\label{algo}
\KwData{$G = (V,E)$ and $k$ an integer}
\KwResult{$C$, a $k$-relaxed resolving set for $G$}
\tcp{$U$ is the universe of all pairs that should be distinguished}\label{cmt}
$U \gets \emptyset$ \\
\For{$\{u,v\}$ in $G.V \times G.V$}{
\If{$d(u, v) > k$}{
$U \gets U \cup \{u,v\}$
}
}
\tcp{$S_i$ is the set of pairs distinguishable by vertex $i$}

\For{$i$ in $G.V$}{
$S_i \gets \emptyset$ \\
\For{$\{u,v\}$ in $G.V \times G.V$}{
\If{$d(i, u) \neq d(i,v)$}{
$S_i \gets S_i \cup \{u,v\}$
}
}
}
$C \gets \emptyset$ \\
\While{$U \neq \emptyset$}{
select $S_i$ that maximizes $|S_i \cap U|$ \\
\Indp
$U \gets U - S_i$ \\
$C \gets C \cup \{i\}$
}
\Return $C$
\end{algorithm}

\section{Additional Numerical Experiments}
\label{appendix:additional_numerical_results}

\subsection{Additional Numerical Results on the Equivalent Classes}

\subsubsection{Number of non-resolved vertices and largest equivalent class}
 Finally, we finish our exposition of synthetic graphs by comparing the size $\alpha$ of the largest equivalent class with the number of non-resolved vertices, as a function of the relaxation parameter $k$. Results for different graphs are shown in Figure~\ref{fig:non_resolved_vs_equiv_class}. We observe that, while the number of non-resolved vertices increases very fast (and then saturates when almost every vertices is non-resolved), the size of the largest equivalent class always remains small and only increases dramatically when $k$ becomes close to the graph's diameter. 

 \begin{figure}[!ht]
    \centering
    \begin{subfigure}{0.32\textwidth}
        \includegraphics[width=\linewidth]{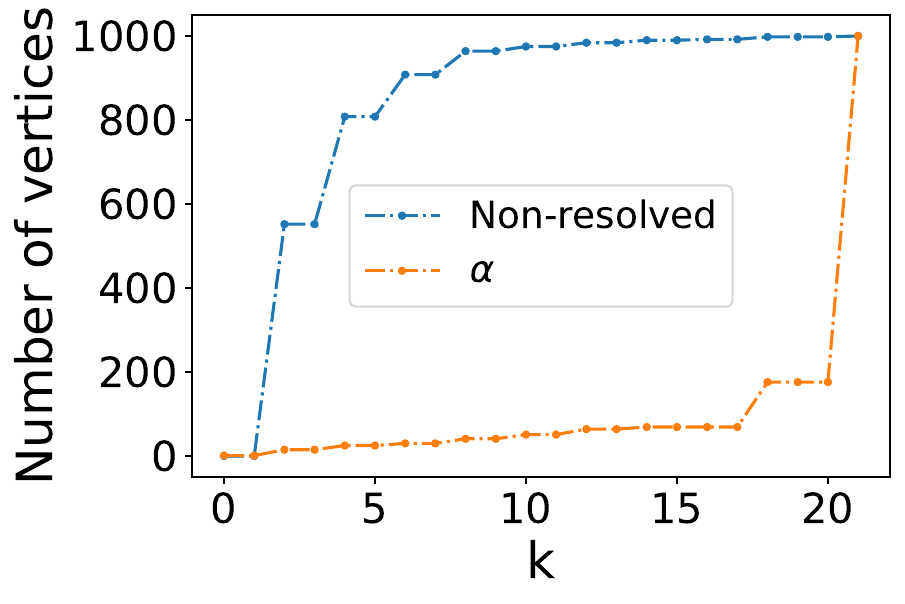}
        \caption{\Barabasi-Albert}
    \end{subfigure}
    \begin{subfigure}{0.32\textwidth}
        \includegraphics[width=\linewidth]{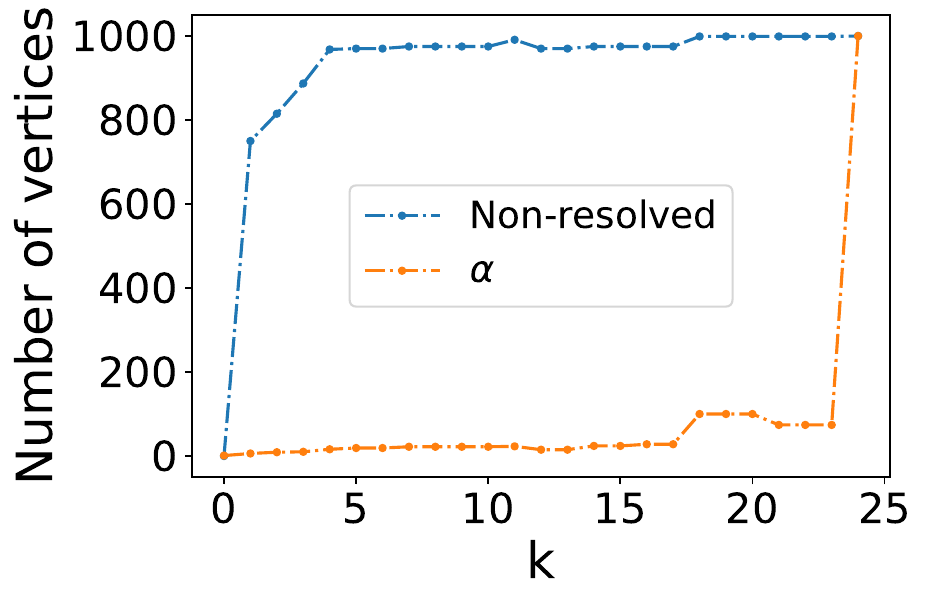}
        \caption{random geometric graph}
    \end{subfigure}
    \begin{subfigure}{0.32\textwidth}
        \includegraphics[width=\linewidth]{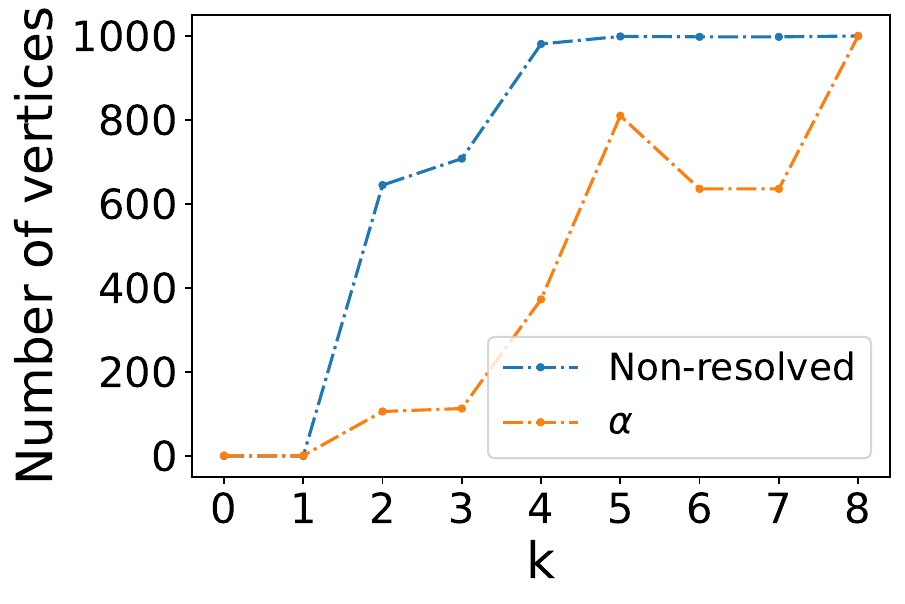}
        \caption{configuration model}
    \end{subfigure}
    \caption{Size of the largest equivalent class and number of non-resolved vertices as a function of the relaxation parameter, for one instance of various random graph models. For each graph, we vary $k$ from $0$ to the graph's diameter. Note that the functions may non-decreasing with $k$ as the resolving sets are obtained by Algorithm~\ref{algo} and hence may be sub-optimal.}
    \label{fig:non_resolved_vs_equiv_class}
\end{figure}

\subsubsection{Distribution of the size of the non-resolved equivalent classes}

We plot in Figures~\ref{fig:sizes_non_resolved_classes_BA} and~ \ref{fig:sizes_non_resolved_classes_RGG} the cardinality of the non-resolved equivalent classes. In most cases, the non-resolved classes have a small cardinality.

\begin{figure}[!ht]
 \centering
 \begin{subfigure}{0.32\textwidth}
  \includegraphics[width=\linewidth]{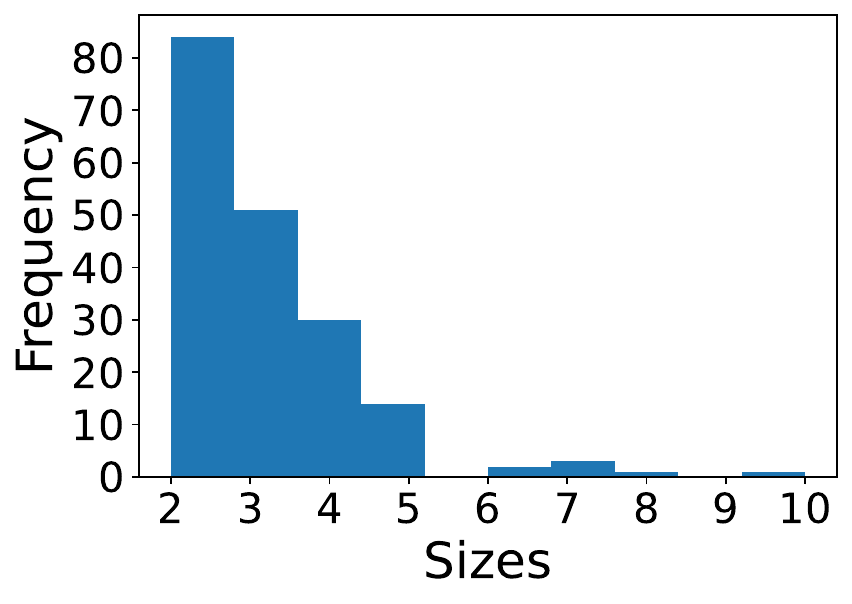}
  \caption{$k=2$}
 \end{subfigure}
 \begin{subfigure}{0.32\textwidth}
  \includegraphics[width=\linewidth]{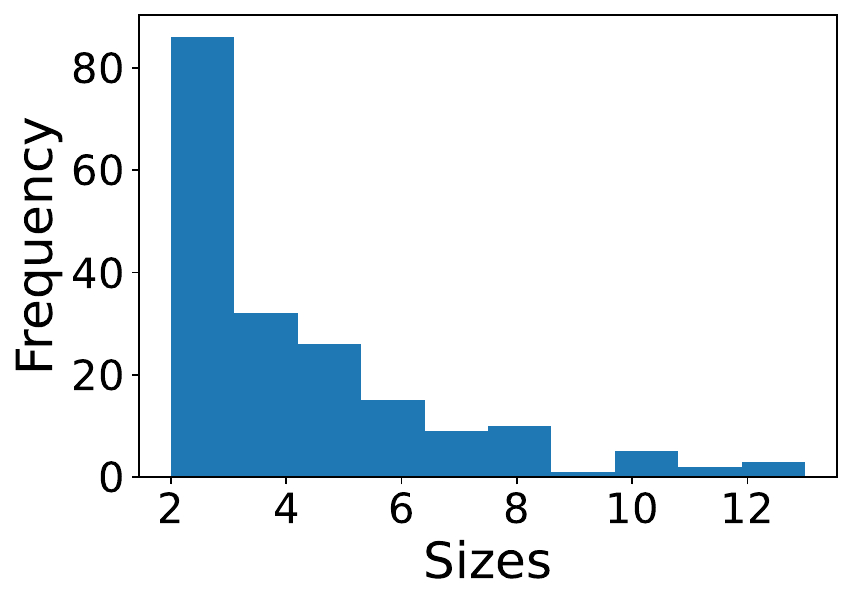}
  \caption{$k=4$}
 \end{subfigure}
 \begin{subfigure}{0.32\textwidth}
  \includegraphics[width=\linewidth]{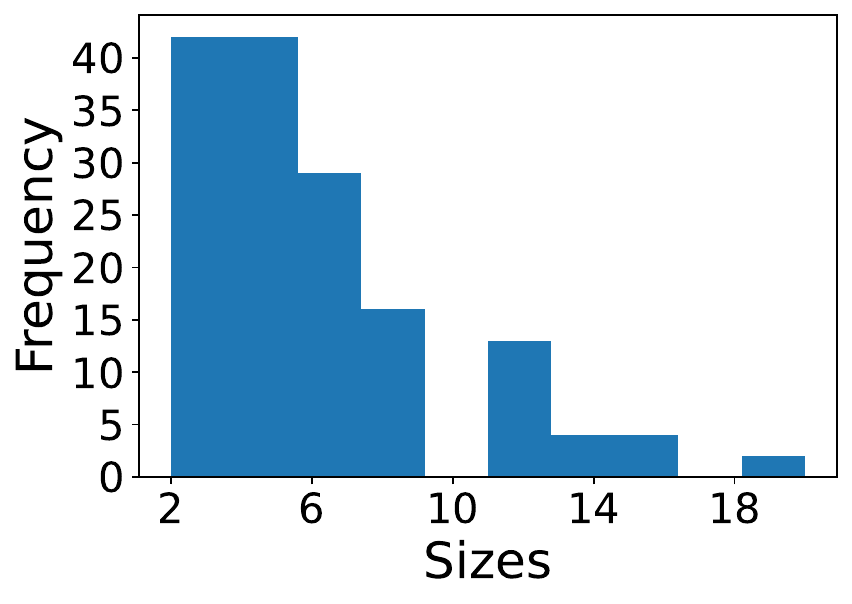}
  \caption{$k=6$ }
 \end{subfigure}
 \caption{Histogram of the sizes of the non-resolved equivalent classes on a \Barabasi-Albert random tree with $n=1000$ vertices for various values of the relaxation parameter $k$. Note that the histogram does not show the resolved vertices (who belong to equivalent classes of size 1).}
 \label{fig:sizes_non_resolved_classes_BA}
\end{figure}

\begin{figure}[!ht]
 \centering
 \begin{subfigure}{0.32\textwidth}
  \includegraphics[width=\linewidth]{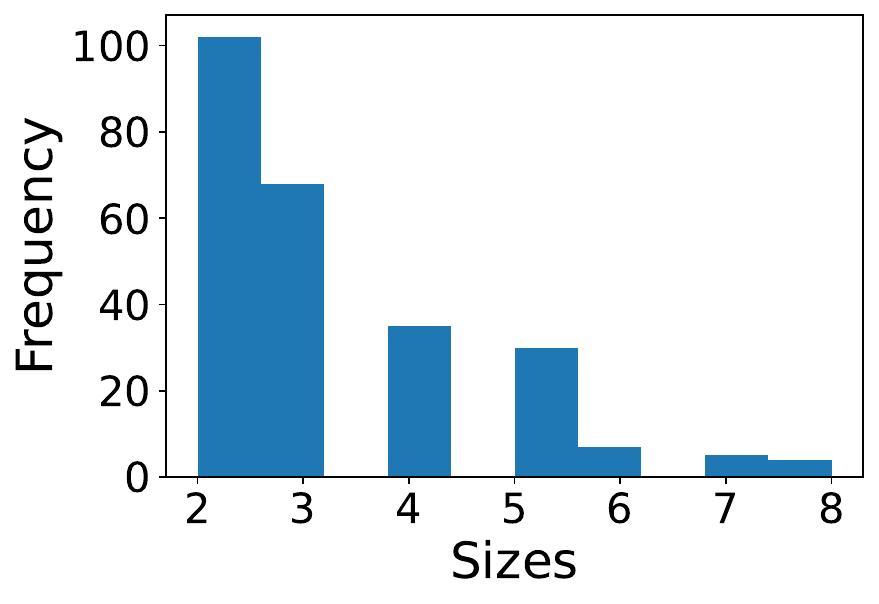}
  \caption{$k=2$}
 \end{subfigure}
 \begin{subfigure}{0.32\textwidth}
  \includegraphics[width=\linewidth]{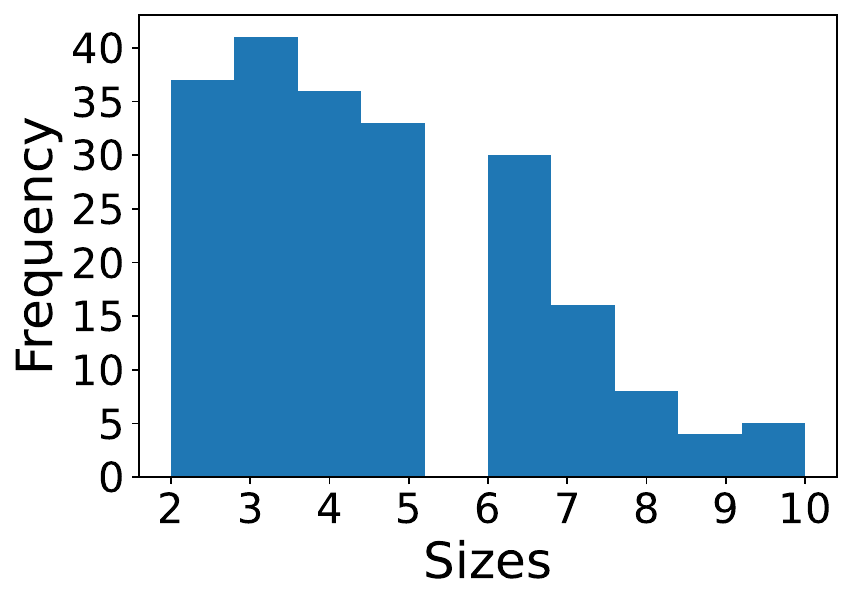}
  \caption{$k=4$}
 \end{subfigure}
 \begin{subfigure}{0.32\textwidth}
  \includegraphics[width=\linewidth]{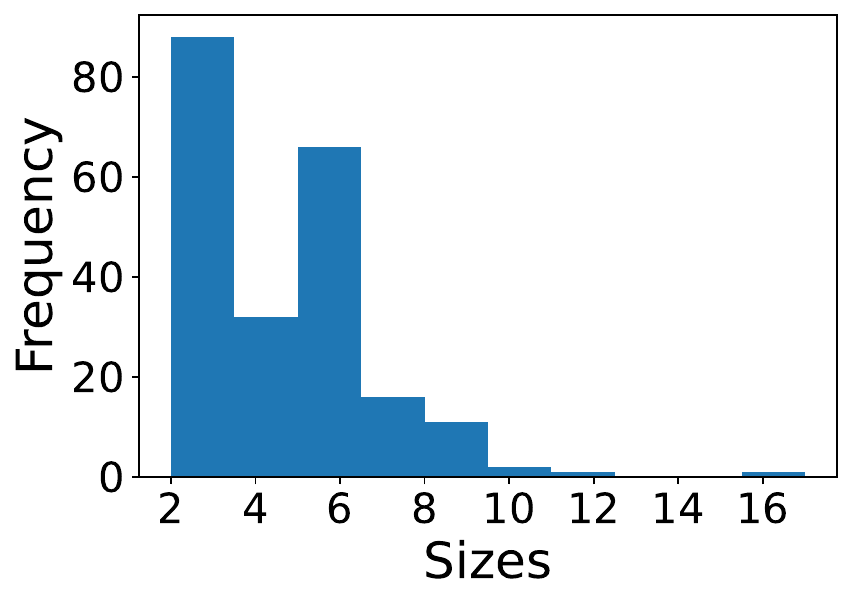}
  \caption{$k=6$ }
 \end{subfigure}
 \caption{Histogram of the sizes of the non-resolved equivalent classes on a random geometric graph with $n=1000$ vertices for various values of the relaxation parameter $k$. Note that the histogram does not show the resolved vertices (who belong to equivalent classes of size 1).}
 \label{fig:sizes_non_resolved_classes_RGG}
\end{figure}

\subsection{Metric Dimension for other Random Trees and Random Graphs}

\subsubsection{Galton-Watson trees}

In this section, we show that the observations made for \Barabasi-Albert random trees also apply to Galton-Watson trees. Figure~\ref{fig:evolution_md_GW} shows the evolution of the relaxed metric dimension and the number of non-resolved vertices, and Figure~\ref{fig:visualization_GW} shows a detail example on a Galton-Watson tree with $n=100$ vertices. 

\begin{figure}[!ht]
 \centering
 \begin{subfigure}{0.32\textwidth}
  \includegraphics[width=\linewidth]{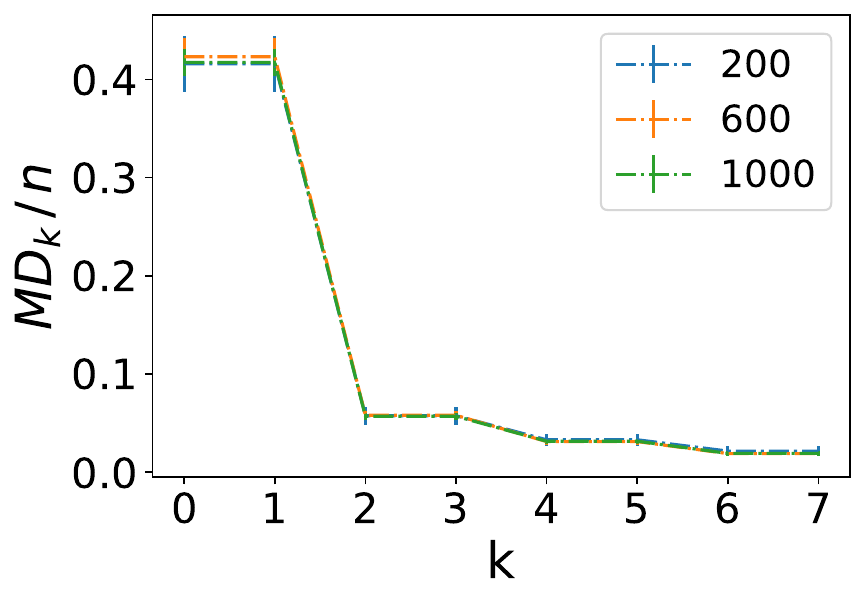}
  \caption{$\md_k / n$}
 \end{subfigure}
 \begin{subfigure}{0.32\textwidth}
  \includegraphics[width=\linewidth]{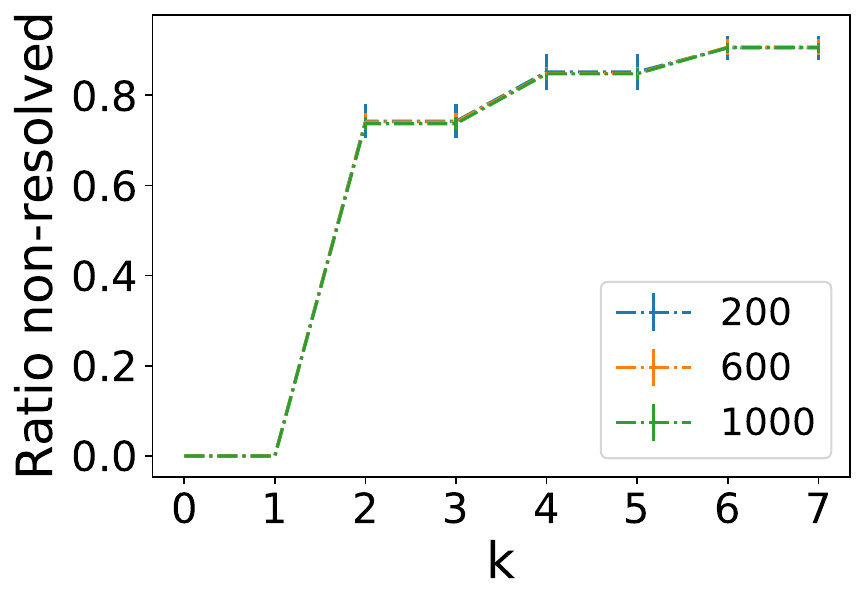}
  \caption{Ratio of non-resolved vertices}
 \end{subfigure}
 \begin{subfigure}{0.32\textwidth}
   \includegraphics[width=\linewidth]{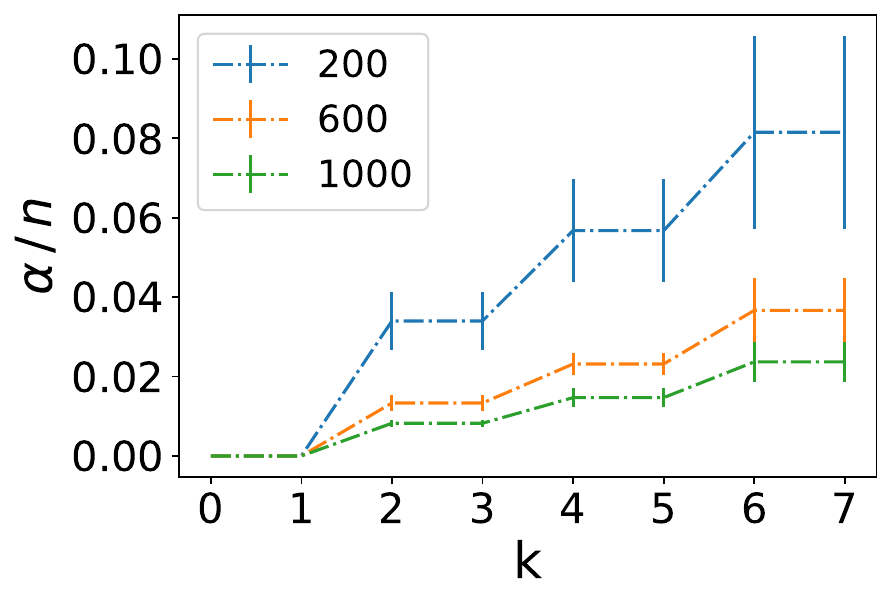}
  \caption{ $\alpha / n$ }
 \end{subfigure}
 \caption{Galton-Watson random trees conditioned on having $n$ vertices and with a Poisson offspring distribution of mean 3. Results are averaged over $20$ realizations, and error bars show the standard deviation.}
 \label{fig:evolution_md_GW}
\end{figure}

\begin{figure}[!ht]
    \centering
    \begin{subfigure}{0.32\textwidth}
        \includegraphics[width=\linewidth]{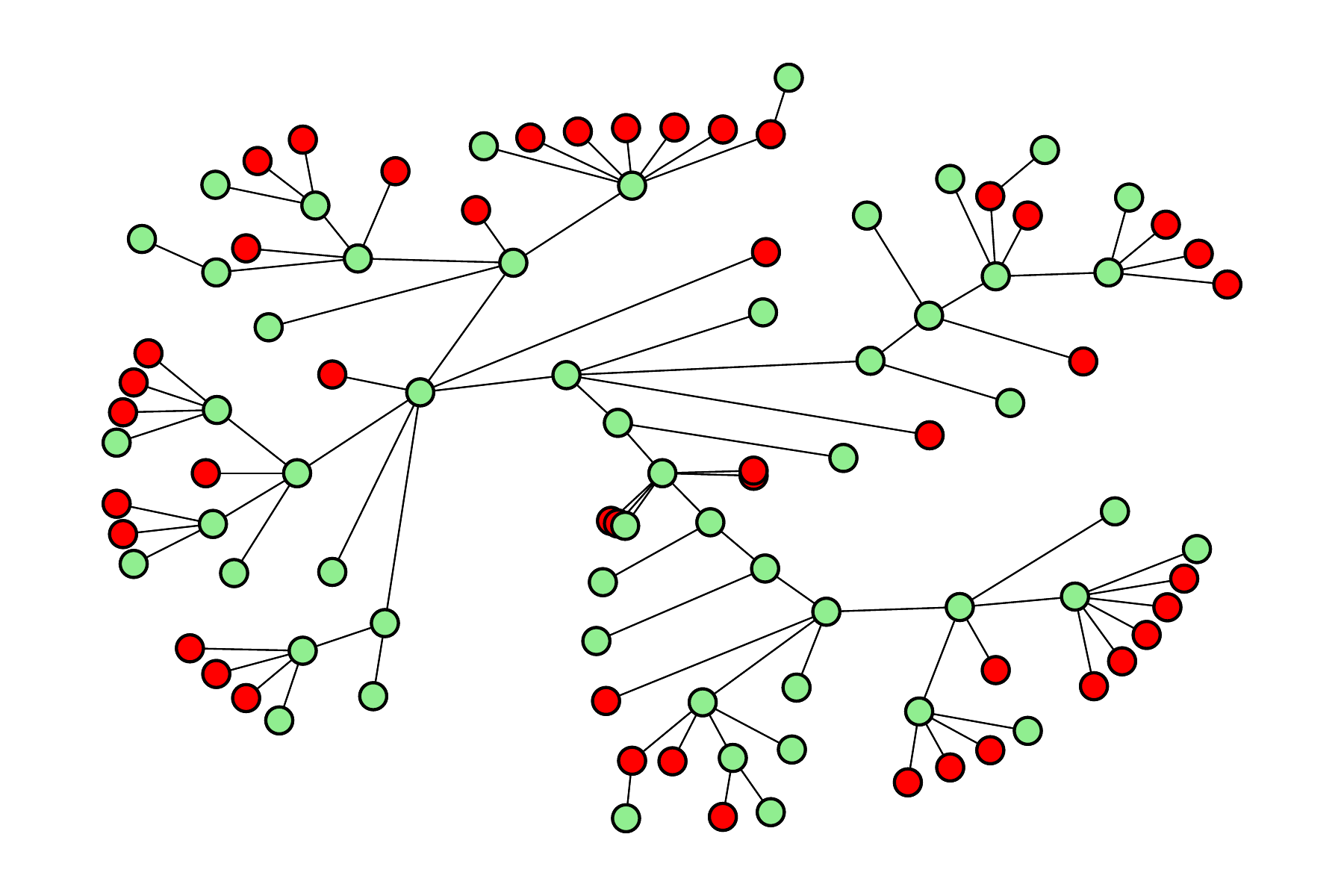}
        \caption{0-relaxed resolving set}
    \end{subfigure}
    \begin{subfigure}{0.32\textwidth}
        \includegraphics[width=\linewidth]{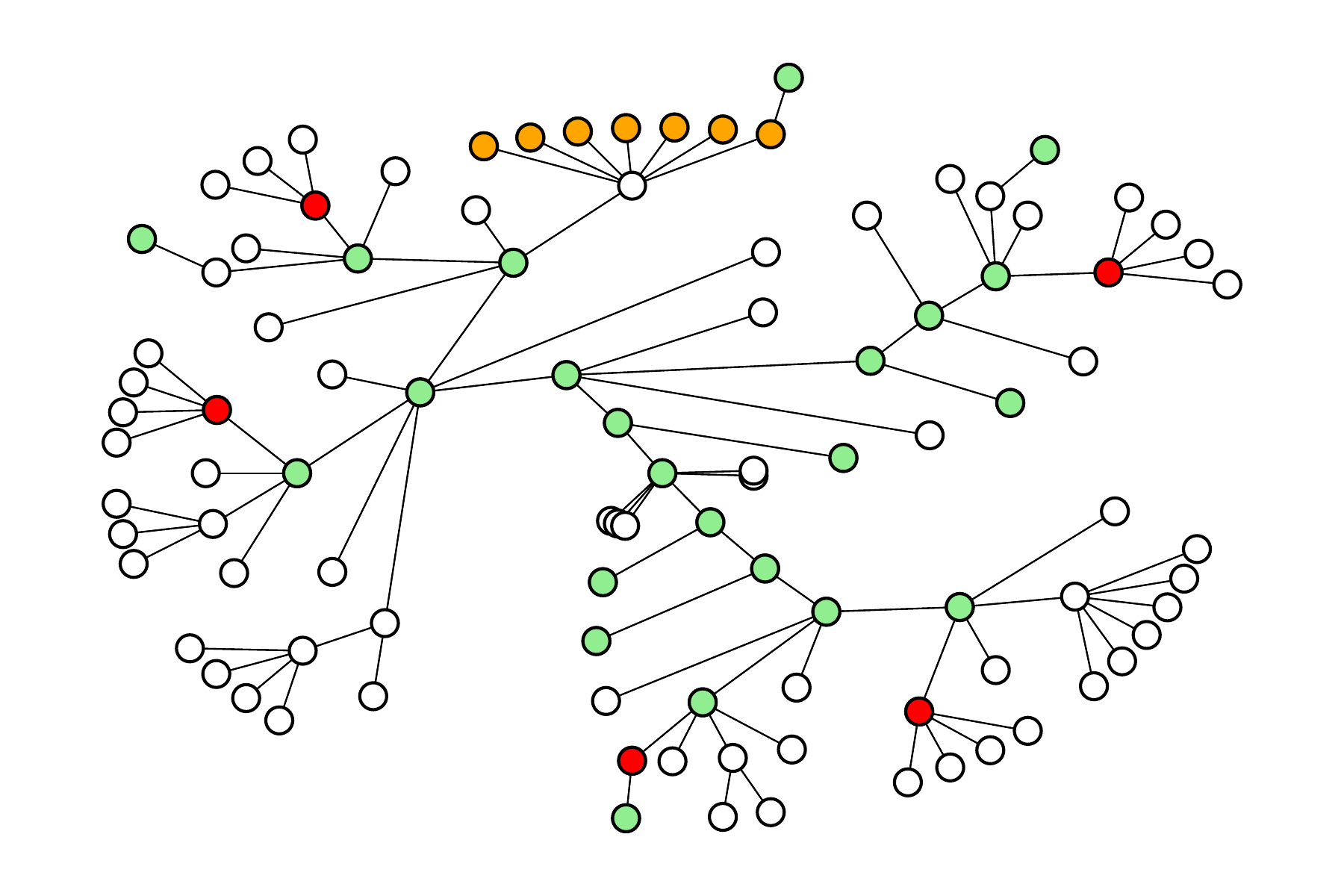}
        \caption{2-relaxed resolving set}
    \end{subfigure}
    \begin{subfigure}{0.32\textwidth}
        \includegraphics[width=\linewidth]{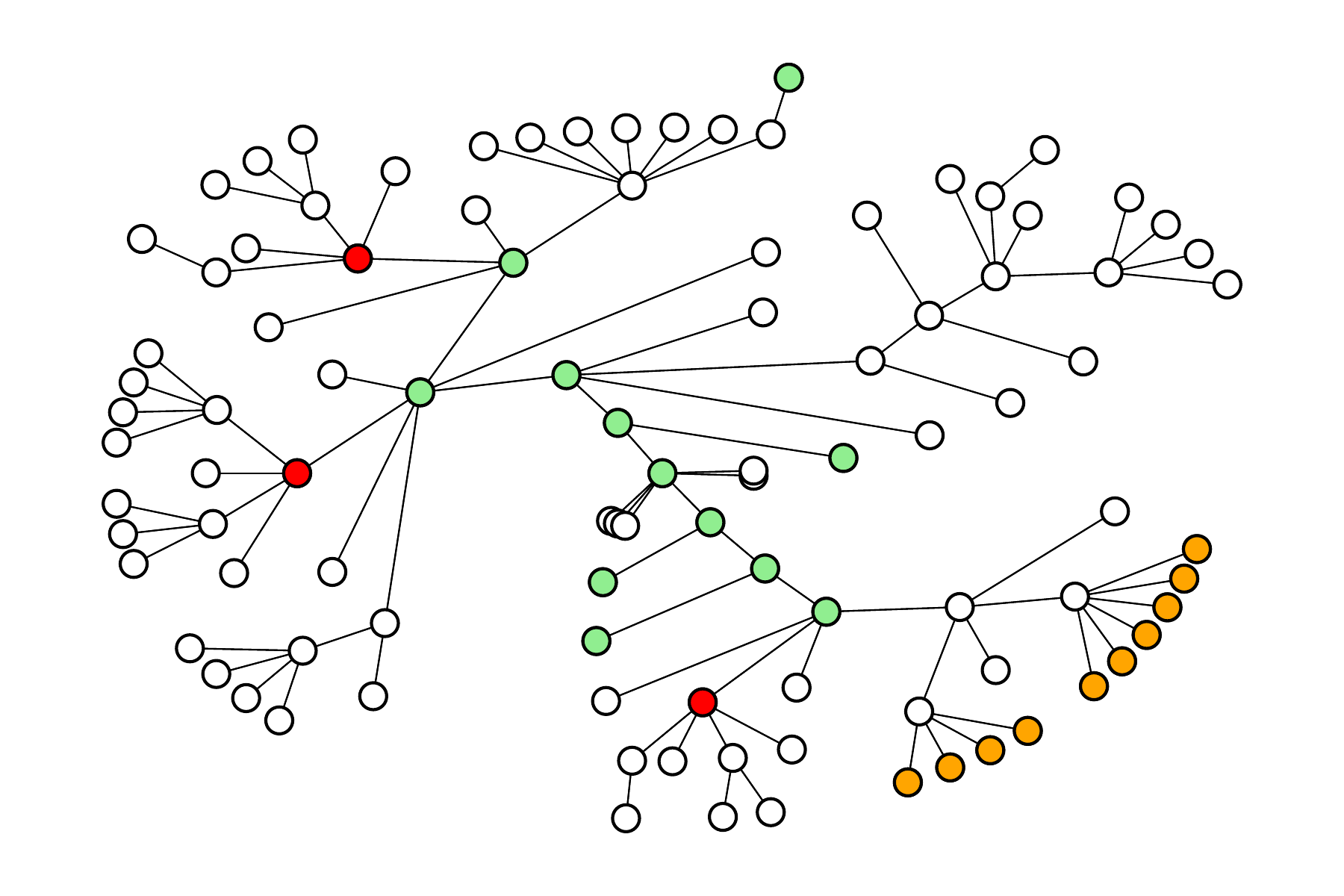}
        \caption{4-relaxed resolving set}
    \end{subfigure}
    \caption{Resolving sets obtained by Algorithm~\ref{algo} on a Galton-Watson random tree conditioned to have $100$ vertices. {\color{red}Red}: vertices belonging to the relaxed resolving set found by Algorithm~\ref{algo}. {\color{green}Green}: vertices with a unique identification vector. {\color{orange}Orange}: vertices belonging to the largest equivalent class of non-resolved vertices.}
    \label{fig:visualization_GW}
\end{figure}

\subsubsection{Configuration model}
We plot in Figure~\ref{fig:evolution_md_CM} the evolution of the three metrics as a function of the relaxation parameter $k$ for the configuration model. We observe that the metric dimension is initially small, and thus the relaxation has a smaller effect, albeit relaxing from $k=0$ to $k=2$ still diminishing by a factor 2 the number of sensors needed. Moreover, the number of non-resolved vertices increases less dramatically than in random trees or random geometric graph. Finally, the quantity $\alpha$ always remains small. 

\begin{figure}[!ht]
 \centering
 \begin{subfigure}{0.32\textwidth}
  \includegraphics[width=\linewidth]{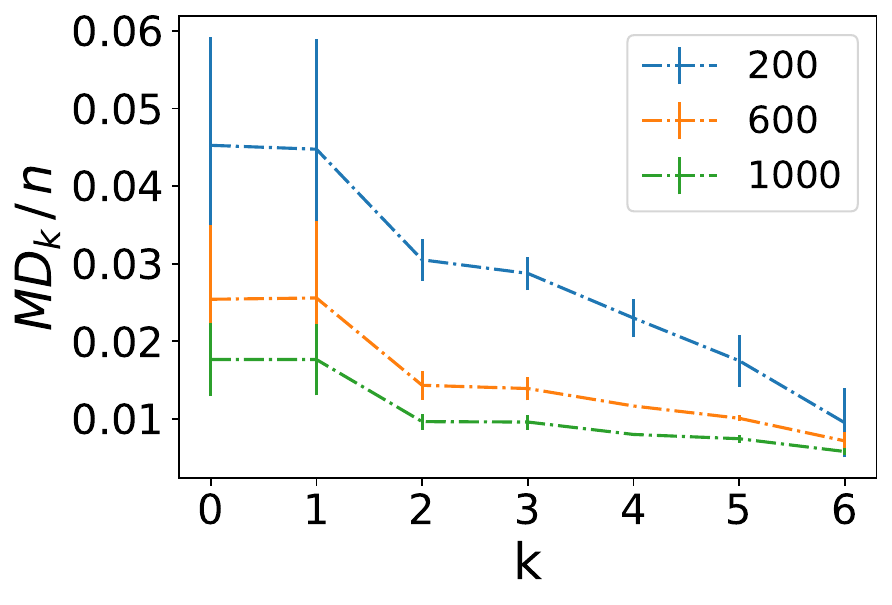}
  \caption{$\md_k / n $}
 \end{subfigure}
 \begin{subfigure}{0.32\textwidth}
  \includegraphics[width=\linewidth]{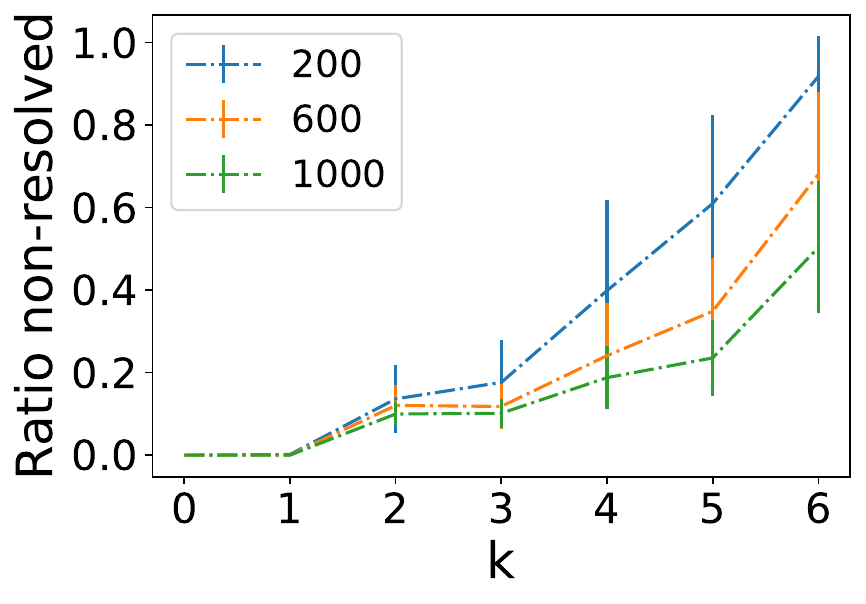}
  \caption{Ratio of non-resolved vertices}
 \end{subfigure}
 \begin{subfigure}{0.32\textwidth}
  \includegraphics[width=\linewidth]{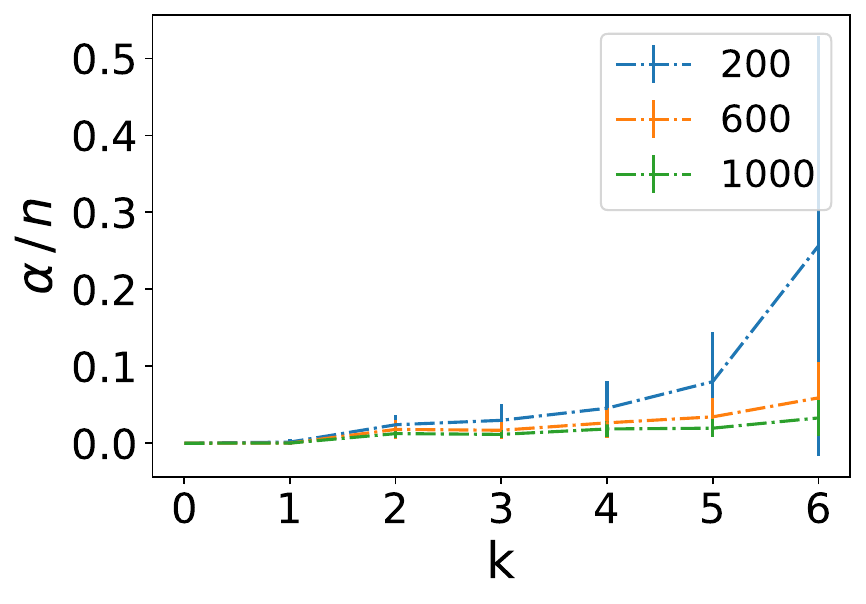}
  \caption{ $\alpha / n $ }
 \end{subfigure}
 \caption{Effect of the relaxation of the metric dimension on graphs sampled from the configuration model. Results are averaged over $20$ realizations, and error bars show the standard deviation.}
 \label{fig:evolution_md_CM}
\end{figure}

\subsection{Visualization of Relaxed Resolving Sets of Real Graphs}
Figure~\ref{fig:visualization_copencalls} shows the relaxed resolving set obtained by Algorithm~\ref{algo} on the \textit{copenhagen-calls} dataset. 

\begin{figure}[!ht]
    \centering
    \begin{subfigure}{0.32\textwidth}
        \includegraphics[width=\linewidth]{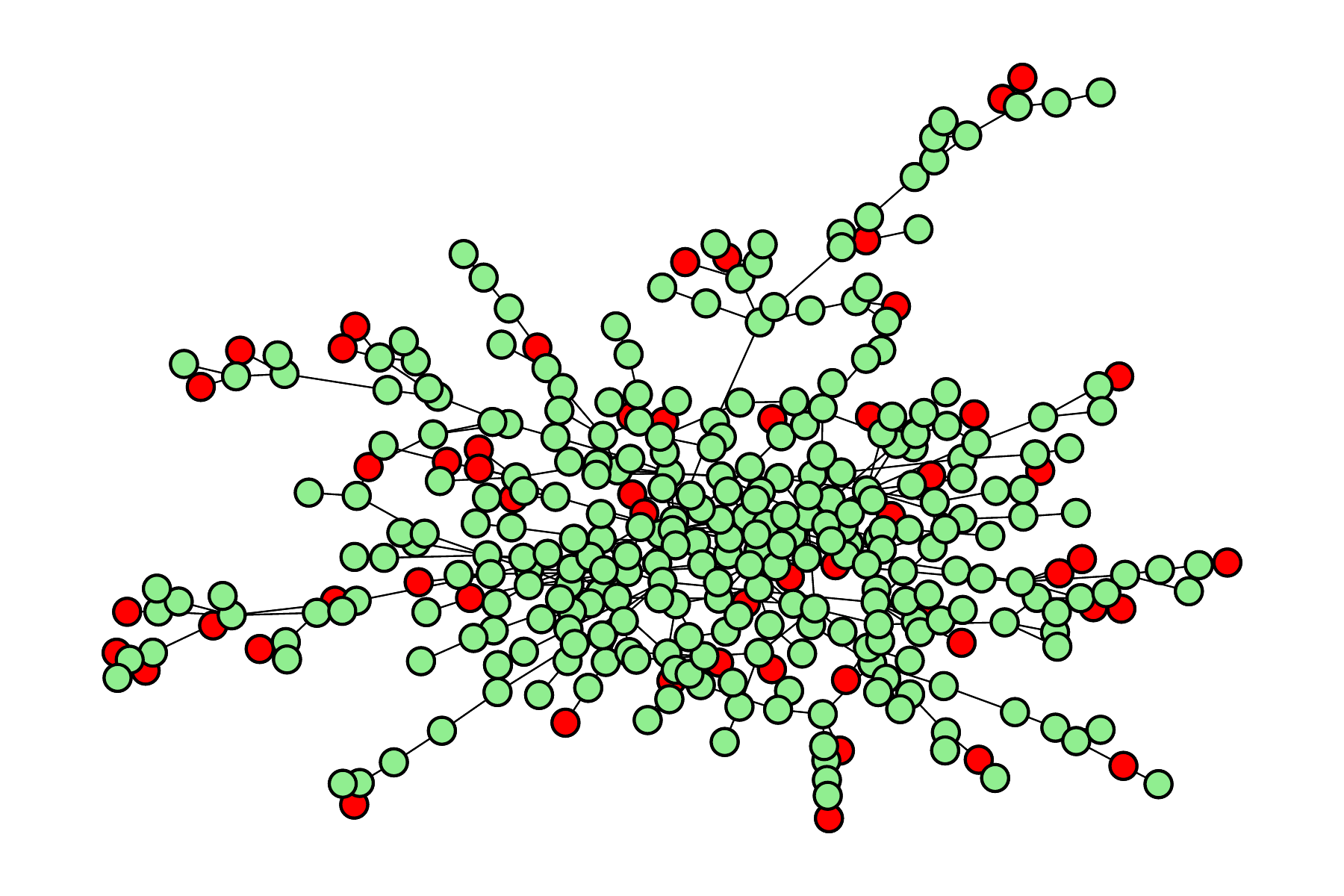}
        \caption{0-relaxed resolving set}
    \end{subfigure}
    \begin{subfigure}{0.32\textwidth}
        \includegraphics[width=\linewidth]{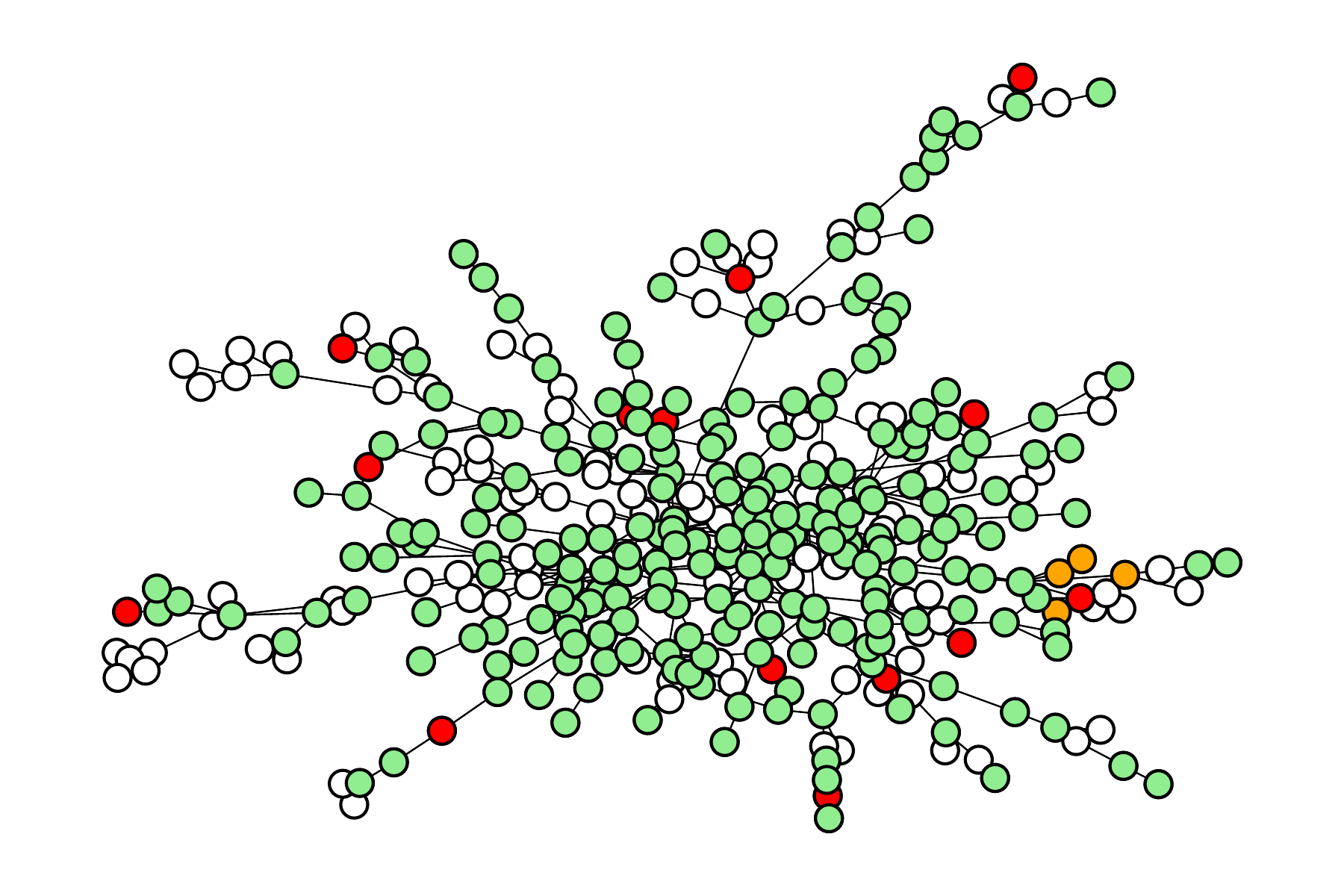}
        \caption{2-relaxed resolving set}
    \end{subfigure}
    \begin{subfigure}{0.32\textwidth}
        \includegraphics[width=\linewidth]{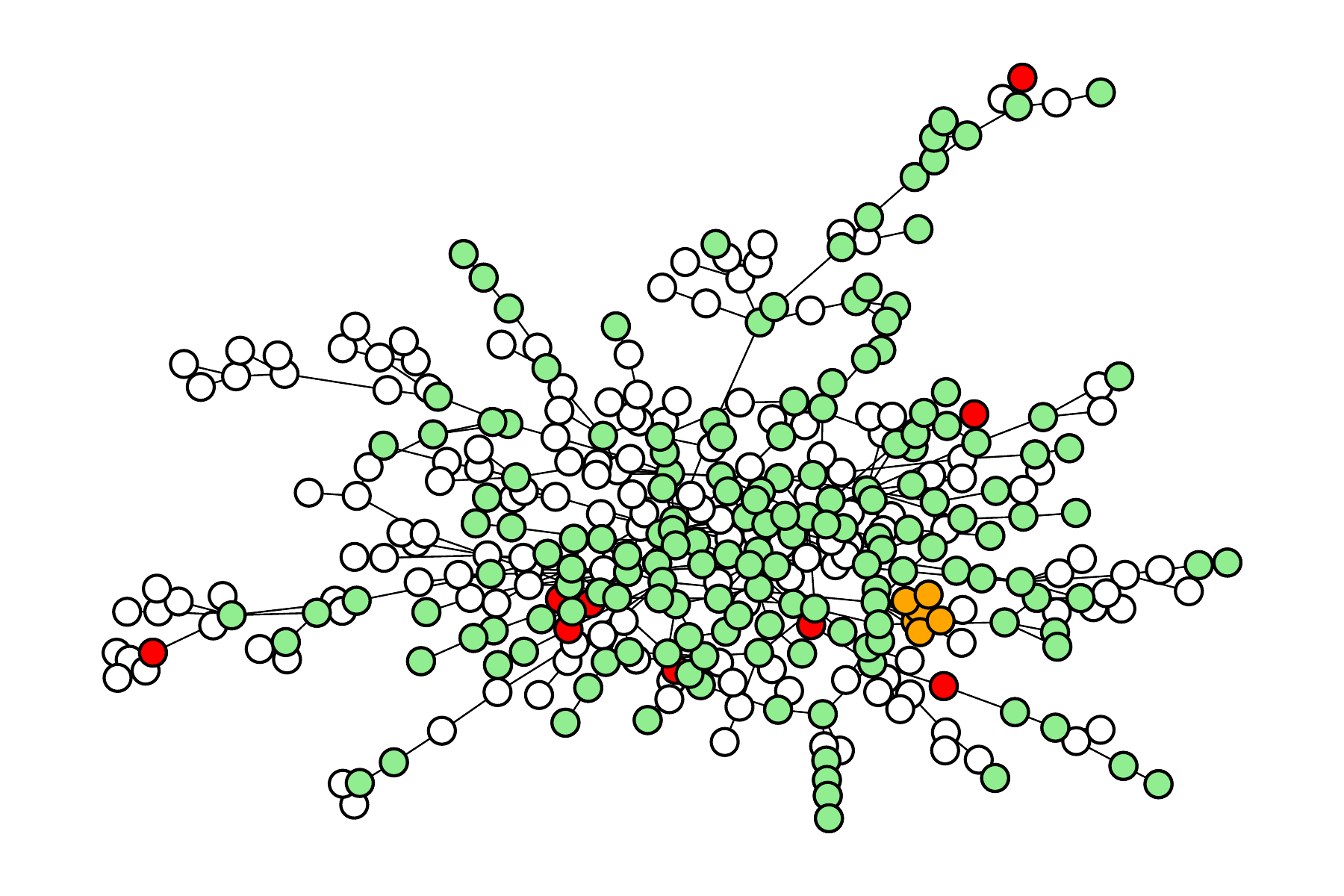}
        \caption{4-relaxed resolving set}
    \end{subfigure}
    \caption{Resolving sets obtained by Algorithm~\ref{algo} on \textit{copenhagen-calls} graph. {\color{red}Red}: vertices belonging to the relaxed resolving set found by Algorithm~\ref{algo}. {\color{green}Green}: vertices with a unique identification vector. {\color{orange}Orange}: vertices belonging to the largest equivalent class of non-resolved vertices.}
    \label{fig:visualization_copencalls}
\end{figure}

\end{document}